\DocumentMetadata{
	lang		= en-US,
	pdfversion  = 1.7,
	pdfstandard = a-2b,
}

\documentclass[11pt]{article}
\pdfoutput=1
\usepackage{fullpage}
\usepackage[shortlabels]{enumitem}

\usepackage{array}%
\usepackage{lmodern}%
\usepackage{dcolumn}%
  \newcolumntype{d}[1]{D{.}{.}{#1}}%

\usepackage{longtable}%

\usepackage{subfiles}%

\usepackage{multirow}              %

\usepackage{nicematrix} %

\usepackage{microtype}%

\usepackage{graphicx} %
\usepackage{hyperref, xcolor} %

\hypersetup{
    pdfpagemode=UseOutlines,
    linkbordercolor=[rgb]{1,1,1},
    filebordercolor=[rgb]{1,1,1},
    citebordercolor=[rgb]{1,1,1},
    menubordercolor=[rgb]{1,1,1},
    urlbordercolor=[rgb]{1,1,1},
    colorlinks=true,
    linkcolor=[rgb]{0,0.2,0.5},
    citecolor=[rgb]{0,0.35,0},
    urlcolor=[rgb]{0.6,0.2,0.4},
    breaklinks=true
}

\usepackage{import} %

\usepackage{mathtools, amssymb, amsthm} %
\usepackage{mathrsfs} %

\usepackage{cleveref} %

\numberwithin{equation}{section}
\theoremstyle{plain}  %
\newtheorem{thm}{Theorem}[section]
\newtheorem*{thm*}{Theorem}
\newtheorem{lem}[thm]{Lemma}
\newtheorem*{lem*}{Lemma}

\newtheorem*{prop*}{Proposition}
\newtheorem{cor}[thm]{Corollary}
\newtheorem*{cor*}{Corollary}
\newtheorem{defn}[thm]{Definition}
\newtheorem*{defn*}{Definition}

\newtheorem*{claim*}{Claim}

\newtheorem*{question*}{Question}
\newtheorem{remark}[thm]{Remark}
\newtheorem*{remark*}{Remark}

\usepackage{braket} %
\usepackage{bbm} %

\usepackage[backend=biber,style=alphabetic,giveninits=false,maxbibnames=10,sorting=none]{biblatex}

\addbibresource{biblio.bib}

\newif\ifdebug
\debugfalse

\ifdebug
\usepackage[left]{showlabels}

\fi

\ifdebug
\newcommand{\TODO}[1]{\ifdebug \textcolor{red}{TODO: (#1) \marginpar{\small\textcolor{red}{to-do}}} \fi}
\newcommand{\anote}[1]{\ifdebug\textcolor{green}{\small (Anand: #1)}\fi}
\newcommand{\dnote}[1]{\ifdebug\textcolor{teal}{\small (David: #1)}\fi}
\else
\newcommand{\TODO}[1]{}
\newcommand{\anote}[1]{}
\newcommand{\dnote}[1]{}
\fi

\def\be#1\ee{\begin{equation}#1\end{equation}}
\def\bea#1\eea{\begin{eqnarray}#1\end{eqnarray}}
\def\bearr#1\eearr{\begin{equation}\begin{aligned}#1\end{aligned}\end{equation}}

\newcommand{\NN}{\mathbb{N}}

\newcommand{\CC}{\mathbb{C}}

\DeclareMathOperator*{\E}{\mathbb{E}}
\newcommand{\mcH}{\mathcal{H}}

\newcommand{\qq}[1]{``#1''}

\newcommand{\norm}[1]{\left\lVert#1\right\rVert}
\newcommand{\brac}[1]{\left(#1\right)}
\newcommand{\sbrac}[1]{\left[#1\right]}
\newcommand{\cbrac}[1]{\left\{#1\right\}}

\newcommand{\abrac}[1]{\left\langle#1\right\rangle}
\newcommand{\abs}[1]{\left| #1 \right|}

\newcommand{\inprod}[2]{\left\langle#1,#2\right\rangle}

\newcommand{\eps}{\varepsilon}

\newcommand{\II}{\mathbbm{1}}
\renewcommand{\vec}[1]{\mathbf{#1}}
\newcommand{\mat}[1]{\mathbf{#1}}

\newcommand{\innerprod}[2]{\abrac{#1, #2}}

\newcommand{\Enc}{\mathsf{Enc}}
\newcommand{\Dec}{\mathsf{Dec}}
\newcommand{\negl}{\mathrm{negl}}
\newcommand{\blockmat}[4]{
    \left(
    \begin{array}{c|c}
        #1 & #2 \\
        \hline
        #3 & #4
    \end{array}
    \right)
}

\def\nice{nice} %
\def\niceness{niceness}

\newcommand{\alpbt}[1]{\mathcal{#1}}

\newcommand{\strat}[1]{\mathcal{#1}}
\newcommand{\game}[1]{\mathcal{#1}}
\newcommand{\gp}[1]{P_\game{#1}}
\newcommand{\compgval}[2]{\omega_{\mathsf{comp}}(\strat{#1}, \game{#2})}

\newcommand{\psE}[1]{\Tilde{\E}_{\strat{S}}\sbrac{#1}}
\newcommand{\QHE}{{\mathsf{QHE}}}
\newcommand{\approxQHE}{\underset{\negl_\QHE}{\approx}}

\newcommand{\Apovms}{\cbrac{M_{ax}}_{a\in\alpbt{A}, x\in\alpbt{X}}}
\newcommand{\Bpovms}{\cbrac{N_{by}}_{b\in\alpbt{B}, y\in\alpbt{Y}}}

\newcommand{\polyB}{p}

\newcommand{\Bthree}{\mathcal{B}_3}

\begin{document}

\title{A convergent sum-of-squares hierarchy for compiled nonlocal games}

\author{David Cui\texorpdfstring{\footnote{MIT, \href{mailto:dzcui@mit.edu}{dzcui@mit.edu}}}{} 
\and
Chirag Falor\texorpdfstring{\footnote{MIT, \href{mailto:cfalor@mit.edu}{cfalor@mit.edu}}}{}
\and
Anand Natarajan\texorpdfstring{\footnote{MIT, \href{mailto:anandn@mit.edu}{anandn@mit.edu}}}{}
\and
Tina Zhang\texorpdfstring{\footnote{MIT, \href{mailto:tinaz@mit.edu}{tinaz@mit.edu}}}{}}

\date{July 23, 2025}

\maketitle

\begin{abstract}
We continue the line of work initiated by Kalai et al. (STOC' 23), studying ``compiled" nonlocal games played between a classical verifier and a \emph{single} quantum prover, with cryptography simulating the spatial separation between the players. The central open question in this area is to understand the soundness of this compiler against quantum strategies, and apart from results for specific games, all that is known is the recent ``qualitative'' result of Kulpe et al. (STOC '25) showing that the success probability of a quantum prover in the compiled game is bounded by the game's quantum commuting-operator value in the limit as the cryptographic security parameter goes to infinity.
In this work, we make progress towards a \emph{quantitative} understanding of quantum soundness for general games, by giving a concrete framework to bound the quantum value of compiled nonlocal games. Building on the result of Kulpe et al. together with the notion of ``nice'' sum-of-squares certificates, introduced by Natarajan and Zhang (FOCS' 23) to bound the value of the compiled CHSH game, we extend the niceness framework and construct a hierarchy of semidefinite programs that searches exclusively over nice certificates. We show that this hierarchy converges to the optimal quantum value of the game. Additionally, we present a transformation to make any degree-1 sum-of-squares certificate nice. This approach provides a systematic method to reproduce all known bounds for special classes of games together with Kulpe et al.'s bound for general games from the same framework.
\end{abstract}

\ifdebug
{\color{red}
NOTATION:
\begin{itemize}
    \item QUESTION SETS: $\mathcal{X}, \mathcal{Y}$. ANSWER SETS: $\mathcal{A}, \mathcal{B}$.
    \item Alice PVMs/POVMs: $A_{ax}$, Bob: $B_{by}$ (NOTE THE ORDERING: answer first/ question second)
    \item ABSTRACT Alice: $M_{ax}$, ABSTRACT Bob: $N_{by}$
    \item optimal values $\omega_{set}^{\ast}(\mathcal{G})$. for a particular strategy of a particular correlation class $set$: $\omega_{set}(\mathcal{S}, \mathcal{G})$
\end{itemize}
}
\fi

\clearpage

\tableofcontents

\clearpage
\section{Introduction}
In a nonlocal game, a single verifier interacts with two untrusted provers who are not allowed to communicate. These games were used by Bell in 1964 to demonstrate nonlocality by comparing the performance of quantum provers (provers allowed to share entanglement) with that of classical provers \cite{Bell}. Since then, nonlocal games have been extensively studied, revealing striking properties---such as the existence of nonlocal games where a high probability of success necessitates the use of specific quantum measurements and states, a phenomenon known as \emph{self-testing} \cite{10.5555/2011827.2011830,Supic2020selftestingof}. As a result, nonlocal games have been particularly effective as components in multi-prover interactive proofs, enabling unconditionally secure protocols for certified randomness generation \cite{VV12} and verifiable delegation of quantum computation \cite{RUV13, Gri19, Col+19}.

In the single-prover setting, such protocols also exist, but their security relies on certain cryptographic assumptions \cite{mahadev2018classicalverification, Bra+18, MV21}. One may then ask whether it is possible to generically transform information-theoretically secure multi-prover interactive protocols to cryptographically secure \emph{single}-prover interactive protocols. To this end, Kalai et al. proposed a procedure that \emph{compiles} any nonlocal game into a single-prover interactive game \cite{KLVY}. This approach has found success in protocols for randomness generation \cite{10.1007/978-3-031-38554-4_6}, delegation of quantum computation \cite{NZ23, MNZ24}, and self-testing \cite{Bar+24, 2406.04986}.

A central step in analyzing these protocols is to bound the quantum value of the compiled game.
To date, all known bounds have been obtained ad hoc. The most successful
approach to bounding the compiled quantum value thus far has been the
sum-of-squares (SoS) approach. In the study of (not necessarily
compiled) nonlocal games, this is a standard approach to
upper-bounding the quantum value \cite{0803.4290, 0803.4373}, which is
in general undecidable \cite{MIP*}. In \cite{NZ23, Cui+24, Bar+24,
  2406.04986}, the approach was to find SoS decompositions with
particular algebraic properties, called \emph{nice}, suitable for
showing that the compilation procedure preserves the quantum
value. However, a priori, one should not expect that nice SoS
decompositions exist for \emph{all} nonlocal games. 

Remarkably, \cite{2408.06711} recently showed an asymptotic result for the quantum value of \emph{all} compiled nonlocal games. Their approach deviates from the SoS approach as they directly show that from an infinite family of strategies for the compiled nonlocal game, one can construct a strategy of matching value for the original nonlocal game. Specifically, they showed the following informal theorem:

\begin{thm*}[Informal, \cite{2408.06711}]
    Let $\mathcal{G}$ be any nonlocal game and $\mathcal{G}_{comp}$ be the corresponding compiled game. Then any prover running in time polynomial in the security parameter $\lambda$ wins $\mathcal{G}_{comp}$ with probability at most $\omega_{qc}^{\ast}(\mathcal{G}) + f(\lambda)$, where $\omega_{qc}^{\ast}(\mathcal{G})$ is the quantum commuting value of the original nonlocal game $\mathcal{G}$ and $f$ is some function that tends to $0$ as $\lambda \to \infty$.
\end{thm*}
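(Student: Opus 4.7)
The plan is to prove the contrapositive. Suppose for contradiction that there exist a constant $\epsilon > 0$, an infinite set $\Lambda \subseteq \NN$, and a family of polynomial-time provers $\{P_\lambda\}_{\lambda \in \Lambda}$ such that each $P_\lambda$ wins $\mathcal{G}_{comp}$ with probability at least $\omega_{qc}^{\ast}(\mathcal{G}) + \epsilon$. From this data I would construct a quantum commuting-operator strategy for $\mathcal{G}$ whose value is at least $\omega_{qc}^{\ast}(\mathcal{G}) + \epsilon$, contradicting the definition of $\omega_{qc}^{\ast}$.

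The first step is to extract POVMs from each $P_\lambda$. Purifying if needed, fix a pure state $|\psi_\lambda\rangle \in \mcH_\lambda$ together with unitaries $U_1^{(\lambda)}, U_2^{(\lambda)}$ describing the two rounds of interaction. For every Alice question $x$, the verifier sends $\Enc_{pk}(x)$ to the prover, who responds with a ciphertext that the verifier decrypts to recover some outcome $a$; this defines a POVM $\{A_{ax}^{(\lambda)}\}_a$ on $\mcH_\lambda$. Similarly, conditioning on a second-round question $y$ and decoding the prover's final response defines $\{B_{by}^{(\lambda)}\}_b$. The probability that $P_\lambda$ wins $\mathcal{G}_{comp}$ is then $\sum_{x,y,a,b} p(x,y) V(a,b|x,y) \langle \psi_\lambda | A_{ax}^{(\lambda)} B_{by}^{(\lambda)} A_{ax}^{(\lambda)} | \psi_\lambda \rangle$, which already has the shape of a game value---except that $A^{(\lambda)}$ and $B^{(\lambda)}$ act sequentially on a single register, so they need not commute.

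The second step is to use semantic security of $\QHE$ to bound $[A_{ax}^{(\lambda)}, B_{by}^{(\lambda)}]$ on $|\psi_\lambda\rangle$, in the state-dependent $2$-norm, by $\negl_\QHE(\lambda)$. The idea is a standard hybrid: if some polynomial-time distinguisher could detect whether Alice's effect $A_{ax}$ was applied before or after Bob's effect $B_{by}$, then by reductionist reasoning one would obtain a polynomial-time attack that distinguishes $\Enc(x)$ from $\Enc(x')$, contradicting security of $\QHE$. Because the prover, Alice's decoded effects, and Bob's measurements are all polynomial-time implementable, this argument applies to every fixed tuple $(a,x,b,y)$ and yields operator-level approximate commutation on $|\psi_\lambda\rangle$.

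The hard step is to pass from approximate commutation at each finite $\lambda$ to an exact quantum commuting-operator strategy for $\mathcal{G}$. My plan is to take a tracial ultraproduct along a free ultrafilter $\mathcal{U}$ on $\Lambda$: define the state $\tau(\{X_\lambda\}) := \lim_{\lambda \to \mathcal{U}} \langle \psi_\lambda | X_\lambda | \psi_\lambda \rangle$ on bounded sequences of operators, quotient by the ideal of sequences whose $\tau$-induced $2$-norm vanishes, and apply the GNS construction to obtain a Hilbert space with a cyclic tracial vector. Images of $\{A_{ax}^{(\lambda)}\}_\lambda$ and $\{B_{by}^{(\lambda)}\}_\lambda$ descend to operators $\hat{A}_{ax}, \hat{B}_{by}$ in the ultraproduct whose commutators vanish identically---this is exactly the point where approximate commutation becomes exact, because the null ideal kills them---and the induced POVMs correlate according to the ultralimit of $P_\lambda$'s winning statistics, giving a commuting-operator value of at least $\omega_{qc}^{\ast}(\mathcal{G}) + \epsilon$. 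I expect this step to be the main technical obstacle: one must verify that the commutator bound from step two is in the norm that the ultraproduct actually contracts (a state-dependent $2$-norm, not an operator norm), that approximate commutation holds on sufficiently many vectors to make $\hat{A}$ and $\hat{B}$ land in each other's commutants rather than merely commuting on the cyclic vector, and that the resulting object genuinely fits the definition of a commuting-operator strategy rather than some weaker tracial approximation of one.
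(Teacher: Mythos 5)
The decisive gap is in your second step. Semantic security of the $\QHE$ scheme does not let you bound the commutator $[A_{ax}^{(\lambda)}, B_{by}^{(\lambda)}]$ on $\ket{\psi_\lambda}$ in any norm. What CPA security gives you is that the ciphertext $c$ hides the plaintext question $x$; the consequence one can actually extract (Theorem~\ref{thm:block-encoding}) is that for any efficiently measurable Hermitian polynomial $w$ in Bob's operators, the quantity $\E_{c\gets\Enc(x)}\sum_\alpha \bra{\psi}A_{\alpha c}^* w A_{\alpha c}\ket{\psi}$ is nearly independent of the distribution from which $x$ is drawn---i.e., Alice's measurement does not signal to Bob. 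There is no distinguishing experiment in which ``applying $A_{ax}$ before versus after $B_{by}$'' is an event correlated with the encryption of $x$: the prover always acts sequentially, both orderings are consistent with the same transcript, and a prover whose first- and second-round operators simply fail to commute breaks your reduction without breaking the encryption. If approximate commutation followed from QHE security, quantum soundness of the compiler would be a short exercise; the unavailability of any commutator bound is precisely why this problem is hard and why all known arguments route around it.

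The correct substitute for commutation---used by Kulpe et al.\ and, quantitatively, by this paper---is to give up on constructing Alice operators and instead package Alice's action as a family of positive linear functionals $\phi_{ax}$ on Bob's PVM algebra (a strongly non-signaling algebraic strategy, Definition~\ref{def:non-signaling-strategy}); the condition $\sum_a \phi_{ax}=\phi$ is exactly what the security reduction delivers, and a separate operator-algebraic theorem (Theorem~\ref{lem:non-signaling-commuting}) converts such a strategy into a commuting-operator strategy of the same value. Your ultraproduct step also has the secondary defects you flag yourself (the state is not tracial, and $2$-norm commutation on the cyclic vector does not place $\hat A$ in the commutant of $\hat B$), but these never get the chance to matter, since the commutator estimate they are meant to amplify does not exist. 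Note also that this paper's own derivation of the statement (Corollary~\ref{cor:2408.generalization}) takes yet another route, via the convergent one-sided NPA hierarchy and nice sum-of-squares certificates (Theorems~\ref{thm:one-sided-convergence}, \ref{thm:nice-compiles}, and \ref{thm:formal-compilation-bound}), which likewise uses only the non-signaling consequence of security and never commutation.
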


Although this is a general result, applying to all nonlocal games, it is not useful for cryptographic applications since it tells us nothing about the rate at which $f$ tends to zero as $\lambda \to \infty$. Ideally, in order to implement a cryptographic protocol in practice, we would like to be able to choose a concrete value for the security parameter $\lambda$ based on our tolerances for the security of the scheme. Moreover, at the level of asymptotics, in cryptographic applications one is usually interested in understanding the behavior of the compilation of a \emph{family} of games indexed by a size parameter $n$, and it is important to know how the security of the protocol scales with $\lambda$ and $n$ in order to know that the protocol truly runs in polynomial time, since the value of $\lambda$ controls the computational resources required to implement the cryptography.

In all previous results \cite{NZ23, 10.1007/978-3-031-38554-4_6, Cui+24, Bar+24, 2406.04986} analyzing specific families of nonlocal games, such a handle on the convergence rate had been established, i.e., one has the following ``gold-standard'' theorem:

\begin{thm*}[Informal]
    Let $\mathcal{G}$ be a nonlocal game and $\mathcal{G}_{comp}$ be the corresponding compiled game. Then any prover running in time polynomial in the security parameter $\lambda$ wins $\mathcal{G}_{comp}$ with probability at most $\omega_{qc}^{\ast}(\mathcal{G}) + \mathrm{negl}(\lambda)$, where $\omega_{qc}^{\ast}(\mathcal{G})$ is the quantum commuting value of the original nonlocal game $\mathcal{G}$.
\end{thm*}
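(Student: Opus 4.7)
The plan is to combine two ingredients from the paper's framework: the convergent hierarchy of semidefinite programs that searches only over nice sum-of-squares certificates, and a ``lifting'' lemma showing that any nice SoS certificate of bounded degree controls the compiled quantum value up to an error that is negligible in the security parameter $\lambda$. Concretely, for each fixed degree $d$, a nice SoS decomposition of $\gamma \cdot \II - P_{\mathcal{G}}$, when evaluated on a compiled quantum strategy, should yield a bound of $\gamma + \varepsilon(d, \lambda)$ on the winning probability, where $\varepsilon(d, \lambda) = \mathrm{negl}(\lambda)$ for each fixed $d$. The niceness property is precisely what buys this: it ensures that each term in the SoS decomposition, when the genuine spatial-separation commutativity of the nonlocal setting is replaced by the \emph{computational} commutativity enforced by QHE, accrues only a negligible correction under the efficient-adversary bound.

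The argument would then proceed in two steps. First, I would establish the lifting lemma by extending the technique of Natarajan--Zhang to general degree-$d$ nice certificates, tracking how each monomial in the SoS expansion interacts with the compiled prover's measurement operators on Alice's and Bob's questions, and bounding the resulting commutator-type errors via QHE security against adversaries whose circuit size is allowed to depend on $d$. Second, I would invoke the paper's convergence result to find, for each $\eta > 0$, a finite degree $d_\eta$ and a nice SoS certificate witnessing the bound $\omega_{qc}^*(\mathcal{G}) + \eta$ on the commuting-operator value. Composing the two yields a bound of $\omega_{qc}^*(\mathcal{G}) + \eta + \varepsilon(d_\eta, \lambda)$ on the compiled value. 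Whenever $\mathcal{G}$ admits an \emph{exact} nice SoS certificate at some bounded degree---for instance, in CHSH-type examples handled by the degree-1-to-nice transformation---one can take $\eta = 0$ at a fixed $d$ and immediately conclude the gold-standard inequality.

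The main obstacle is the fully general case. A priori, the nice SoS hierarchy is only known to converge, not to converge at any uniform rate, so the degree $d_\eta$ required to achieve $\eta$-accuracy may grow arbitrarily with $1/\eta$; if one is then forced to let $d_\eta$ depend on $\lambda$ in order to drive $\eta$ below the negligible threshold, $\varepsilon(d_\eta, \lambda)$ need no longer remain negligible. Closing this gap for all nonlocal games would require either a quantitative rate of convergence for the nice hierarchy, or a direct interleaving of the Kulpe et al.\ asymptotic argument with the nice SoS framework to cancel the $\eta$ term before the degree blow-up. This is the genuine difficulty---the same obstruction that separates the asymptotic Kulpe et al.\ bound from the gold-standard statement, and which is what the niceness framework is ultimately designed to circumvent one family of games at a time.
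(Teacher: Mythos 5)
Your proposal is correct and follows essentially the same route as the paper: the ``lifting'' step is the paper's pseudo-expectation argument (Lemma~\ref{lem:psE-nice} and Theorem~\ref{thm:nice-compiles}), the convergence step is Theorem~\ref{thm:one-sided-convergence}, and their composition is exactly Theorem~\ref{thm:formal-compilation-bound}, with the gold-standard negligible bound obtained precisely when the nice hierarchy closes at a finite degree (as for the level-1 games handled by Theorem~\ref{thm:lvl1-nicelvl1}). You also correctly identify the same obstruction the paper acknowledges: without a quantitative convergence rate, the general case yields only $\omega_{qc}^{\ast}(\mathcal{G}) + \eps + \negl_{d(\eps)}(\lambda)$ rather than a uniform negligible error.
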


Although this theorem is stated for individual games rather than families of games, results of this form have been proven useful in handling families as well. For instance, in the argument systems for $\mathsf{BQP}$ and $\mathsf{QMA}$ constructed in \cite{NZ23}, results of the type given above for certain basic nonlocal games like the CHSH game were used to show that, to achieve a constant soundness gap overall in the argument system for instances of size $n$, it suffices to take $\lambda = n$, ensuring that the protocol overall runs in polynomial time.

Thus, in order to hope to use the KLVY compiler for cryptographic applications, we would like to solve the following question.

\begin{question*}[\cite{2408.06711}]
    Can we establish a quantitative bound for $f$ for general nonlocal games?
\end{question*}

In this work, we make progress towards a computational solution to this question in the SoS framework, recovering along the way the specific bounds of~\cite{NZ23, Cui+24, Bar+24, 2406.04986}.

\subsection{Main results}

\paragraph{Nice SoS decompositions for all games.} Our main result is establishing a convergent SDP hierarchy for the nonlocal game value that searches exclusively over nice SoS decompositions, which we call the ``one-sided NPA hierarchy.'' This, together with a generalized definition of niceness, establishes,

\begin{thm}[Informal]
Let $\game{G}$ be a nonlocal game with quantum commuting value $\omega_{qc}(\mathcal{G})$. Then there is a hierarchy of semidefinite programs (the ``one-sided NPA hierarchy'') that converges to $\omega_{qc}^{\ast}(\mathcal{G})$ from above, such that every sum-of-squares certificate from this hierarchy certifying an upper bound of $\omega'$ on the quantum value implies an upper bound of $\omega' + \negl(\lambda)$ on the compiled value, where the negligible function $\negl$ depends arbitrarily on the certificate.
\end{thm}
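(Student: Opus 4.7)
The plan is to build two ingredients in parallel: a generalization of the Natarajan--Zhang \emph{niceness} notion that remains operational under the Kulpe et al.\ compiled-soundness framework, and a restriction of the NPA hierarchy whose dual SoS certificates are automatically nice. Together these deliver completeness and compiled-soundness in one stroke.

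First, I generalize niceness to arbitrary degree. Writing the game polynomial as $\polyB = \sum_{x,y,a,b} c_{xyab}\, M_{ax} N_{by}$ in the free $\ast$-algebra generated by $\Apovms$ and $\Bpovms$ modulo POVM and commutation relations, I would declare an SoS decomposition $\omega'\cdot\II - \polyB = \sum_i p_i^{\ast} p_i$ (modulo the ideal) to be \emph{nice} when each $p_i$ lies in the linear span of monomials in a canonical ``Alice-left, Bob-right'' normal form, together with auxiliary conditions that let one bound the commutator costs incurred by the compiled substitution. In this normal form, one can peel off Bob's factors one at a time against a compiled prover's approximately-commuting operators and absorb each commutator into a single use of the QHE security bound, so the total soundness loss is a fixed polynomial in the (constant) structural parameters of the certificate times $\negl_\QHE(\lambda)$.

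Second, I would introduce the one-sided NPA hierarchy: at level $d$, the primal is a moment matrix indexed by nice monomials of total degree at most $d$, with the usual PSD, POVM, and $[M_{ax}, N_{by}] = 0$ linear constraints. By Lagrangian duality, any dual certificate upper-bounding the primal value is automatically a sum $\sum_i p_i^{\ast} p_i$ of squares of polynomials drawn from the nice span, hence nice by construction.

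Third, convergence from above. The one-sided hierarchy is a relaxation of the quantum commuting value, so the values upper-bound $\omega_{qc}^{\ast}(\game{G})$; for the reverse inequality I would carry out a GNS-style construction. Starting from a limit point of feasible moment matrices, I build a Hilbert space on which $\Apovms$ and $\Bpovms$ act as commuting POVMs and reproduce the moments. The subtlety is that moment data is prescribed only on nice monomials, so the GNS space must take nice monomials as a spanning set; the encoded commutation relations are what guarantee that every mixed monomial is equivalent to a nice one modulo the ideal, keeping the construction consistent. Finally, for compiled soundness, I would substitute Alice's operators by the compiled prover's committed-register projectors and Bob's by its second-round POVMs; niceness of each $p_i$ lets me reorder Bob past Alice with controlled error, so $\vbrac{\psi | p_i^{\ast} p_i | \psi}$ on the compiled state is bounded by its spatially-separated analogue plus $\negl_\QHE(\lambda)$, and summing over $i$ yields the $\omega' + \negl(\lambda)$ bound, with the hidden constants depending, as the theorem permits, on the fixed certificate.

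The main obstacle will be the convergence step. Restricting the moment matrix to nice monomials drops linear constraints that the full NPA hierarchy enforces, so the one-sided hierarchy could a priori relax strictly. Closing that gap is equivalent to showing that the compiled-soundness-friendly form of niceness is nevertheless weak enough that the GNS completion still produces a bona fide commuting-operator strategy from a limit of nice-monomial moments. Calibrating niceness to be simultaneously strong enough for the compiled-soundness step and weak enough for GNS completeness is the core technical balance of the theorem, and is the place where the precise definitions will have to be chosen with care.
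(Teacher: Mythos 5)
Your high-level architecture matches the paper's --- generalize niceness, build a restricted hierarchy whose dual certificates are nice, prove convergence via a limit of moment data, and get compiled soundness from niceness --- but three of the four steps have genuine gaps as you have set them up. First, the operative definition of niceness is not an ``Alice-left, Bob-right normal form'': the essential restriction (Definition~\ref{def:niceness}) is that each square term $r_i$ may involve Alice projectors for only a \emph{single question} $x$ (Bob's operators are unrestricted). This is forced by the compiled setting: the prover only ever receives one encrypted Alice question, so a pseudo-expectation of a monomial mixing $M_{ax}$ and $M_{a'x'}$ with $x\neq x'$ simply cannot be defined from the compiled strategy. Relatedly, your soundness mechanism (``reorder Bob past Alice with controlled error and compare to the spatially-separated analogue'') is not the right argument --- there is no spatially separated analogue for a single prover, and QHE security does not give approximate commutation. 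The actual argument (Lemma~\ref{lem:psE-nice}) defines a pseudo-expectation from the post-measurement states $\ket{\Psi_{\alpha\chi}}$, uses orthogonality of same-question projectors to reduce $S^\dagger S$ to $\sum_a M_{ax} q_a^\dagger q_a + (\II - \sum_a M_{ax})p_\phi^\dagger p_\phi$, and applies QHE security exactly once, to the remainder term, via Lemma~\ref{lem:psE-sum-to-one-constraint}.

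Second, your claim that the dual of a moment matrix ``indexed by nice monomials'' is automatically a nice SoS fails: a single moment matrix whose index set contains monomials for different Alice questions has off-diagonal entries such as $\phi(w_B'^\dagger M_{a'x'}M_{ax}w_B)$, and the dual Gram decomposition then produces squares of polynomials mixing Alice questions, which are not nice. The fix in the paper is structural: the primal is a \emph{direct sum} of moment matrices $\Gamma^d_{ax}$, one block per pair $(a,x)$, each indexed only by Bob monomials, linked by the non-signaling consistency constraint $\sum_a \phi^d_{ax} = \sum_a \phi^d_{a0}$; block-diagonality of the dual slack matrix is what makes its Gram factors nice. Third, you correctly identify convergence as the crux but leave it open. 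The paper closes it not by a fresh GNS argument but by observing that a weak-$*$ limit (Banach--Alaoglu) of the functionals $\phi^d_{ax}$ is a strongly non-signaling algebraic strategy in the sense of \cite{2408.06711}, and then invoking their theorem that such correlations coincide exactly with commuting-operator correlations ($\omega_{sns}^\ast = \omega_{qc}^\ast$). That imported equivalence is precisely the ``calibration'' you flag as the core difficulty; without it your proposal does not establish convergence to $\omega_{qc}^\ast(\game{G})$.
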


The principal advantage of our result is that it gives a systematic way to extract bounds for specific nonlocal games by using the SDP hierarchy to search over nice SoS certificates. In the case the SDP hierarchy converges at a finite level, the corresponding compiled game value would be bounded by $\omega_{qc}^{\ast}(\mathcal{G}) + \mathrm{negl}(\lambda)$. Thus, our result in a sense subsumes all previous bounds on specific games, since they all involved finding a nice SoS certificate in an ad-hoc manner.

We also note that the existence of the one-sided NPA hierarchy is interesting in the study of nonlocal games as well. Computationally, this is an implementable hierarchy which will always generate nice SoS decompositions which may simply self-testing arguments such as \cite{1911.01593}.

\paragraph{All degree-1 SoS certificates can be made nice without increasing the degree.}
Our second major result is about controlling the degree of nice SoS certificates. 
We show the equivalence of the original NPA hierarchy and the one-sided NPA hierarchy at the first level. For any certificate that lies in the first level of the NPA hierarchy, we show that there exists a nice certificate for the same bound in the first level. 

\begin{thm}[Informal]
    Let $\gp{G}$ be the game polynomial of a nonlocal game $\game{G}$. If we have a degree-$1$ NPA sum-of-squares certificate for $\omega - \gp{G}$, then we can construct a degree-$1$ nice sum-of-squares certificate for $\omega - \gp{G}$.
\end{thm}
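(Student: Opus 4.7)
The plan is to take an arbitrary degree-$1$ NPA SoS certificate and recombine its squares into the nice shape without raising the degree. A degree-$1$ NPA certificate writes
\begin{equation*}
\omega - \gp{G} \;=\; \sum_i \lambda_i\, T_i^{\ast} T_i \pmod{\text{POVM constraints}},
\end{equation*}
where each $T_i = c_i\,\II + A_i + B_i$ with $A_i$ a linear combination of the Alice operators $M_{ax}$ and $B_i$ a linear combination of the Bob operators $N_{by}$. My first step is to expand $T_i^{\ast} T_i$ and sort the summands into three groups: pure-Alice terms ($A_i^{\ast}A_i$, together with $c_iA_i^{\ast}+\bar{c}_iA_i$ and part of $|c_i|^2$), pure-Bob terms (analogous), and Alice--Bob cross terms $A_i^{\ast}B_i + B_i^{\ast}A_i$. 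The first two groups automatically satisfy the niceness conditions, since each of their squares lives entirely on a single party and never interleaves Alice with Bob; the substantive work therefore concerns the cross contribution.

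The second step is to recast the certificate as a single positive semidefinite Gram matrix
\begin{equation*}
G \;=\; \begin{pmatrix} G_{00} & G_{0A} & G_{0B} \\ G_{A0} & G_{AA} & G_{AB} \\ G_{B0} & G_{BA} & G_{BB} \end{pmatrix}
\end{equation*}
indexed by $\{\II\} \cup \{M_{ax}\} \cup \{N_{by}\}$, and translate the generalized niceness condition, applied to degree-$1$ witnesses, into a factorizability constraint on the off-diagonal block $G_{AB}$: it should arise from outer products of Alice-side and Bob-side vectors that can be repackaged into single-party squares plus a controlled cross contribution. Because at degree $1$ every cross monomial $M_{ax}N_{by}$ already commutes in any quantum commuting strategy, this repackaging does not change the polynomial identity that the certificate enforces. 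I would construct the repackaging explicitly by performing a Schur-complement decomposition of $G$ through the identity row/column and then splitting the resulting Alice--Bob PSD block into a nice sum of squares plus a remainder that vanishes modulo the POVM ideal.

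The main obstacle I anticipate is carrying out this reorganization while staying at degree $1$. The tempting shortcut is to absorb cross correlations by introducing auxiliary monomials $M_{ax}N_{by}$ directly into the squares, which would push the certificate to degree $2$ and defeat the purpose. Avoiding this requires keeping every summand expressed in the original linear monomials $\II, M_{ax}, N_{by}$, with all of the Alice--Bob correlation information pushed into the scalar coefficients of a Schur-complement factorization of $G$. A related subtlety is the bookkeeping of the POVM completeness relations $\sum_a M_{ax} = \II$ and $\sum_b N_{by} = \II$, which are the only degree-preserving cancellations available and which must account for every residual term produced by the decomposition. Once these checks are in place, the construction produces a nice degree-$1$ certificate with the same bound $\omega$, proving that the degree-$1$ one-sided NPA SDP and the degree-$1$ NPA SDP have identical optimal values.
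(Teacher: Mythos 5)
There is a genuine gap, and it begins with a misreading of what ``nice'' requires at degree $1$. Niceness does not forbid a square from mixing Alice with Bob---a nice $r_i$ may freely combine one Alice question $x$ with \emph{all} of Bob's operators---it forbids a single square from mixing \emph{two different Alice questions}. Consequently the obstruction lives in the Alice--Alice block $G_{AA}$ of your Gram matrix, not in the off-diagonal block $G_{AB}$ that your proposal concentrates on. Your first step is also not a valid move: splitting each $T_i^{\ast}T_i$ into a pure-Alice part, a pure-Bob part, and a cross part $A_i^{\ast}B_i + B_i^{\ast}A_i$ destroys the sum-of-squares structure, since the cross part is an indefinite Hermitian polynomial rather than a sum of squares, and the ``pure-Alice'' part $A_i^{\ast}A_i + c_iA_i^{\ast} + \bar c_i A_i + \cdots$ is in any case not nice when $A_i$ involves several questions. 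The Schur complement through the identity row does not repair this, and no amount of repackaging of $G_{AB}$ can, because a nice certificate places no constraint on $G_{AB}$ in the first place.

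The two ideas you are missing are the following. First, after pruning the basis $\{\II\}\cup\{M_{ax}\}\cup\{N_{by}\}$ to a linearly independent set, the monomials $M_{ax}M_{a'x'}$ with $x\neq x'$ are linearly independent of everything appearing in $\omega\II - \gp{G}$ (which has no pure-Alice cross-question terms), so the Gram matrix of \emph{any} degree-$1$ certificate is forced to have $G_{AA}$ block-diagonal, one block per Alice question. Second, one exploits the unitary freedom in Gram factorizations: given any factorization $G = S^{\dagger}S$, and given a block-diagonal Cholesky factor $R_a$ of the block-diagonal matrix $G_{AA}$, one can complete it to a factorization
\[
G = \blockmat{R_a}{R_{ab}}{0}{R_b}^{\dagger}\blockmat{R_a}{R_{ab}}{0}{R_b}
\]
by rotating $S$ with the unitary relating the two factorizations of $G_{AA}$. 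Each row of the resulting factor touches at most one Alice question (or none), so each square is nice, and the bound $\omega$ is unchanged. Your instinct that the degree-$1$ NPA and one-sided NPA values coincide is the right target, and your worry about degree inflation is a red herring (the nice squares remain linear in the basis monomials); but as written the proposal neither preserves positivity nor identifies the block structure that makes the rearrangement possible.
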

This means we can bound the compiled value of all games which have an optimal quantum value in the first level of the NPA hierarchy. Binary XOR games are one such class of games. In \cite{Cui+24}, the authors constructed a nice certificate by exploiting some specific properties of binary XOR games \cite{Slofstra_2011}. This work reproduces this result using more generic techniques, and extends it to all games with degree-1 SoS certificates, regardless of the size of the answer alphabet. It is well known that XOR games are a very special class of nonlocal games (they are computationally tractable and have an exact SDP characterization), so we view our results as indicating the power of low-degree nice SoS certificates for general (non-XOR) games.

Finally, in \Cref{appsec:niceB3}, we give a nice SoS certificate for a $3$-answer generalization of the CHSH game, the $\Bthree$ game \cite{1911.01593}. This is an NPA level 2 game which gives us hope that the {\niceness} framework is more general than the results in this paper.

\subsection{Future work}

In this paper, we have laid down a general framework to bound the quantum value of compiled games. We conclude by outlining some directions for future research.

\paragraph{\texorpdfstring{NPA level-$k$ value to nice NPA level-$k$
value}{NPA level-k value to nice NPA level-k value}}
One promising direction for future work is to generalize the two proofs given in \Cref{chap:npa_level1}. As it stands, the argument in \Cref{sec:lvl1_npa_lvl1_nice} does not extend to NPA level‑2 because the level‑2 matrices $\mat{M}_A$ are not necessarily block‑diagonal. For instance, consider the $\Bthree$ game---a ternary generalization of CHSH with three possible answers for each prover \cite{1911.01593}. The $\Bthree$ game resides at the second level of the NPA hierarchy, and its existing SoS certificate is not ``nice''; it cannot be straightforwardly transformed into a nice certificate within the current framework. Nevertheless, as demonstrated in \Cref{appsec:niceB3}, we have constructed a nice level‑2 SoS certificate for the $\Bthree$ game. This suggests that a suitable generalization of our framework could systematically convert any non‑nice SoS certificate into a nice one at arbitrary levels of the NPA hierarchy. 

\paragraph{Interesting examples of NPA level-1 games}
We showed that every NPA level-1 game (i.e., a game whose commuting operator value is achieved at level-1 of the NPA hierarchy) admits a nice SoS certificate. However, aside from the binary XOR games, we currently know of no other nontrivial level-1 examples. Thus, it is natural to seek additional level-1 games. Identifying such games would expand the class of games whose compiled values we can bound via the SoS method and may shed light on which quantum properties are preserved under compilation.

\paragraph{Applications of the one-sided hierarchy to nonlocal games}
We are interested in whether the one‑sided NPA hierarchy introduced here can also benefit (not necessarily compiled) nonlocal games. A common approach to establishing self‑testing results is to use an SoS decomposition to derive algebraic relations that near‑optimal strategies must satisfy. Our one‑sided hierarchy provides a systematic way to search for highly structured SoS decompositions. Might this perspective streamline existing self‑testing proofs? For instance, in \cite{1911.01593}, a substantial amount of effort was devoted to deriving certain algebraic relations from the ones given in the SoS decomposition.

\paragraph{Broader vision}
In this work, we analyzed how the value of a nonlocal game behaves under compilation. Yet, in practical applications, the value alone is seldom the primary concern; instead, we are often interested in the game's self‑testing properties. These include questions such as whether the game admits a unique optimal strategy (in terms of the shared state and measurements) and whether strategies that achieve nearly optimal values are necessarily close to this optimal strategy. A broader goal of research on compiled nonlocal games is to understand how these properties transform under compilation. In particular, existing work \cite{NZ23} focuses on characterizing the ``Bob'' operators of a compiled strategy—can we likewise say something meaningful about the ``Alice'' operators (i.e., the prover's actions under homomorphic encryption) or about the prover's quantum state?

\subsection{Related work}

During the completion of this work, we became aware of a concurrent and related work by Klep et al. \cite{other_group} exploring similar ideas.

\subsection*{Acknowledgements}

AN acknowledges support from NSF CAREER Award No.~2339948. 

\section{Preliminaries} \label{chap:prelims}
In this section, we will introduce the basic concepts and notations used in this paper. 

\subsection{Cryptography}
We adopt several definitions from~\cite{NZ23}.

\begin{defn}
A \emph{QPT (quantum polynomial time) algorithm} is a logspace-uniform family of quantum circuits with size polynomial in the number of input qubits and in the security parameter.
If the circuits are unitary then we call this a \emph{(unitary) QPT circuit}.
A POVM~$\{M_\beta\}$ is called \emph{QPT-measurable} is there is a QPT circuit such that measuring some output qubits and post-processing gives rise to the same probabilities as the POVM.
A binary observable~$B$ is called \emph{QPT-measurable} if this is the case for the corresponding projective POVM.
This is equivalent (by uncomputation) to demanding that~$B$ interpreted as a unitary can be realized by a QPT circuit.
\end{defn}
Here we follow~\cite{NZ23} in considering security against uniform adversaries, and indeed all of our reductions are uniform.
We remark that we can also define security against non-uniform adversaries to obtain a stronger conclusion at the cost of relying on a stronger cryptographic assumption (specifically, QHE secure against non-uniform adversaries, which is quite standard in cryptography).
We now recall the notion of quantum homomorphic encryption (QHE). Our definition is modeled on that of~\cite{KLVY}, which includes the additional property of ``correctness with auxiliary input,'' which is necessary for the completeness of the KLVY compiler, and holds for known constructions of QHE.

\begin{defn}
    A \emph{quantum homomorphic encryption scheme} $\mathrm{QHE}=(\mathrm{Gen}, \mathrm{Enc}, \mathrm{Eval}, \mathrm{Dec})$ for a class of circuits $\mathcal{C}$ is defined as a tuple of algorithms with the following syntax.
    \begin{itemize}
        \item $\mathrm{Gen}$ is a PPT algorithm that takes as input the security parameter $1^\lambda$ and returns a secret key $\mathrm{sk}$.
        \item $\mathrm{Enc}$ is a PPT algorithm that takes as input a secret key $\mathrm{sk}$ and a classical input~$x$, and outputs a classical ciphertext~$c$.
        \item  $\mathrm{Eval}$ is a QPT algorithm that takes as input a tuple a classical description of a quantum circuit~$C: \mathcal{H} \times (\mathbb{C}^2)^{\otimes n} \to (\mathbb{C}^2)^{\otimes m}$, a quantum register with Hilbert space~$\mathcal{H}$, and a ciphertext~$c$, %
        and outputs a ciphertext~$\Tilde{c}$.
        If $C$ has classical output, we require that~$\mathrm{Eval}$ also has classical output.
        \item $\mathrm{Dec}$ is a QPT algorithm that takes as input a secret key $\mathrm{sk}$ and ciphertext $c$, and outputs a state $\ket{\psi}$.
        Additionally, if $c$ is a classical ciphertext, the decryption algorithm deterministically outputs a classical string $y$.
    \end{itemize}
    We require that the scheme satisfies the following properties.
    \begin{itemize}
        \item \emph{Correctness with Auxiliary Input:}
        For any $\lambda\in\mathbb{N}$, any quantum circuit $C: \mathcal{H}_A \times (\mathbb{C}^2)^{\otimes n} \to \{0,1\}^*$ with classical output, any state $\ket{\psi}_{AB} \in \mathcal{H}_A\otimes \mathcal{H}_B$, and any message $x\in\{0,1\}^n$, the following experiments output states with negligible trace distance:
        \begin{itemize}
            \item Game 1:
            Start with~$(x,\ket{\psi}_{AB})$.
            Apply the circuit~$C$, obtaining the the classical string~$y$.
            Return~$y$ and register~$B$.
            \item Game 2:
            Start with a key $\mathrm{sk}\leftarrow\mathrm{Gen}(1^\lambda)$, $c\in\mathrm{Enc}(\mathrm{sk}, x)$, and~$\ket{\psi}_{AB}$.
            Apply $\mathrm{Eval}$ with input~$C$, register~$A$, and ciphertext~$c$ to obtain~$\Tilde{c}$.
            Compute $\Tilde y \leftarrow \mathrm{Dec}(\mathrm{sk}, \Tilde{c})$.
            Return~$\Tilde y$ and register $B$.
        \end{itemize}
        \item \emph{CPA Security:} For all pairs of messages $(x_0, x_1)$ and any QPT adversary $\mathcal{A}$ it holds that
        \[
        \left| \Pr\left[1=\mathcal{A}(c_0)^{\mathrm{Enc}(\mathrm{sk}, \cdot)}\middle|
        \begin{array}{l}
             \mathrm{sk} \gets  \mathrm{Gen}(1^\lambda) \\
             c_0 \gets \mathrm{Enc}(\mathrm{sk}, x_0)
        \end{array}
        \right]-
        \Pr\left[1=\mathcal{A}(c_1)^{\mathrm{Enc}(\mathrm{sk}, \cdot)}\middle|
        \begin{array}{l}
             \mathrm{sk} \gets  \mathrm{Gen}(1^\lambda) \\
             c_1 \gets \mathrm{Enc}(\mathrm{sk}, x_1)
        \end{array}
        \right]
        \right| \leq \mathrm{negl}(\lambda),
        \]
    \end{itemize}
\end{defn}
\begin{remark}\label{remark:abuse of notation}
    In a slight abuse of notation, we often write expressions such as~$\mathop{\mathbb{E}}_{c\leftarrow\mathrm{Enc}(x)} f(\mathrm{Dec}(\alpha))$ as an abbreviation for an expectation value of the form~$\mathop{\mathbb{E}}_{\mathrm{sk} \gets \mathrm{Gen}(1^\lambda), c\leftarrow\mathrm{Enc}(\mathrm{sk}, x)} f(\mathrm{Dec}(\mathrm{sk}, \alpha))$.
\end{remark}

\subsection{Nonlocal games} \label{subsec:quantum_value_notation}

\begin{defn}
A \emph{nonlocal game} $\mathcal{G}$ is a tuple $(\mathcal{X}, \mathcal{Y}, \mathcal{A}, \mathcal{B}, \pi, V)$ consisting of finite sets~$\mathcal{X}$ and~$\mathcal{Y}$ of inputs for Alice and Bob, respectively, finite sets~$\mathcal{A}$ and~$\mathcal{B}$ of outputs for Alice and Bob, respectively, a probability distribution of the inputs~$\pi: \mathcal{X}\times \mathcal{Y}\to [0,1]$, and a verification function $V: \mathcal{A}\times \mathcal{B}\times\mathcal{X}\times \mathcal{Y} \to \{0,1\}$.
\end{defn}

A nonlocal game is played by a verifier and two provers, Alice and Bob.
In the game, the verifier samples a pair $(x,y) \leftarrow \pi$ and sends $x$ to Alice and $y$ to Bob.
Alice and Bob respond with $a\in \mathcal{A}$ and $b\in \mathcal{B}$, respectively.
They win if $V (x, y, a, b) = 1$.
The players are not allowed to communicate during the game, but they can agree on a strategy beforehand.
Their goal is to maximize their winning probability.
If we do not specify otherwise, the distribution $\pi$ will be the uniform distribution.

\subsubsection{Quantum tensor product and commuting operator strategies}

\begin{defn} \label{def:tensor_strat}
A \emph{quantum (tensor) strategy} $S$ for a nonlocal game $\mathcal{G}=(\mathcal{X}, \mathcal{Y}, \mathcal{A}, \mathcal{B}, \pi, V)$ is a tuple $S=(\mcH_A,\mcH_B,\ket{\psi}, \{A_{ax}\}, \{B_{by}\})$, consisting of finite-dimensional Hilbert spaces~$\mcH_A$ and~$\mcH_B$, a bipartite state $\ket{\psi}\in \mathcal{H}_A\otimes \mathcal{H}_B$, positive operator-valued measures (POVMs) $\{A_{ax}\}_{a\in \mathcal{A}}$ acting on $\mathcal{H}_A$ for each $x \in \mathcal{X}$ for Alice and POVMs $\{B_{by}\}_{b\in \mathcal{B}}$ acting on $\mathcal{H}_B$ for each $y \in \mathcal{Y}$ for Bob. Often we will drop the Hilbert spaces, and just write $S=(\ket{\psi}, \{A_{ax}\}, \{B_{by}\})$.
\end{defn}

For these families of strategies, we can without loss of generality, restrict to pure states and projective measurements (PVMs).
For a strategy $S=(\ket{\psi}, \{A_{ax}\}, \{B_{by}\})$, the probability of Alice and Bob answering $a, b$ when obtaining $x,y$ is given by $p(a,b|x,y)=\bra{\psi} A_{ax}\otimes B_{by}\ket{\psi}$. Therefore, the \emph{winning probability} of a quantum strategy~$S$ for the nonlocal game $\mathcal{G}$ is given by
\begin{align*}
    \omega_q(S,\mathcal{G})=\sum_{x,y} \pi(x,y) \sum_{a,b}V(a,b,x,y)p(a,b|x,y)=\sum_{x,y} \pi(x,y) \sum_{a,b}V(a,b,x,y)\bra{\psi} A_{ax}\otimes B_{by}\ket{\psi}.
\end{align*}
For a nonlocal game $\mathcal{G}$, we define the \emph{quantum value} $\omega_q^*(\mathcal{G})=\sup_{S}\omega_q(S,\mathcal{G})$ to be the supremum over all quantum tensor strategies for $\mathcal{G}$. A strategy $S$ is called \emph{optimal} for a game $\mathcal{G}$, if $\omega_q(S,\mathcal{G})=\omega_q^*(\mathcal{G})$.

The tensor-product structure is a way of mathematically representing the locality of the players employing a quantum strategy in a nonlocal game. However, there is a more general way to model this nonlocality mathematically.

\begin{defn}\label{commuting_strat}
    A \emph{commuting operator strategy} $\mathcal{S}$ for a nonlocal game $\mathcal{G}=(\mathcal{X}, \mathcal{Y}, \mathcal{A}, \mathcal{B}, \pi, V)$ is a tuple $\mathcal{S}=(\mcH,\ket{\psi}, \{A_{ax}\}, \{B_{by}\})$, consisting of a Hilbert space $\mcH$, a state $\ket{\psi}\in \mathcal{H}$, and two collections of mutually commuting POVMs $\{A_{ax}\}_{a\in \mathcal{A}}$ acting on $\mathcal{H}$ for each $x \in \mathcal{X}$ for Alice and POVMs $\{B_{by}\}_{b\in \mathcal{B}}$ acting on $\mathcal{H}$ for each $y \in \mathcal{Y}$ for Bob, i.e. $[A_{ax},B_{by}]=0$ for all $a,b,x,y\in \mathcal{A}\times \mathcal{B}\times \mathcal{X}\times \mathcal{Y}$. Like for quantum strategies, we will often omit the Hilbert space and write $\mathcal{S}=(\ket{\psi}, \{A_{ax}\}, \{B_{by}\})$ for a commuting operator strategy.
\end{defn}

Again, we can, without loss of generality, restrict to PVMs for this family of strategies. We can also write a similar expression for the winning probability of such a strategy:
\begin{equation}
    \omega_{qc}(\mathcal{S}, \mathcal{G}) = \sum_{x,y} \pi(x,y) \sum_{a,b}V(a,b,x,y)p(a,b|x,y)=\sum_{x,y} \pi(x,y) \sum_{a,b}V(a,b,x,y)\bra{\psi} A_{ax}  B_{by}\ket{\psi}, \label{eq:omegaqc-def}
\end{equation}
as well as define the \emph{commuting operator (also known as the quantum commuting) value} of a nonlocal game $\omega_{qc}^*(\mathcal{G})=\sup_{S}\omega_{qc}(\mathcal{S},\mathcal{G})$ to be the supremum over all commuting operator strategies~$\mathcal{S}$ for $\mathcal{G}$.

\subsubsection{Game algebras and representations}

For each nonlocal game $\mathcal{G}=(\mathcal{X}, \mathcal{Y}, \mathcal{A}, \mathcal{B}, \pi, V)$, we can associate a \emph{game algebra} which is a $C^{\ast}$-algebra that encodes all of the algebraic relations required to be satisfied by any commuting operator strategy. We begin with defining an algebra which encodes the relations forming a PVM:

\begin{defn}
    Given finite sets $\mathcal{X}$, $\mathcal{A}$, the \emph{PVM algebra} $\mathcal{A}_{PVM}^{\mathcal{A}, \mathcal{X}}$ is the universal $C^{\ast}$-algebra generated by orthogonal projectors $\cbrac{ M_{ax} }_{a \in \mathcal{A}, x \in \mathcal{X} }$ such that $\sum_{a \in \mathcal{A}} M_{ax} = I$ for all $x \in \mathcal{X}$. 
\end{defn}

It is not difficult to see that a collection of operators $\cbrac{A_{ax}}_{a \in \mathcal{A}, x \in \mathcal{X}}$ acting on $\mathcal{H}$ is a PVM if and only if these form a representation of $\mathcal{A}_{PVM}^{\mathcal{A}, \mathcal{X}}$. 

\begin{defn}
    For a nonlocal game $\mathcal{G}=(\mathcal{X}, \mathcal{Y}, \mathcal{A}, \mathcal{B}, \pi, V)$, we define the \emph{game algebra for $\mathcal{G}$} as $\mathcal{A}_{\mathcal{G}} := \mathcal{A}_{PVM}^{\mathcal{A}, \mathcal{X}} \otimes_{\mathrm{max}} \mathcal{A}_{PVM}^{\mathcal{B}, \mathcal{Y}}$. Moreover, this tensor product of PVM algebras is equal to the universal $C^{\ast}$-algebra generated by orthogonal projectors $\Apovms, \Bpovms$ such that $\sum_{a \in \mathcal{A}} M_{ax} = I$, $\sum_{b \in \mathcal{B}} M_{by} = I$ for all $x \in \mathcal{X}, y \in \mathcal{Y}$ and $[M_{ax}, N_{by}] = 0$ for all $a,b,x,y$.
\end{defn}

Note that we denote the abstract Alice and Bob operators as $M_{ax}, N_{by}$ and their representations as $A_{ax}, B_{by}$, respectively.

Representations of $\mathcal{A}_{\mathcal{G}}$ are exactly the operators in quantum commuting operator strategies. If we slightly rewrite our expression for the winning probability of a commuting operator strategy (Equation~\eqref{eq:omegaqc-def}), we obtain
\[ \omega_{qc}(\mathcal{S}, \mathcal{G}) = \bra{\psi} \left( \sum_{x,y} \pi(x,y) \sum_{a,b} V(a,b,x,y) A_{ax} B_{by} \right) \ket{\psi}.\]
In this expression, the data about the game $\mathcal{G}$ are encoded in the expression in parentheses, which we call the \emph{game polynomial $P_{\mathcal{G}}$ for the $\mathcal{G}$}:
\[ P_{\mathcal{G}} := \sum_{x, y} \pi(x,y) \sum_{a,b} V(a,b,x,y) M_{ax}N_{by} \in \mathcal{A}_{\mathcal{G}}. \]
In particular, the evaluation of $P_{\mathcal{G}}$ on a state and representation is exactly the winning probability for that particular strategy. We may sometimes abuse notation and call a polynomial which is proportional to $P_{\mathcal{G}}$ the game polynomial. 

\subsubsection{PVMs and observables} \label{sec:observables}

Given a (abstract) PVM $\cbrac{M_{ax}}_{a \in \mathcal{A}}$, with $d$ outcomes, i.e., $\mathcal{A} = [d]$, we can make a transformation to an \emph{observable}, which is a finite order unitary operator. This observable is described as
\[ M_{x} := \sum_{a} \omega_{d}^{a} M_{ax}, \]
where $\omega_{d}$ is the primitive $d$th root of unity. This has finite order $d$
\[ M_{x}^{d} = \brac{ \sum_{a} \omega^{a}_{d} M_{ax} }^{d} = \sum_{a} (\omega^{a}_{d})^{d} M_{ax} = \sum_{a} M_{ax} = I \]
and is unitary as
\[ M_{x}^{\dagger}M_{x} = \brac{ \sum_{a} \omega_{d}^{-a} M_{ax} }\brac{ \sum_{a^{\prime}} \omega_{d}^{a^{\prime}} M_{a^{\prime}x} } = \sum_{a} M_{ax} = I.  \]

Furthermore, given an order $d$ observable $M_{x}$, we can define
\[ M_{ax} := \sum_{j=0}^{d-1} \omega_{d}^{-a \cdot j} M_{x}^{j}. \]
This forms a PVM, and furthermore, these two transformations are inverses of each other, as can be seen by applying standard Fourier identities:
\begin{align*}
M_{ax}^\dagger &= \sum_{j} \omega_d^{+a\cdot j} M_x^{-j} = \sum_{j'} \omega_{d}^{-a \cdot j'} M_x^{j'} = M_{ax} \\
M_{ax}M_{a'x} &= \sum_{j,j'} \omega_d^{-aj - a'j'} M_{x}^{j + j'} \\
&= \sum_{k, j'} \omega_d^{-a(k -j') -a'j'} M_x^{k} \\
&= \sum_{k} \omega_d^{-ak} \left(\sum_{j'} \omega_d^{-(a' -a)j'}\right) M_x^k \\
&= \delta_{a,a'} \sum_k \omega_d^{-ak} M_x^k \\
&= \delta_{a,a'} M_{ax}.
\end{align*}

Because of this correspondence, we can transform between PVMs and observables. For example, given a game polynomial $P_{\mathcal{G}} = \sum_{a,b,x,y} c_{a,b,x,y} M_{ax}N_{by}$, we may write this as
\[ P_{\mathcal{G}} = \sum_{j,k, x,y} d_{j,k,x,y} M_{x}^{j} N_{y}^{k}, \]
for observables $\cbrac{M_{x}}$ and $\cbrac{N_{y}}$.

\subsubsection{Strongly non-signaling algebraic strategies}

In this section, we briefly review the correlation set defined in \cite{2408.06711} together with its equivalence to commuting operator correlations.

\begin{defn}[\cite{2408.06711}] \label{def:non-signaling-strategy}
    A strongly non-signaling algebraic strategy consists of a PVM algebra of Bob's operators $\mathcal{A}_{PVM}^{\mathcal{B}, \mathcal{Y}}$ and positive linear functionals $\phi_{ax} : \mathcal{A}_{PVM}^{\mathcal{B}, \mathcal{Y}} \to \CC$ for each $a\in\alpbt{A}, x\in\alpbt{X}$ such that there exists a ``global'' state $\phi: \mathcal{A}_{PVM}^{\mathcal{B}, \mathcal{Y}} \to \CC$ where
    \be \sum_{a\in \alpbt{A}} \phi_{ax} = \phi \ee
    for all $x \in \mathcal{X}$. Such a strategy gives rise to a correlation
    \be p(a,b|x,y) = \phi_{ax}(N_{by}), \ee
    where $N_{by}$ are the generators of $\mathcal{A}_{PVM}^{\mathcal{B}, \mathcal{Y}}$.
\end{defn}
In fact, \cite{2408.06711} use the POVM algebra to define these strategies, but we can use the PVM algebra without loss of generality.

As above, we can define the \emph{strongly non-signaling algebraic value} of a nonlocal game $\omega_{sns}^{\ast}(\mathcal{G}) = \sup_{S} \omega_{sns}(\mathcal{S}, \mathcal{G})$ to be the supremum over all strongly non-signaling algebraic strategies $\mathcal{S}$ for $\mathcal{G}$.

\cite{2408.06711} show that the quantum commuting value matches the strongly non-signaling algebraic value for any nonlocal game. In fact, they show that the correlation sets are equal:

\begin{thm}[\cite{2408.06711}] \label{lem:non-signaling-commuting}
    Let $\game{G}$ be a nonlocal game. A correlation can be induced by a non-signaling algebraic strategy if and only if it can be induced by a commuting operator strategy. So then,
\be \omega_{sns}^{\ast}(\game{G}) = \omega_{qc}^{\ast}(\game{G}). \ee
\end{thm}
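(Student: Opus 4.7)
The plan is to prove the two inclusions of correlation sets separately. The easy direction---every commuting operator correlation is strongly non-signaling algebraic---I would handle directly. Given a commuting operator strategy $(\mathcal{H}, \ket{\psi}, \{A_{ax}\}, \{B_{by}\})$, let $\pi\colon \mathcal{A}_{PVM}^{\mathcal{B},\mathcal{Y}} \to \mathcal{B}(\mathcal{H})$ be the representation $\pi(N_{by}) = B_{by}$, take $\phi(X) = \bra{\psi}\pi(X)\ket{\psi}$ as the global state, and set $\phi_{ax}(X) = \bra{\psi}A_{ax}\pi(X)\ket{\psi}$. Because each $A_{ax}$ commutes with every operator in $\pi(\mathcal{A}_{PVM}^{\mathcal{B},\mathcal{Y}})$, we have $\phi_{ax}(X^*X) = \|A_{ax}^{1/2}\pi(X)\ket{\psi}\|^2 \geq 0$, so each $\phi_{ax}$ is positive; the relation $\sum_a A_{ax} = I$ immediately yields $\sum_a \phi_{ax} = \phi$; and evaluating at $X = N_{by}$ recovers the correlation $p(a,b|x,y)$.

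The opposite direction---building a commuting operator strategy from a strongly non-signaling algebraic one---is the substantive part, and I would carry it out via a GNS--Radon--Nikodym--Naimark chain. First, apply the GNS construction to $(\mathcal{A}_{PVM}^{\mathcal{B},\mathcal{Y}}, \phi)$ to obtain a Hilbert space $\mathcal{H}_\phi$, a cyclic vector $\ket{\Omega}$, and a representation $\pi$ with $\phi(X) = \bra{\Omega}\pi(X)\ket{\Omega}$. Next, for each $(a,x)$, note that $\phi_{ax}$ is dominated by $\phi$, since $\phi - \phi_{ax} = \sum_{a' \neq a}\phi_{a'x}$ is a sum of positive functionals. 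By the non-commutative Radon--Nikodym theorem---which one proves by extending the bounded sesquilinear form $(X,Y) \mapsto \phi_{ax}(X^*Y)$ from $\pi(\mathcal{A}_{PVM}^{\mathcal{B},\mathcal{Y}})\ket{\Omega}$ to $\mathcal{H}_\phi$ and checking that the associated operator commutes with each $\pi(Z)$---there is an operator $T_{ax} \in \pi(\mathcal{A}_{PVM}^{\mathcal{B},\mathcal{Y}})'$ with $0 \leq T_{ax} \leq I$ such that $\phi_{ax}(X) = \bra{\Omega}T_{ax}\pi(X)\ket{\Omega}$. Summing over $a$ and using the commutant property together with cyclicity of $\ket{\Omega}$, the identity $\sum_a \phi_{ax} = \phi$ forces $\sum_a T_{ax} = I$ on all of $\mathcal{H}_\phi$. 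Thus $\{T_{ax}\}_a$ is a POVM in $\pi(\mathcal{A}_{PVM}^{\mathcal{B},\mathcal{Y}})'$ for each $x$.

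The final step is to promote these commuting POVMs to commuting PVMs by Naimark dilation, using a separate ancilla factor per question to accommodate all $x$ simultaneously. Set $\tilde{\mathcal{H}} = \mathcal{H}_\phi \otimes \bigotimes_{x\in\mathcal{X}}\mathbb{C}^{|\mathcal{A}|}$; for each $x$, the isometry $V_x\colon \ket{\xi} \mapsto \sum_a \sqrt{T_{ax}}\ket{\xi} \otimes \ket{a}$ into the $x$-th ancilla factor is built from operators in $\pi(\mathcal{A}_{PVM}^{\mathcal{B},\mathcal{Y}})'$, so the resulting PVMs $\{\tilde\Pi_{ax}\}_a$ commute with Bob's extended operators $\tilde B_{by} := \pi(N_{by}) \otimes I_{\mathcal{K}}$. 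With $\ket{\tilde\Omega} = \ket{\Omega} \otimes \bigotimes_x \ket{0_x}$, a direct computation gives $\bra{\tilde\Omega}\tilde\Pi_{ax}\tilde B_{by}\ket{\tilde\Omega} = \bra{\Omega}T_{ax}\pi(N_{by})\ket{\Omega} = \phi_{ax}(N_{by})$, producing a commuting operator strategy that reproduces the given correlation. Equality of values follows. The main obstacle is the Radon--Nikodym step: one has to verify that the derivative $T_{ax}$ lies in the commutant rather than merely the bicommutant, and that the completion $\sum_a T_{ax} = I$ extends from a statement manifestly true on the cyclic orbit $\pi(\mathcal{A}_{PVM}^{\mathcal{B},\mathcal{Y}})\ket{\Omega}$ to a global identity---both handled by inserting the commutant property before invoking cyclicity.
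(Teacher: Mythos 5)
The paper itself does not prove this statement---it is imported from \cite{2408.06711}---so there is no internal proof to compare against; your argument is, in substance, the standard one (and the one used in the cited source). The easy inclusion, restricting a commuting-operator state to Bob's algebra and cutting with $A_{ax}$, is fine, and the converse via GNS on $\phi$ together with the commutant Radon--Nikodym theorem for the dominated functionals $\phi_{ax}\leq\phi$ correctly produces a POVM $\{T_{ax}\}_{a}$ in $\pi(\mathcal{A}_{PVM}^{\mathcal{B},\mathcal{Y}})'$ with $\sum_a T_{ax}=I$ and $\bra{\Omega}T_{ax}\pi(N_{by})\ket{\Omega}=\phi_{ax}(N_{by})$; your handling of boundedness, membership in the commutant, and the extension of $\sum_a T_{ax}=I$ from the cyclic orbit are all correct.

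The one point to flag is your final Naimark step, which is simultaneously unnecessary and the only under-justified part of the argument. It is unnecessary because \Cref{commuting_strat} defines commuting-operator strategies with POVMs, so $(\mathcal{H}_\phi,\ket{\Omega},\{T_{ax}\},\{\pi(N_{by})\})$ already \emph{is} a commuting-operator strategy reproducing the correlation, and you can stop there. As written, it is under-justified: to evaluate your projections $\tilde\Pi_{ax}$ on the fixed product state $\ket{\Omega}\otimes\bigotimes_x\ket{0_x}$ (which is not in the range of $V_x$), you implicitly need to extend the isometry $\xi\otimes\ket{0_x}\mapsto V_x\xi$ to a \emph{unitary} that still lies in the commutant of Bob's extended algebra, and the existence of such a unitary completion inside that von Neumann algebra requires a Murray--von Neumann equivalence of the complementary projections that is not automatic. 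Either drop the dilation entirely or replace it with a citation to the standard fact that POVM and PVM commuting-operator correlations coincide.
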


\subsection{Compiled games} \label{sec:prelims-compiled-games}
\begin{defn}\label{def:compiledgame}
A \emph{compiled game} $\mathcal{G}_\mathrm{comp}$ consists of a nonlocal game $\mathcal{G}$ and a quantum homomorphic encryption scheme $\mathrm{QHE}=(\mathrm{Gen}, \mathrm{Enc}, \mathrm{Eval}, \mathrm{Dec})$. However, unlike a standard nonlocal game, it is played by a verifier and a single prover. The behaviour of the interaction is described as follows:
\begin{itemize}
    \item[1.] The verifier samples $(x,y)\leftarrow \pi$, $\mathrm{sk}\leftarrow \mathrm{Gen}(1^\lambda)$, and $c\leftarrow \mathrm{Enc}(\mathrm{sk},x)$. The verifier sends~$c$ to the prover.
    \item[2.] The prover replies with some classical ciphertext~$\alpha$.
    \item[3.] The verifier sends~$y$ (in the clear) to the prover.
    \item[4.] The prover replies with some classical message~$b$.
    \item[5.] The verifier computes $a:=\mathrm{Dec}(\mathrm{sk},\alpha)$ and accepts if and only if~$a\in \mathcal{A}$, $b\in \mathcal{B}$, and $V(a,b,x,y)=1$.
\end{itemize}
\end{defn}

\begin{defn}
A \emph{quantum strategy for a compiled game} is a tuple $(\mcH,\ket{\psi}, \{A_{\alpha c}\}, \{B_{by}\})$ consisting of a Hilbert space $\mcH$, an efficiently
(in QPT) preparable state $\ket{\psi}\in \mathcal{H}$, operators~$A_{\alpha c}=U_{\alpha c}P_{\alpha c}$ acting on~$\mathcal{H}$, where~$U_{\alpha c}$ is a QPT-measurable unitary and~$\{P_{\alpha c}\}_{\alpha\in \Lambda}$ (where $\Lambda$ is the set of outcomes), is a QPT-measurable PVM for all~$c\in \Enc(sk,x)$, $x\in \mathcal{X}$, as well as QPT-measurable PVMs~$\{B_{by}\}_{b\in \mathcal{B}}$ acting on~$\mathcal{H}$ for all~$y\in \mathcal{Y}$.
\end{defn}

For convenience, we let $\ket{\psi_{\alpha c}}:=A_{\alpha c}\ket{\psi}$ be the \emph{unnormalized post-measurement state} after Step~2 in \Cref{def:compiledgame}. For a strategy for the compiled game $S=(\ket{\psi}, \{A_{\alpha c}\}, \{B_{by}\})$, the probability of Alice and Bob answering $a, b$ after being given $x,y$ is denoted
\begin{align*}
    p(a,b|x,y)=\mathop{\mathbb{E}}_{c\leftarrow\mathrm{Enc}(x)}\sum_{\alpha; \mathrm{Dec}(\alpha)=a}\bra{\psi} (A_{\alpha c})^*B_{by}A_{\alpha c}\ket{\psi}
    =\mathop{\mathbb{E}}_{c\leftarrow\mathrm{Enc}(x)}\sum_{\alpha; \mathrm{Dec}(\alpha)=a}\bra{\psi_{\alpha c}}B_{by}\ket{\psi_{\alpha c}}
\end{align*}
It follows that the \emph{winning probability} of the quantum strategy~$S$ for the compiled game~$\mathcal{G}_\mathrm{comp}$ is given by
\begin{align}\label{eq:compiled winning prob}
    \omega_q(S,\mathcal{G}_\mathrm{comp})=\sum_{x,y} \pi(x,y) \sum_{a,b}V(a,b,x,y)\mathop{\mathbb{E}}_{c\leftarrow\mathrm{Enc}(x)}\sum_{\alpha; \mathrm{Dec}(\alpha)=a}\bra{\psi_{\alpha c}}B_{by}\ket{\psi_{\alpha c}}.
\end{align}

\begin{thm}\label{thm:KLVY}(\cite[Theorem 3.2]{KLVY})
If $\mathcal{G}_\mathrm{comp}$ is a compiled nonlocal game with underlying nonlocal game~$\mathcal{G}$. Then, there exists a compiled quantum strategy $\mathcal{S}$ for $\mathcal{G}_\mathrm{comp}$ and a negligible function $\eta(\lambda)$ such that
\begin{align*}
    \omega_q(\mathcal{S}, \mathcal{G}_\mathrm{comp})\geq \omega_{q}^{\ast}(\mathcal{G})-\eta(\lambda),
\end{align*}
where $\eta(\lambda)$ depends on $S$.
\end{thm}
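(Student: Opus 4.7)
The plan is to take a near-optimal quantum tensor strategy for $\mathcal{G}$ and construct the honest compiled strategy by homomorphically evaluating Alice's measurement on her half of the shared state; the loss in winning probability will then be bounded by a single application of the correctness-with-auxiliary-input property of $\mathrm{QHE}$.

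First I would fix a quantum tensor strategy $S = (\mathcal{H}_A, \mathcal{H}_B, \ket{\psi}_{AB}, \{A_{ax}\}, \{B_{by}\})$ for $\mathcal{G}$ whose value approaches $\omega_q^*(\mathcal{G})$. Because $\omega_q^*$ is a supremum over tensor strategies, $S$ may be taken to be finite-dimensional with $\ket{\psi}_{AB}$ efficiently preparable and with $\{A_{ax}\}$, $\{B_{by}\}$ realizable as QPT-measurable PVMs; this is the place where the dependence of $\eta$ on $S$ enters, since the runtime of the honest prover and the number of qubits on which $\mathrm{QHE}$ must operate scale with the dimensions of $\mathcal{H}_A, \mathcal{H}_B$ and the complexity of these measurements. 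For each $x \in \mathcal{X}$, let $C_x$ denote the classical-output QPT circuit that measures register $A$ with $\{A_{ax}\}_a$ and outputs the label $a$.

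Next I would spell out the honest compiled strategy $\mathcal{S}$: the prover prepares $\ket{\psi}_{AB}$; upon receiving $c \leftarrow \mathrm{Enc}(\mathrm{sk}, x)$, it runs $\mathrm{Eval}$ with circuit $C_x$, register $A$, and ciphertext $c$, obtaining a classical ciphertext $\alpha$ which it returns; upon receiving $y$, it measures register $B$ with $\{B_{by}\}$ and returns the outcome $b$. In the formalism of Definition~\ref{def:compiledgame}, the induced operators $\{A_{\alpha c}\}$ are exactly the Kraus operators of this homomorphic evaluation step, which is QPT by construction.

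The analysis is then a direct application of correctness-with-auxiliary-input. Taking the circuit $C_x$, the auxiliary state $\ket{\psi}_{AB}$, and the message $x$, Game~1 produces $(a, \text{register } B)$ where $a$ is drawn from the honest measurement of $\{A_{ax}\}$ on $A$, while Game~2 produces $(\tilde{a}, \text{register } B)$ where $\tilde{a} = \mathrm{Dec}(\mathrm{sk}, \alpha)$; these two joint states are within negligible trace distance. Applying the QPT-measurable PVM $\{B_{by}\}$ to register $B$ in each game then gives joint distributions on $(a, b)$ versus $(\tilde{a}, b)$ that still differ by at most a negligible amount in total variation. Substituting into \eqref{eq:compiled winning prob} and summing over the finite sets $\mathcal{X}, \mathcal{Y}, \mathcal{A}, \mathcal{B}$ yields
\[
 \omega_q(\mathcal{S}, \mathcal{G}_\mathrm{comp}) \;\geq\; \omega_q(S, \mathcal{G}) - \mathrm{negl}(\lambda),
\]
which by choice of $S$ is at least $\omega_q^*(\mathcal{G}) - \eta(\lambda)$ for a negligible $\eta$ depending on $S$.

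The only real obstacle is arranging that $S$ can be implemented within QPT so that its honest execution is actually compatible with the QPT adversary model of the compiled game; once a finite-dimensional, efficiently-implementable $S$ is fixed, the rest is a one-step hybrid and the negligibility of $\eta$ follows directly from QHE correctness, with no need for any CPA argument (which is instead reserved for soundness).
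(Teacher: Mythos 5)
The paper does not prove this statement at all---it is imported verbatim as \cite[Theorem 3.2]{KLVY}---so there is no internal proof to compare against. Your proposal correctly reconstructs the standard completeness argument from that reference: the honest prover homomorphically evaluates Alice's measurement circuit $C_x$ on her half of a near-optimal tensor strategy, and a single invocation of correctness-with-auxiliary-input (with $\ket{\psi}_{AB}$ as the auxiliary register) shows that the joint state of the decrypted answer and register $B$ is negligibly close to the honest one; since trace distance does not increase under the subsequent measurement of $\{B_{by}\}$ and under averaging over $\pi$, the value is preserved up to a negligible term. Your observation that only correctness, not CPA security, is used here is also right.

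One point deserves more care. You ``fix a quantum tensor strategy $S$ whose value approaches $\omega_q^*(\mathcal{G})$,'' but a single fixed finite-dimensional strategy has a fixed value $\omega_q^*(\mathcal{G})-\epsilon_S$, and if the supremum over tensor strategies is not attained then $\epsilon_S>0$ is a constant independent of $\lambda$, so $\eta(\lambda)=\epsilon_S+\negl(\lambda)$ is not negligible. To get the statement exactly as written one must either assume the supremum is attained, or let the prover's strategy be a $\lambda$-indexed family $S_\lambda$ with $\epsilon_{S_\lambda}\to 0$ super-polynomially fast---and in the latter case one must check that the negligible function supplied by QHE correctness remains negligible as the circuits $C_x$ and states grow with $\lambda$, which the correctness definition as stated (quantified over a fixed circuit) does not automatically give. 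This is a known looseness in how the theorem is usually quoted rather than a flaw specific to your argument, but it is the one place where your ``one-step hybrid'' claim hides a genuine quantifier issue.
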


\subsection{Block encodings and efficient measurement}

In analyzing strategies for compiled nonlocal games, a common move is to use the security property of the QHE scheme to argue that two different polynomials in the strategy operators have close expectation values on the state. In order to carry this out, \cite{NZ23} and subsequent works have used the formalism of \emph{block encodings} to show that polynomials in QPT-implementable measurement operators are themselves QPT-implementable. Below, we state and sketch the proof of the main theorem we will need for our analyses, without going into any detail on the block encoding formalism. For a more systematic treatment we refer the reader to~\cite{NZ23} and~\cite{Cui+24}.

\begin{thm}\label{thm:block-encoding}
    Let $\mathcal{G}_{\mathrm{comp}}$ be a compiled nonlocal game and let $S = (\mathcal{H}, \ket{\psi}, \{A_{\alpha c}\}, \{B_{by}\})$ be a quantum strategy for it. Let $w$ be any Hermitian polynomial in the operators $\{B_{by}\}$. Moreover, let $D_1, D_2$ be any two distributions over plaintext Alice questions that are sampleable in QPT. Then there exists a negligible function $\eta(\lambda)$ such that 
    \[ \left| \E_{x \gets D_1} \E_{c \gets \Enc(x)} \sum_{\alpha} \bra{\psi} A_{\alpha c}^* w A_{\alpha c} \ket{\psi} - \E_{x \gets D_2} \E_{c \gets \Enc(x)} \sum_{\alpha} \bra{\psi} A_{\alpha c}^* w A_{\alpha c} \ket{\psi} \right| \leq \eta(\lambda). \]
\end{thm}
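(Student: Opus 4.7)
The plan is to reduce the claim to the CPA security of the QHE scheme, using the block-encoding machinery of \cite{NZ23, Cui+24} to show that the quantity $f(c) := \sum_\alpha \bra{\psi} A_{\alpha c}^{\ast} w A_{\alpha c} \ket{\psi}$ can be estimated in QPT given only the ciphertext $c$ (and in particular, without the secret key). Once this is established, any non-negligible gap between $\E_{x \gets D_1} \E_{c \gets \Enc(x)} f(c)$ and $\E_{x \gets D_2} \E_{c \gets \Enc(x)} f(c)$ immediately produces a QPT distinguisher against CPA security, which is impossible.

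\emph{Step 1: QPT estimation of $f(c)$.} First I would build, for the fixed polynomial $w$, a QPT algorithm that on input $c$ outputs an estimate of $f(c)$ to inverse-polynomial precision. The ingredients are: (a) $\ket{\psi}$ is QPT-preparable; (b) each $A_{\alpha c} = U_{\alpha c} P_{\alpha c}$ can be applied coherently in QPT, with the label $\alpha$ recorded on an ancilla via the QPT-measurable PVM $\{P_{\alpha c}\}$ followed by the QPT unitary $U_{\alpha c}$; and (c) $w$, being a Hermitian polynomial in the QPT-measurable observables $\{B_{by}\}$, admits a QPT block encoding with constant normalization via the standard linear-combination-of-unitaries constructions. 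Composing these via the block-encoding calculus of \cite{NZ23, Cui+24} yields a block encoding of $\sum_\alpha A_{\alpha c}^{\ast} w A_{\alpha c}$, whose expectation on $\ket{\psi}$ is $f(c)$; this is then estimable in QPT by amplitude estimation or by sampling from Hadamard-test-style circuits.

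\emph{Step 2: Reduction to CPA security.} Suppose for contradiction there exists a non-negligible function $\mu(\lambda)$ such that the two expectations differ by at least $\mu(\lambda)$ for infinitely many $\lambda$. Define a QPT adversary $\mathcal{A}$ for the CPA game which, on challenge ciphertext $c^{\ast}$, uses Step~1 to compute $\tilde f(c^{\ast})$ to precision $\mu(\lambda)/4$, compares it against the midpoint of the two population means (which $\mathcal{A}$ can itself estimate using its encryption oracle and the QPT-sampleability of $D_1, D_2$), and outputs the corresponding bit. Since CPA security is formulated for fixed message pairs $(x_0, x_1)$, a standard hybrid/averaging argument over the supports of $D_1$ and $D_2$ lifts the single-message guarantee to the distributional setting, contradicting the assumption. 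This identifies $\eta(\lambda)$ with (a constant multiple of) the CPA distinguishing advantage, which is negligible.

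\emph{Main obstacle.} The principal technical subtlety is ensuring that the composed block encoding of $\sum_\alpha A_{\alpha c}^{\ast} w A_{\alpha c}$ has bounded (in fact constant) normalization, so that amplitude estimation reaches inverse-polynomial precision in QPT. This is exactly where the block-encoding framework of \cite{NZ23, Cui+24} does the work: since $w$ is fixed independently of $\lambda$, its LCU encoding has constant normalization; and the map $X \mapsto \sum_\alpha A_{\alpha c}^{\ast} X A_{\alpha c}$ is a CPTP channel (as $\{P_{\alpha c}\}$ is a PVM and each $U_{\alpha c}$ is unitary), so it composes cleanly with a block encoding of $w$. The remaining pieces, namely coherently branching on $\alpha$, combining several block encodings multiplicatively, and running the hybrid across the supports of $D_1$ and $D_2$, are routine given the prior work.
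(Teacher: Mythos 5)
Your proposal is correct and follows essentially the same route as the paper's proof sketch: use the block-encoding machinery of \cite{NZ23, Cui+24} to show that $w$ (and hence the quantity $\sum_{\alpha}\bra{\psi}A_{\alpha c}^{*}wA_{\alpha c}\ket{\psi}$) is QPT-estimable with $\Theta(1)$ normalization given only the ciphertext, and then invoke CPA security (via the hybrid argument of Lemma~2.21 of \cite{Cui+24}) to conclude the two expectations are negligibly close. The minor presentational difference --- composing a block encoding of $w$ with the channel $X \mapsto \sum_{\alpha}A_{\alpha c}^{*}XA_{\alpha c}$ rather than first measuring $\alpha$ and then approximately measuring $w$ on the post-measurement state --- does not change the substance of the argument.
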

\begin{proof}[Proof sketch]
    By Lemmas~2.17 and~2.18 of~\cite{Cui+24}, it follows that $w$ has a block encoding with scale factor $\Theta(1)$. Thus, there exists a QPT-implementable POVM $\{M_\beta\}_\beta$  that approximately measures $w$ by Lemma~2.20 of~\cite{Cui+24}. This implies the conclusion by the same argument as the proof of Lemma~2.21 of~\cite{Cui+24}.
\end{proof}

\subsection{The NPA and SoS hierarchies} \label{subsec:npa}
In this section, we will give a brief overview of the NPA hierarchy and its dual, the sum-of-squares hierarchy. 

\subsubsection{Sum-of-squares method} \label{subsec:sos}

A common method to upper-bound the quantum commuting value of nonlocal games is the \emph{sum-of-squares (SoS) method}. 

Let $\mathcal{G}$ be a nonlocal game with game polynomial $P_{\mathcal{G}}$. If we can write a SoS decomposition of the form
\be 
    \omega^{\prime} I - P_{\mathcal{G}} = \sum_i \lambda_i r_i^\dagger r_i \in \mathcal{A}_{\mathcal{G}}
\ee
where $\lambda_i$ are positive coefficients and $r_i, s_j$ are polynomials of PVM elements $\{M_{ax}\}_{a \in\alpbt{A}, x\in \alpbt{X}}$ and $\{N_{by}\}_{b \in \alpbt{B}, y\in \alpbt{Y}}$, then taking the expectation of the above equation with respect to the quantum state $\ket{\psi}$, we get
\[
    \omega^{\prime} - \bra{\psi} P_{\mathcal{G}} \ket{\psi} = \sum_i \underbrace{\lambda_i \bra{\psi} r_i^\dagger r_i \ket{\psi}}_\text{positive} \geq 0 \implies 
    \bra{\psi} P_{\mathcal{G}} \ket{\psi} \leq \omega^{\prime}.
\]
Hence, this method gives us a certificate that the value of the game is at most $\omega^{\prime}$.

\subsubsection{Moment matrix hierarchy (primal NPA hierarchy)}
The NPA hierarchy is a hierarchy of semidefinite programs that can be used to bound the quantum value of nonlocal games. The hierarchy was introduced by Navascues, Pironio, and Acin in \cite{0803.4290}. Each level $d$ of the hierarchy bounds on the quantum value by optimizing over possible states (positive-definite moment matrix) which reproduce the observed correlations between degree $d$ operators. 

Let the game polynomial be $\gp{G} = \sum_{a,b,x,y} c_{abxy} M_{ax} N_{by}$. For the $d$-th level of the NPA hierarchy, let $\vec{b}^d := \cbrac{I, M_{ax}, N_{by} \,:\, a,b,x,y}^{d} \subset \mathcal{A}_{\mathcal{G}}$ be a basis of monomials of degree at most $d$ in the PVM elements of Alice and Bob. Let $\Gamma^{d}$ be a \emph{moment matrix} with indices in $\mathbf{b}^{d}$. In particular, $\Gamma^{d}$ has dimension $\abs{\mathbf{b}^{d}} \times \abs{\mathbf{b}^{d}}$. Let $\mathbf{P}_{\mathcal{G}}$ be a $\abs{\mathbf{b}^{d}} \times \abs{\mathbf{b}^{d}}$ matrix such that $\innerprod{\mathbf{P}_{\mathcal{G}} }{\Gamma^{d}} = \sum_{a,b,x,y} c_{abxy} \Gamma^{d}_{M_{ax}, N_{by}}$.
The $d$-th level of the NPA hierarchy is defined as
\begin{equation} \label{eq:npa_formulation}
\begin{alignedat}{2}
\mathfrak{p}^{d}(\mathcal{G}) \quad = \quad & \max_{\mathclap{\Gamma^{d} \in \CC^{ \abs{\mathbf{b}^{d}} \times \abs{\mathbf{b}^{d}} } }} \qquad && \innerprod{ \mathbf{P}_{\mathcal{G}} }{\Gamma^d} \\
& \,\,\text{s.t.} \quad && \innerprod{\II_{I,I}}{\Gamma^d} = 1, \\
& && \innerprod{\mat{C}_k}{\Gamma^d} = 0, \quad \forall k \in \cbrac{1, \cdots, K} \\
& && \,\, \Gamma^d \succeq 0.
\end{alignedat}
\end{equation}

Here, $\mat{C}_k$ are constraint matrices which make sure the moments $\Gamma^d_{s_1, t_1} = \Gamma^d_{s_2, t_2}$, whenever $s_{1}^\dagger t_{1} = s_{2}^\dagger t_{2}$ for $s_{1}, t_{1}, s_{2}, t_{2} \in \mathbf{b}^{d}$. $\II_{I,I}$ is a matrix which is $1$ at the top-left corresponding to the elements $s=I, t=I$ and zero otherwise. The positivity constraint $\Gamma^d \succeq 0$ ensures that the moment matrix is positive semidefinite. The optimal value of the $d$-th level of the NPA hierarchy is denoted as $\mathfrak{p}^{d}(\mathcal{G})$. The NPA hierarchy converges to the quantum commuting value of the game, i.e., $\lim_{d \to \infty} \mathfrak{p}^{d}(\mathcal{G}) = \omega_{qc}(\mathcal{G})$ \cite{0803.4290}.\\

\subsubsection{Sum-of-squares hierarchy (dual NPA hierarchy)}
Instead of finding a moment matrix to maximize the expected value of the game polynomial, we can minimize $\nu$ such that $\nu I - \gp{G}$ has a SoS decomposition. Then, $\nu^*$ is an upper bound to the quantum value of the game as $\bra{\Psi} (\nu^* I - \gp{G} )\ket{\Psi} \geq 0 \implies \bra{\Psi} \gp{G} \ket{\Psi} \leq \nu^*$.\\

Formulating this procedure as an SDP problem, we get a hierarchy of semidefinite programs which is dual to the NPA hierarchy. This hierarchy of SoS certificates, parametrized by the degree-$d$ of the certificate, was formulated by \cite{0803.4373}. Mathematically, we can write the $d$-th level of the SoS hierarchy as

\begin{equation} \label{eq:dual_formulation}
    \begin{alignedat}{2}
        \mathfrak{d}^{d}(\mathcal{G}) \quad = \quad & \min_{ \mathclap{ \nu, \cbrac{y_k}_{k=1}^{K} } } \qquad && \nu \\
        & \,\,\text{s.t.} \quad && \nu \II_{1, 1} + \sum_{k=1}^{K} y_k \mat{C}_k - \mat{G}_p \succeq 0.
    \end{alignedat}
\end{equation}

Here, $\II_{1,1}, \mat{G}_p, \mat{C}_k$ have the same definition as in the NPA formulation \eqref{eq:npa_formulation} which are all defined for basis $\vec{b}^d$. Note that 
\bearr 
(\vec{b}^{d})^\dagger \II_{1,1} \vec{b}^{d} &= I,\\
(\vec{b}^{d})^\dagger \mat{G}_p \vec{b}^{d} &= \gp{G},\\
(\vec{b}^{d})^\dagger \mat{C}_k \vec{b}^{d} &= 0,
\eearr
in the algebra $\mathcal{A}_{\mathcal{G}}$, where here we are viewing $\mathbf{b}^{d}$ as a vector of monomials. Hence, solving this SDP and getting values $\mathfrak{d}^{d}(\mathcal{G})$ and $\cbrac{y_{k}}_{k =1}^{K}$ gives us
\[ \mathfrak{d}^{d}(\mathcal{G}) \II_{1, 1} + \sum_{k=1}^{K} y_k \mat{C}_k - \mat{G}_p = \sum_{i} \lambda_{i} \Pi_{i}^{\dagger}\Pi_{i} \succeq 0, \]
where the equality follows from the spectral decomposition. Then, evaluating both sides on $\mathbf{b}^{d}$ gives us the SoS decomposition as outlined in \Cref{subsec:sos}.

By duality, we get that $\mathfrak{p}^{d}(\mathcal{G}) \leq \mathfrak{d}^{d}(\mathcal{G})$ for all $d$. The SoS hierarchy converges to the quantum value of the game, i.e., $\lim_{d \to \infty} \mathfrak{d}^{d}(\mathcal{G}) = \omega_q(\mathcal{G})$ \cite{0803.4373}.

\section{Bounds for all compiled games through nice sum-of-squares decomposition} \label{chap:niceness}

In this section, building upon \cite{NZ23,Cui+24}, we will generalize the notion of {\niceness} of SoS decomposition. The SoS method gives us a framework to show upper bounds on the quantum value of nonlocal games, whilst nice SoS lets us show bounds on the value of \emph{compiled} nonlocal games. Later, we will define a convergent SDP hierarchy which will search exclusively over these nice SoS certificates, and we will show that this hierarchy also converges to the quantum value of the game. This gives us a hierarchy of upper bounds on the value of compiled nonlocal games.

\subsection{Generalizing the concept of nice sum-of-squares decomposition}

The quantum soundness of the KLVY compiler was proved for the CHSH game by \cite{NZ23}. Their proof relies on the \emph{niceness} of the SoS decomposition of the games. As done in \cite{NZ23}, one can define a \qq{pseudo-expectation} such that the pseudo-expectation of the game polynomial is negligibly far from the expectation of the compiled game.

Their paper defined the pseudo-expectation for Alice and Bob operators when both the questions and answers are binary. This definition was restricted to terms of at most degree 2, consisting of at most one Alice and one Bob observable. This pseudo-expectation was generalized for arbitrary monomials in $A_x, B_0, B_1,$ for a fixed $x \in \alpbt{X}$ by  \cite{2406.04986}. Their proof can be adapted to generalize this further for arbitrary monomials of the POVM elements $\{M_{ax}\}_{a \in\alpbt{A}}, \{N_{by}\}_{b \in \alpbt{B}, y \in \alpbt{Y}}$ still restricted to a fixed $x$. 

Firstly, let us define what a nice SoS decomposition is:
\begin{defn} \label{def:niceness}
    Let $G$ be a game polynomial. Assume that $G$ has the following sum-of-squares decomposition
    \begin{equation}
        G = \sum_{i=1}^{n} \lambda_i r_i^\dagger r_i + \sum_{j=1}^{m} \mu_j s_j,
    \end{equation}
    where $\lambda_i \geq 0$, $r_i$ are polynomials in the variables $M_{ax}, N_{by}$, and $s_j$ are constraint polynomials which should evaluate to $0$ for the game. We say that the SoS decomposition is \emph{nice} if each $r_i$ is a polynomial in the POVM elements $\cbrac{M_{a_i  x}}_{a_i \in \alpbt{A}}, \cbrac{N_{b_jy_k}}_{b_j \in \alpbt{B}, y_k \in \alpbt{Y}}$ for a fixed $x$. Specifically, each $r_i$ contains projectors corresponding to only one question $x$ of Alice.
\end{defn}

\subsubsection{Defining the pseudo-expectation for nice polynomials}

Let $\mathcal{G}$ be a nonlocal game and let $\strat{S} = (\ket{\psi}, \cbrac{M_{\alpha \chi}},\cbrac{N_{by}})$ be a strategy for its compilation. As in \Cref{sec:prelims-compiled-games} above, let 
\[ \ket{\Psi_{\alpha  \chi}} = M_{\alpha\chi} \ket{\Psi} \] 
be the state of the prover after the first round of the game. We will use this strategy to define a \emph{pseudo-expectation} operator mapping formal polynomials in the variables $M_{a\chi}$, $N_{by}$ to complex numbers. We denote pseudo-expectation by $\psE{\cdot}$ and define it as follows:
\begin{defn}
Treat the POVM elements of Alice and Bob $\cbrac{M_{a x}}_{a_i \in \alpbt{A}}, \cbrac{N_{by}}_{b \in \alpbt{B}, y \in \alpbt{Y}}$ as formal variables that follow the commutation relations and orthonormality relations: 
    \bearr \sbrac{M_{a x}, N_{by}} = 0 \\
    M_{a_i  x} M_{a_jx} = \begin{cases}
        M_{a_ix} \quad \text{if $a_i = a_j$}\\
        0 \quad \text{otherwise}
    \end{cases}\\
    N_{b_i  y} N_{b_j y} = \begin{cases}
        N_{b_i y} \quad \text{if $b_i = b_j$}\\
        0 \quad \text{otherwise}
    \end{cases}
    \eearr
    For a strategy $\strat{S} =  (\ket{\Psi}, \cbrac{M_{\alpha  \chi}},\cbrac{N_{b y}})$, we define the pseudo-expectation $\psE{\cdot}$ as an operator over formal polynomials of these variables, with the following properties:
    \begin{enumerate}
        \item $\psE{\cdot}$ is linear.
        \item $\psE{\II} = 1$.
    \end{enumerate}
    On monomials, we define the value of this pseudo-expectation as follows:
    \bearr &\psE{w_B\brac{\cbrac{N_{b_j y_k}}_{b_j \in \alpbt{B}, y_k \in \alpbt{Y}}}} \coloneq \E_{x \in \alpbt{X}}\E_{\chi: \mathsf{Enc}(x) = \chi} \sum_{a \in \alpbt{A}} \sum_{\alpha: \Dec\brac{\alpha} = a} \bra{\Psi_{\alpha  \chi}} w_B\brac{\cbrac{N_{b_j y_k}}_{b_j \in \alpbt{B}, y_k \in \alpbt{Y}}} \ket{\Psi_{\alpha  \chi}},\\
    &\psE{\brac{M_{a_i  x}} w_B\brac{\cbrac{N_{b_j y_k}}_{b_j \in \alpbt{B}, y_k \in \alpbt{Y}}}} \coloneq \E_{\chi: \mathsf{Enc}(x) = \chi} \sum_{\alpha: \Dec\brac{\alpha} = a_i} \bra{\Psi_{\alpha  \chi}}w_B\brac{\cbrac{N_{b_j y_k}}_{b_j \in \alpbt{B}, y_k \in \alpbt{Y}}} \ket{\Psi_{\alpha  \chi}}, \\
    &\psE{\brac{M_{a_i  x}M_{a_j  x}} w_B\brac{\cbrac{N_{b_j y_k}}_{b_j \in \alpbt{B}, y_k \in \alpbt{Y}}}} \coloneq 0 \quad\text{ if } a_i \neq a_j.
    \eearr
\end{defn}

This definition defines the pseudo-expectation over all nice SoS decompositions of the game polynomial. For any monomial in the decomposition, we can bring all the $M_{ax}$ terms to the left under the commutation relations $\sbrac{M_{a x}, N_{b y}} = 0$. Then, we can apply one of the above definitions to the monomial to calculate the pseudo-expectation. \\

One constraint of the POVM algebra, which is not ensured above, is the sum-to-one constraint $\sum_{a \in \alpbt{A}} M_{ax} = \II$. In Lemma \ref{lem:psE-sum-to-one-constraint}, we show that the pseudo-expectation nearly satisfies this constraint. 

\begin{lem} \label{lem:psE-sum-to-one-constraint}
    For any Hermitian polynomial $w_B\brac{\cbrac{N_{b_j y_k}}_{b_j \in \alpbt{B}, y_k \in \alpbt{Y}}}$, there exists a negligible function $\negl(\lambda)$ (possibly depending on $w_B$) such that
\be \abs{\psE{ (\II - \sum_{a \in \alpbt{A}} M_{a x} )w_B}} \leq \negl(\lambda).\ee
\end{lem}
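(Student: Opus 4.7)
The plan is to unpack both terms inside the pseudo-expectation using the definitions and observe that what remains is precisely a difference of the form controlled by \Cref{thm:block-encoding}. Concretely, set
\[ F(\chi) := \sum_{a \in \alpbt{A}} \sum_{\alpha:\,\Dec(\alpha)=a} \bra{\Psi_{\alpha\chi}} w_B \ket{\Psi_{\alpha\chi}}. \]
Then by the definition of $\psE{\cdot}$, together with the identity $\II \cdot w_B = w_B$ as a formal polynomial, the first term is $\psE{\II \cdot w_B} = \E_{x' \in \alpbt{X}} \E_{\chi' \gets \Enc(x')} F(\chi')$, while linearity and the single-$M_{ax}$ rule give $\psE{\sum_{a \in \alpbt{A}} M_{ax} w_B} = \E_{\chi \gets \Enc(x)} F(\chi)$. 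The quantity to bound is therefore
\[ \abs{\;\E_{x' \in \alpbt{X}}\, \E_{\chi' \gets \Enc(x')}\, F(\chi') \;-\; \E_{\chi \gets \Enc(x)}\, F(\chi)\;}, \]
i.e.\ the gap between evaluating the same prover-side functional of $\chi$ on a ciphertext of a uniformly random plaintext versus on a ciphertext of the fixed plaintext $x$.

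Next, I would apply \Cref{thm:block-encoding} with $D_1$ the uniform distribution on $\alpbt{X}$, $D_2$ the point mass at $x$, and $w = w_B$. Since $w_B$ is a fixed Hermitian polynomial in the QPT-measurable operators $\cbrac{N_{by}}$, it has a block encoding with constant scale factor, so the theorem immediately yields a negligible function $\negl(\lambda)$, whose negligibility rate is allowed to depend on $w_B$ through the block-encoding parameters, that upper bounds the displayed gap. Combining the two steps gives the claim.

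The step that requires a small amount of care, and which I expect to be the main obstacle, is matching the restricted sum $\sum_{\alpha:\,\Dec(\alpha) \in \alpbt{A}}$ hidden inside $F$ with the \emph{unrestricted} sum $\sum_\alpha$ that appears in \Cref{thm:block-encoding}. The natural concern is that the indicator $[\Dec(\alpha) \in \alpbt{A}]$ depends on the secret key $\mathrm{sk}$ and so is not a prover-side operation. This is handled at no extra cost by absorbing the classical postprocessing $\Dec(\alpha) \in \alpbt{A}$ into the verifier's check in the CPA game underlying \Cref{thm:block-encoding}, exactly as in the block-encoding framework of \cite{NZ23,Cui+24}; equivalently, one may assume without loss of generality that $\Dec$ is padded to always output an element of $\alpbt{A}$, in which case $F(\chi) = \sum_\alpha \bra{\Psi_{\alpha\chi}} w_B \ket{\Psi_{\alpha\chi}}$ literally, and the block-encoding theorem applies verbatim.
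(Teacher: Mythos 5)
Your proposal is correct and follows essentially the same route as the paper's proof: expand both terms via the definition of $\psE{\cdot}$, recognize the result as a difference of the same prover-side functional under two efficiently sampleable ciphertext distributions (uniform over $\alpbt{X}$ versus the point mass at $x$), and invoke \Cref{thm:block-encoding}. The only difference is that you explicitly justify replacing the restricted sum $\sum_{a\in\alpbt{A}}\sum_{\alpha:\Dec(\alpha)=a}$ by the unrestricted $\sum_\alpha$, a step the paper performs silently, so your extra care there is welcome rather than a deviation.
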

\begin{proof}
   We can expand the pseudo-expectations as follows:
\begin{align}
    \abs{\psE{ (\II - \sum_{a \in \alpbt{A}} M_{a x} )w_B}} &= \abs{ \psE{w_B} - \psE{\sum_{a \in \alpbt{A}} M_{a x} w_B}}   \\
    &= \Bigg|
        \E_{x' \in \alpbt{X}}\E_{\chi \gets \Enc(x')} \sum_{a \in \alpbt{A}} \sum_{\alpha: \Dec\brac{\alpha} = a} \bra{\Psi_{\alpha  \chi}} w_B \ket{\Psi_{\alpha  \chi}}  \nonumber \\
          &\qquad  - \quad \E_{\chi \gets \Enc(x) } \sum_{a \in \alpbt{A}} \sum_{\alpha: \Dec\brac{\alpha} = a} \bra{\Psi_{\alpha  \chi}} w_B \ket{\Psi_{\alpha  \chi}} \Bigg| \\
          &=\Bigg|
        \E_{x' \in \alpbt{X}}\E_{\chi \gets \Enc(x')} \sum_{\alpha}\bra{\Psi_{\alpha  \chi}} w_B \ket{\Psi_{\alpha  \chi}}  \nonumber \\
          &\qquad  - \quad \E_{\chi \gets \Enc(x) } \sum_{\alpha} \bra{\Psi_{\alpha  \chi}} w_B \ket{\Psi_{\alpha  \chi}} \Bigg| \label{eq:xprime-x-close} 
\end{align}
Now, the two distributions over ciphertexts $\chi$ in the two terms in~\eqref{eq:xprime-x-close} are both efficiently sampleable. Hence, by applying Theorem~\ref{thm:block-encoding}, we conclude that
\be
\abs{\psE{ (\II - \sum_{a \in \alpbt{A}} M_{ax} )w_B}}  \leq \negl(\lambda).
\ee
\end{proof}

By linearity, we have defined the pseudo-expectation for all nice polynomial expressions in the basis. 

\subsubsection{Bounding the compiled value using the pseudo-expectation}

Now that we have defined the pseudo-expectation, we can now show the following lemma which will allow us to use the sum-of-squares method on the compiled version of games.

\begin{lem} \label{lem:psE-nice}
    Let $\cbrac{M_{a_i  x}}_{a_i \in \alpbt{A}}, \cbrac{N_{b_jy_k}}_{b_j \in \alpbt{B}, y_k \in \alpbt{Y}}$ be POVM projectors. Any polynomial in them can be written in the form
    \begin{equation}
        S = \polyB_\phi\brac{\cbrac{N_{b_j y_k}}_{b_j \in \alpbt{B}, y_k \in \alpbt{Y}}} + \sum_{a \in \alpbt{A}} \brac{M_{a  x}} \polyB_a\brac{\cbrac{N_{b_j y_k}}_{b_j \in \alpbt{B}, y_k \in \alpbt{Y}}}
    \end{equation}
    where $\polyB_a$ and $\polyB_\phi$ are complex polynomials in the Bob POVM elements. Then, the pseudo-expectation of $S^\dagger S$ is non-negative up to a negligible function. That is,
    \begin{equation}
        \psE{S^\dagger S} \geq -\negl\brac{\lambda}
    \end{equation}
    where $\negl$ is a negligible function of the security parameter $\lambda$ depending on the $\QHE$ scheme used in compilation, the strategy $\strat{S}$ and $S$.
\end{lem}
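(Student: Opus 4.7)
The plan is to expand $S^\dagger S$ using the formal algebraic relations that the pseudo-expectation respects, put all Alice operators on the left in each monomial, evaluate term-by-term using the two cases of the pseudo-expectation definition, and then use \Cref{thm:block-encoding} to reconcile the different distributions over Alice's question that appear in the two cases. Once the distributions are matched, the resulting expression will collapse into a manifestly non-negative sum of expectations of true squares on the post-measurement states $\ket{\Psi_{\alpha \chi}}$.

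Concretely, I would first use $M_{ax}^\dagger = M_{ax}$, the orthogonality relation $M_{ax} M_{a'x} = \delta_{a,a'} M_{ax}$, and the commutation $[M_{ax}, N_{by}] = 0$ (which extends to $[M_{ax}, p] = 0$ for any Bob polynomial $p$) to rewrite
\[ S^\dagger S \;=\; \polyB_\phi^\dagger \polyB_\phi \;+\; \sum_{a \in \alpbt{A}} M_{ax}\bigl(\polyB_\phi^\dagger \polyB_a + \polyB_a^\dagger \polyB_\phi + \polyB_a^\dagger \polyB_a\bigr), \]
which is precisely the canonical form on which $\psE{\cdot}$ is defined monomial by monomial. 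Applying the first case to $\polyB_\phi^\dagger \polyB_\phi$ gives an expression averaged over a uniformly sampled $x' \in \alpbt{X}$, whereas the second case applied to each $M_{ax}(\cdots)$ fixes Alice's question to $x$ and restricts to $\alpha$ with $\Dec(\alpha) = a$.

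The key technical step is to invoke \Cref{thm:block-encoding} with $w = \polyB_\phi^\dagger \polyB_\phi$ (Hermitian by construction), $D_1$ uniform on $\alpbt{X}$, and $D_2$ the point distribution at $x$. This replaces $\E_{x' \in \alpbt{X}}$ by the fixed $x$ up to a negligible error depending on $\polyB_\phi$ (and hence on $S$ and the QHE scheme). Once both terms share the distribution $\chi \gets \Enc(x)$, the unrestricted sum $\sum_\alpha$ matches the collapsed sum $\sum_{a}\sum_{\alpha:\Dec(\alpha) = a}$, and I can regroup term-by-term in $\alpha$ to get
\[ \psE{S^\dagger S} \;=\; \E_{\chi \gets \Enc(x)} \sum_\alpha \bra{\Psi_{\alpha\chi}} \bigl(\polyB_\phi + \polyB_{\Dec(\alpha)}\bigr)^\dagger \bigl(\polyB_\phi + \polyB_{\Dec(\alpha)}\bigr) \ket{\Psi_{\alpha\chi}} \;\pm\; \negl(\lambda). \]
Each summand is a true inner product of the form $\|T_\alpha \ket{\Psi_{\alpha\chi}}\|^2 \geq 0$, so the whole expression is non-negative, yielding $\psE{S^\dagger S} \geq -\negl(\lambda)$.

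The only substantive obstacle is the distribution swap, which is where cryptographic security genuinely enters; the rest is bookkeeping that is only possible because the nice form of $S$ confines Alice operators to a single question $x$ and of a single power (so the orthogonality relation handles the cross terms cleanly). If $S$ were allowed to contain $M_{ax}$ and $M_{ax'}$ for $x \neq x'$, one could not eliminate cross terms like $M_{ax}M_{a'x'}$ via orthogonality, and the reduction to a perfect-square identity in a single $\Dec(\alpha)$ would fail; this is exactly the reason niceness is the right hypothesis.
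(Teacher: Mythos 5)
Your proposal is correct and follows essentially the same route as the paper's proof: both reduce $S^\dagger S$ via orthogonality of the $\cbrac{M_{ax}}_a$ to a form whose pseudo-expectation is a sum of genuine squares $\|(\polyB_\phi + \polyB_a)\ket{\Psi_{\alpha\chi}}\|^2$ plus an error controlled by the QHE security. The only cosmetic difference is that the paper packages the distribution swap as its Lemma~\ref{lem:psE-sum-to-one-constraint} applied to the leftover term $(\II - \sum_a M_{ax})\polyB_\phi^\dagger\polyB_\phi$, whereas you invoke Theorem~\ref{thm:block-encoding} directly on $\psE{\polyB_\phi^\dagger\polyB_\phi}$ — these are the same step.
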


\begin{proof}
    The proof is essentially by direct calculation, and follows along
    the lines of \cite{2406.04986}, albeit adapted to the projector
    algebra and for multiple questions and multiple answers.
    
    Firstly, in preparation for using the relation $\sum_{a\in \alpbt{A}} M_{ax} = \II$, let us write
    \be 
        \polyB_\phi = \sum_{a\in \alpbt{A}} M_{ax} \polyB_\phi + \brac{\II - \sum_{a \in \alpbt{A}} M_{ax
        }}\polyB_\phi.
    \ee
     The pseudo-expectation operator $\psE{\cdot}$ does not exactly respect the relation $\sum_{a \in \alpbt{A}} M_{a x} = \II$, but rather only respects it up to a negligible error. Therefore, we will keep around the terms containing a factor of $(\II - \sum_{a \in \alpbt{A}} M_{ax})$ until we apply the pseudo-expectation operator.
    
    This means we can write $S$ as
    \be 
        S = \sum_{a\in \alpbt{A}} M_{a x} q_a + \brac{\II - \sum_{a \in \alpbt{A}} M_{a x}} \polyB_\phi,
    \ee
    where $q_a = \polyB_a + \polyB_\phi$.
    From orthogonality of $\cbrac{M_{a x}}_{a\in \alpbt{A}}$, we have:
    \begin{equation}
        M_{{a_1} x} M_{{a_2} x} = \begin{cases}
            M_{a x} & \text{if } a_1 = a_2 = a \\
            0 & \text{if } a_1 \neq a_2
        \end{cases}
    \end{equation}
    This implies that
    \be 
    \brac{\II - \sum_{a \in \alpbt{A}} M_{a x}}^2 = \II - \sum_{a \in \alpbt{A}} M_{a x}.
    \ee
    Applying the orthogonality to $S^\dagger S$, we get
    \bearr
        S^\dagger S &= \sum_{a_1} \sum_{a_2} M_{{a_1}  x} M_{{a_2}  x} q_{a_1}^\dagger q_{a_2} + \sum_{a_1} M_{{a_1}  x} \brac{1 - \sum_{a_2} M_{{a_2}  x}} (q_{a_1}^\dagger \polyB_\phi + \polyB_\phi^\dagger q_{a_1}) + \brac{\II - \sum_{a \in \alpbt{A}} M_{a x}}^2 \polyB_\phi^\dagger \polyB_\phi\\
        &= \sum_{a} M_{ax} q_a^\dagger q_a  + \sum_{a} \underbrace{M_{ax} \brac{1 - M_{ax}}}_{=0} (q_a^\dagger \polyB_\phi + \polyB_\phi^\dagger q_a) + \brac{\II - \sum_{a \in \alpbt{A}} M_{ax}} \polyB_\phi^\dagger \polyB_\phi\\
        &= \sum_{a} M_{ax} q_a^\dagger q_a + \brac{\II - \sum_{a \in \alpbt{A}} M_{ax}} \polyB_\phi^\dagger \polyB_\phi.
    \eearr
    where $M_{a x} \brac{1 - M_{ax}} = M_{a x} - M_{a x} = 0$, and $\polyB_\phi^\dagger \polyB_\phi$ is a polynomial in the Bob POVM elements.
    Now, we can apply the pseudo-expectation operator to $S^\dagger S$:
    \begin{align}
        \psE{S^\dagger S} &= \sum_{a} \psE{M_{a x} q_a^\dagger q_a} + \underbrace{\psE{\brac{\II - \sum_{a \in \alpbt{A}} M_{a x}} \polyB_\phi^\dagger \polyB_\phi}}_{\negl(\lambda)}\\
        &\approxQHE \sum_{a} \E_{\chi \gets \Enc(x) } \sum_{\alpha: \Dec\brac{\alpha} = a} \bra{\Psi_{\alpha  \chi}} q_a^\dagger q_a \ket{\Psi_{\alpha  \chi}},
    \end{align}
    where in passing to the second line we have applied Lemma~\ref{lem:psE-sum-to-one-constraint}, which we may do since $p_\phi^\dagger p_\phi$ is a Hermitian polynomial.

    Hence, we get
    \bearr
        \psE{S^\dagger S} &\approxQHE \E_{\chi \gets \Enc(x) } \sum_a  \sum_{\alpha: \Dec\brac{\alpha} = a}  \underbrace{\bra{\Psi_{\alpha  \chi}}  q_a^\dagger q_a \ket{\Psi_{\alpha  \chi}}}_{\geq 0} \\
        &\geq 0,
    \eearr
    where the inequality follows from the fact that the expectation of a square of a polynomial is always non-negative. 
   
    We have shown that $\psE{S^\dagger S} \approxQHE h$ for some $h \geq 0$. So, $\abs{\psE{S^\dagger S} - h} \leq \negl\brac{\lambda}$. So, we conclude that $\psE{S^\dagger S} \geq -\negl\brac{\lambda}$.
\end{proof}

The implication of Lemma \ref{lem:psE-nice} is the following: suppose we have a nice sum-of-squares certificate certifying an upper bound $\omega$ on the game polynomial, i.e. we have
\be \omega - \gp{G} = \sum_i \lambda_i r_i^\dagger r_i + \sum_j \mu_j s_j,\ee
with the polynomials $r_i$ satisfying the niceness condition as defined in definition \ref{def:niceness}. Then by applying the pseudo-expectation operator to both sides of this expression, and applying Lemma \ref{lem:psE-nice}, we can conclude that $\omega + \negl(\lambda)$ is an upper bound on the success probability of any \emph{compiled} strategy to the game. \\

This can be formalized as the following theorem. This corresponds to Corollary 4.5 in \cite{Cui+24}.
\begin{thm}\label{thm:nice-compiles}
    Let $\game{G}$ be a game with the game polynomial $\gp{G}$. If $\omega - \gp{G}$ has a nice SoS decomposition, then for any computationally bounded strategy $\strat{S}$, there exists a negligible function $\eta(\lambda)$ of the security parameter $\lambda$ such that
    \begin{equation}
        \compgval{S}{G} \leq \omega + \eta(\lambda),
    \end{equation}
    where $\compgval{S}{G}$ is the value achieved by strategy $\strat{S}$ in the compiled game.
\end{thm}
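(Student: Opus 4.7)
The plan is to apply the pseudo-expectation operator to both sides of the given nice SoS decomposition
\[ \omega\cdot \II - \gp{G} = \sum_i \lambda_i r_i^\dagger r_i + \sum_j \mu_j s_j, \]
and then bound each resulting term. By linearity and $\psE{\II} = 1$, the left-hand side becomes $\omega - \psE{\gp{G}}$. The core calculation is to identify $\psE{\gp{G}}$ with the compiled value $\compgval{S}{G}$: expanding using $\gp{G} = \sum_{x,y}\pi(x,y)\sum_{a,b} V(a,b,x,y)\, M_{ax}N_{by}$ and applying the ``single-Alice-monomial'' case of the definition, $\psE{M_{ax}N_{by}} = \E_{\chi \gets \Enc(x)}\sum_{\alpha:\Dec(\alpha)=a}\bra{\Psi_{\alpha\chi}}N_{by}\ket{\Psi_{\alpha\chi}}$, which matches~\eqref{eq:compiled winning prob} term by term. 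So this step is a direct unpacking of definitions.

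Next, I would bound the right-hand side. Each square $r_i^\dagger r_i$ is precisely of the form handled by \Cref{lem:psE-nice} because, by the definition of \emph{niceness} (\Cref{def:niceness}), each $r_i$ is a polynomial in Bob's POVM elements and in $\cbrac{M_{ax}}_{a\in\alpbt{A}}$ for a \emph{single} Alice question $x$. Hence for each $i$ there exists a negligible $\negl_i(\lambda)$ with $\psE{r_i^\dagger r_i} \geq -\negl_i(\lambda)$. Summing over the finitely many $i$ yields $\sum_i \lambda_i\, \psE{r_i^\dagger r_i} \geq -\negl(\lambda)$.

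For the constraint terms $\mu_j s_j$, I would split them according to which algebraic relation of $\mathcal{A}_{\game{G}}$ they encode. The pseudo-expectation is, by construction, a linear functional on formal polynomials modulo the commutation relation $[M_{ax},N_{by}]=0$ and the orthogonality relations $M_{ax}M_{a'x}=\delta_{aa'}M_{ax}$ and $N_{by}N_{b'y}=\delta_{bb'}N_{by}$, so any $s_j$ expressing one of those relations contributes exactly zero. The remaining relations come from the sum-to-one constraints $\II - \sum_a M_{ax}$, and these contribute at most $\negl(\lambda)$ by \Cref{lem:psE-sum-to-one-constraint} (reducing each such term, after pushing Alice operators to the left, to a multiple of a Bob Hermitian polynomial times $\II - \sum_a M_{ax}$). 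Combining: $|\sum_j \mu_j \psE{s_j}| \leq \negl(\lambda)$. Putting the three estimates together gives $\omega - \compgval{S}{G} \geq -\negl(\lambda)$, which is the desired bound with $\eta(\lambda) := \negl(\lambda)$.

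The main obstacle, and the place where care is needed, is the treatment of the constraint polynomials $s_j$. \Cref{lem:psE-sum-to-one-constraint} is stated only for $(\II-\sum_a M_{ax})$ multiplied by a \emph{Bob-only} Hermitian polynomial, while in principle a nice SoS decomposition might express sum-to-one constraints in a more general position (e.g.\ sandwiched between Alice factors for the same question $x$, or as part of a longer product). The resolution is that, using the commutation and orthogonality relations already respected exactly by $\psE{\cdot}$, any such constraint can be normalized into the form of \Cref{lem:psE-sum-to-one-constraint}; an alternative is to slightly extend that lemma to allow a single Alice factor $M_{a x}$ on the left, which follows from essentially the same \Cref{thm:block-encoding} argument. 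Either way, the finiteness of the SoS decomposition is essential, since the negligible functions $\negl_i$ depend on the individual $r_i$ and $s_j$ and must be summed into a single $\eta(\lambda)$.
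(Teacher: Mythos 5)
Your proposal is correct and follows essentially the same route as the paper: apply the pseudo-expectation to both sides of the nice SoS identity, identify $\psE{\gp{G}}$ with the compiled value via the definition of $\psE{\cdot}$ on monomials $M_{ax}N_{by}$, and lower-bound each square term by $-\negl(\lambda)$ using \Cref{lem:psE-nice}. Your handling of the constraint polynomials $s_j$ (exactly zero for the commutation/orthogonality relations, negligible via \Cref{lem:psE-sum-to-one-constraint} for the sum-to-one relations) is in fact more explicit than the paper's own one-paragraph sketch, which leaves that step implicit.
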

Note that the {\niceness} here is more general than the one defined in \cite{NZ23,Cui+24,2406.04986}. The definition here allows polynomials with terms consisting of different answers of Alice as long as they are for the same question.

\section{A hierarchy searching over nice SoS} \label{sec:one-sided-npa}

In this section, we will present another hierarchy of semidefinite programs, which we call the \emph{one-sided NPA hierarchy}. This hierarchy is similar to the NPA hierarchy in that we are searching over moment matrices indexed by restricted-degree operators of the game algebra. However, it differs in that Alice’s operators are always degree-1 while the degree of Bob’s operators increase with $d$. We will show that this restricted version of the NPA hierarchy also converges to the quantum value of the game. This version of the NPA hierarchy should be thought of as a convergent hierarchy characterizing strongly non-signaling algebraic correlations similar to how the original NPA hierarchy \cite{0803.4290} characterizes commuting operator correlations.

\subsection{Hierarchy over monomials of Bob's operators---the one-sided NPA hierarchy}

Inspired by \cite{0803.4290} and \cite{2408.06711}, we define a new hierarchy of semidefinite programs, which we call the \emph{one-sided NPA hierarchy}. To motivate the definition of the one-sided NPA hierarchy, let us take rewrite the NPA hierarchy defined in \Cref{eq:npa_formulation}.

First, notice that the moment matrix $\Gamma^{d}$ is really defining a linear functional on the subspace spanned by degree $2d$ monomials. In particular, if $\Gamma^{d}$ is some moment matrix then $\Gamma^{d}_{s, t} = \phi^{d}(s^{\dagger}t)$ for some linear functional~$\phi^{d} : (\mathcal{A}_{\mathcal{G}})_{\leq 2d} \to \CC$ due to the constraints imposed by the $\mathbf{C}_{k}$. Hence, we can rewrite the optimization problem as 

\begin{equation} \label{eq:npa_lin_formulation}
\begin{alignedat}{2}
\mathfrak{p}^{d}(\mathcal{G}) \quad = \qquad & \max_{\mathclap{ \phi^{d} : (\mathcal{A}_{\mathcal{G}})_{\leq 2d} \to \CC } } \qquad && \sum_{a,b,x,y} c_{abxy} \phi^{d}(M_{ax}N_{by}) \\
& \,\,\text{s.t.} \quad && \phi^{d}(I) = 1, \\
& && \phi^{d} \succeq 0,
\end{alignedat}
\end{equation}
where $\phi^{d} \succeq 0$ means that $\phi^{d}(s^{\dagger}s) \geq 0$ for all $s$ of degree less than or equal to $d$.

We now define:

\begin{defn}[One-sided NPA Hierarchy] \label{def:one-sided-npa-hierarchr}
Let the game polynomial be $P_{\mathcal{G}} = \sum_{a,b,x,y} c_{abxy} M_{ax} N_{by}$. The $d$-th level of the one-sided NPA hierarchy is defined as follows:

\begin{equation} \label{eq:new_npa_lin_formulation}
    \begin{alignedat}{3}
        \mathbbm{p}^{d}(\mathcal{G}) \quad = \qquad\qquad & \max_{\mathclap{ \cbrac{ \phi^{d}_{ax} : (\mathcal{A}_{PVM}^{\mathcal{B}, \mathcal{Y}})_{\leq 2d} \to \CC }_{a,x} } } \qquad\qquad && \sum_{a,b,x,y} c_{abxy} \phi^{d}_{ax}(N_{by}) && \\
        & \,\,\mathrm{s.t.} \quad && \sum_{a\in\alpbt{A}} \phi^d_{a0} = \sum_{a \in \mathcal{A}} \phi^{d}_{ax}, \quad \forall x \in\alpbt{X}, && \quad \text{(consistency)}\\
        & && \sum_{a \in \mathcal{A}} \phi^{d}_{a0}(I) = 1, &&\quad \text{(identity)} \\
        & && \phi^{d}_{ax} \succeq 0, \quad \forall a \in \mathcal{A}, x \in \mathcal{X}, &&
    \end{alignedat}
\end{equation}
where $\phi^d_{ax}$ is a linear functional defined on polynomials of Bob's operators of degree up to $2d$ and the identity constraint $\sum_{a \in \mathcal{A}} \phi^{d}_{a0}(I) = 1$ can be chosen to be any $x \in \mathcal{X}$ not just $0$ because of the consistency constraint.
\end{defn}
This hierarchy searches over strongly non-signaling algebraic strategies which are restricted to only degree $2d$ moments of Bob's operators. In section~\ref{sec:conv_oneNPA_hier}, we show that this hierarchy converges to the commuting value of the game. However, first, let's look at these semi-definite programs (SDP) in their standard form and show that the dual of this hierarchy is the nice SoS hierarchy which searches over degree-$d$ nice SoS certificates.

\paragraph{Standard form of the one-sided NPA hierarchy}

Let $\mathbf{s}^{d} = \cbrac{I, N_{by} \,:\, b \in \mathcal{B},y \in \mathcal{Y}}^{d}$ be the monomials of degree up to $d$ in $\mathcal{A}_{PVM}^{\mathcal{B}, \mathcal{Y}}$. Now, each $\phi^{d}_{ax} : (\mathcal{A}_{PVM}^{\mathcal{B}, \mathcal{Y}})_{\leq 2d} \to \CC$ can be viewed as a $\abs{\mathbf{s}^{d}} \times \abs{\mathbf{s}^{d}}$ matrix by defining $\Gamma^{d}_{ax}(s, t) := \phi^{d}_{ax}(s^{\dagger}t)$. $\Gamma^{d}_{ax}$ is positive semidefinite and satisfies $\Gamma^{d}_{ax}(s_{1}, t_{1}) = \Gamma^{d}_{ax}(s_{2}, t_{2})$ whenever $s_{1}^{\dagger}t_{1} = s_{2}^{\dagger}t_{2}$ in $\mathcal{A}_{PVM}^{\mathcal{B}, \mathcal{Y}}$. Let $\mathbf{C}^{ax}_{k}$ be the matrix encoding these constraints as in \Cref{eq:npa_formulation} for each $a \in \mathcal{A}, x \in \mathcal{X}$. Then, a positive semidefinite matrix $\Gamma^{d}_{ax}$ satisfying $\innerprod{\Gamma^{d}_{ax}}{\mathbf{C}^{ax}_{k}} = 0$ for all $k$ induces a linear functional $\phi^{d}_{ax}$ by defining $\phi^{d}_{ax}(t) = \Gamma_{ax}^{d}(I,t)$. So then, let $\Gamma^{d} := \bigoplus_{a,x} \Gamma^{d}_{ax}$ and $\mathbf{P}_{\mathcal{G}}$ be a $\abs{\mathbf{s}^{d}}^{\abs{\mathcal{A}}\abs{\mathcal{X}}} \times \abs{\mathbf{s}^{d}}^{\abs{\mathcal{A}}\abs{\mathcal{X}}}$ game polynomial matrix such that 
\be \innerprod{\mathbf{P}_{\mathcal{G}}}{\Gamma^{d}} = \sum_{a,b,x,y} c_{abxy} \Gamma^{d}_{ax}(1,N_{by}). \ee 
Furthermore, let $\mathbf{B}_{x,s,t} := \sum_{a \in \mathcal{A}} \II^{a0}_{s, t} - \sum_{a \in \mathcal{A}} \II^{ax}_{s, t}$, where $\II^{ax}_{s,t}$ is the matrix which is $1$ in entry $(s,t)$ in the $ax$ block and $0$ everywhere else, which encodes the consistency constraint.

So, replacing all instances of $\phi_{ax}^{d}$ with $\Gamma^{d}_{ax}$ evaluations above, we obtain:

\begin{equation} \label{one_sided_npa_primal}
    \begin{alignedat}{3}
        \mathbbm{p}^{d}(\mathcal{G}) \quad = \qquad\qquad & \max_{\mathclap{ \cbrac{ \Gamma^{d}_{ax} }_{a,x} } } \qquad\qquad && \innerprod{\mathbf{P}_{\mathcal{G}}}{\Gamma^{d}} && \\
        & \,\,\mathrm{s.t.} \quad && \inprod{ \mathbf{B}_{x,s,t} }{ \Gamma^{d} } = 0, \quad \forall x \in \mathcal{X}, s,t \in \mathbf{s}^{d} && \quad \text{(consistency)}\\
        & && \innerprod{\sum_{a \in \mathcal{A}} \II^{a0}_{I, I} }{\Gamma^{d}} = 1, &&\quad \text{(identity)} \\
        & && \innerprod{ \mathbf{C}^{ax}_{k} }{\Gamma^{d}} = 0, \quad \forall a,x,k &&\quad \\
        & && \,\, \Gamma^{d} \succeq 0, &&
    \end{alignedat}
\end{equation}
where in the last constraint, we are using the fact that $\Gamma^{d} = \bigoplus_{a,x} \Gamma_{ax}^{d}$ is positive semidefinite if and only if $\Gamma_{ax}^{d}$ is positive semidefinite for all $a, x$. Finally, to be strictly in ``standard form'' for SDP, we should be maximizing over $\Gamma^{d}$ with no matrix structure instead of over $\cbrac{ \Gamma^{d}_{ax} }_{a,x}$, but this is easy to enforce by just adding additional constraints to make $\Gamma^{d}$ block diagonal of the form $\Gamma^{d} = \bigoplus_{a,x} \Gamma_{ax}^{d}$.

\paragraph{Dual of one-sided NPA Hierarchy---the nice SoS hierarchy}
Given the above standard form, we can immediately write the dual of the one-sided NPA hierarchy as follows:
\begin{equation} \label{eq:npa_lin_formulation}
    \begin{alignedat}{2}
        \mathbbm{d}^{d}(\mathcal{G}) \quad = \qquad & \min_{\mathclap{ \nu, y_{x, s, t}, y_{a,x,k} } } \qquad && \nu \\
        & \,\,\text{s.t.} \quad && \mathbf{M}_{d} := \nu \II_{1, 1}^{a0} + \sum_{x, s, t} y_{x, s, t} \mat{B}_{x, s, t} + \sum_{a,x, k} y_{a,x,k} \mat{C}^{ax}_k - \mat{P}_{\mathcal{G}}\succeq 0.
    \end{alignedat}
\end{equation}

Note that these constraint matrices are block-diagonal, where each block is limited to only one question of Alice. Thus, the $\mat{M}_d$ matrix can be decomposed into $\mat{R}^\dagger \mat{R}$ where $\mat{R}$ represents the nice sum-of-squares decomposition of polynomials up to degree $d$. So, this hierarchy finds the best upper-bound by searching over nice SoS certificates of degree $d$.

We will talk more about the matrix structure of nice SoS certificates in \Cref{chap:npa_level1}. 

\subsection{Convergence of the one-sided NPA hierarchy} \label{sec:conv_oneNPA_hier}

We will now show that the one-sided NPA hierarchy converges to the optimal commuting value of the game, which would give us a new hierarchy of upper bounds on the value of nonlocal games.

\begin{thm} \label{thm:one-sided-convergence}
    For any nonlocal game $\mathcal{G}$, let $\mathbbm{p}^d(\mathcal{G})$ be the optimal value of the $d$-th level of the one-sided NPA hierarchy. This optimal value converges to the commuting value of the game, i.e.,
    \begin{equation}
        \lim_{d \to \infty} \mathbbm{p}^d(\mathcal{G}) = \omega_{qc}^{\ast}(\mathcal{G}).
    \end{equation}
\end{thm}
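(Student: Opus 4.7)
The plan is to prove the two inequalities $\omega_{qc}^{\ast}(\mathcal{G}) \leq \lim_{d\to\infty} \mathbbm{p}^d(\mathcal{G}) \leq \omega_{qc}^{\ast}(\mathcal{G})$ separately, with the crucial bridge being \Cref{lem:non-signaling-commuting}, which equates strongly non-signaling algebraic correlations with commuting operator correlations. The first inequality (feasibility) is the easy direction: given any commuting operator strategy $\mathcal{S} = (\ket{\psi}, \{A_{ax}\}, \{B_{by}\})$ for $\mathcal{G}$, define, for each $a\in\mathcal{A},x\in\mathcal{X}$, the linear functional $\phi_{ax}(w) := \bra{\psi} A_{ax}\, w(\{B_{by}\}) \ket{\psi}$ on polynomials $w$ in Bob's operators. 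Using commutativity, $\phi_{ax}(s^{\dagger}s) = \bra{\psi} s^{\dagger} A_{ax} s \ket{\psi} = \|A_{ax}^{1/2} s\ket{\psi}\|^2 \geq 0$, so $\phi_{ax}$ is positive; moreover $\sum_a \phi_{ax}(w) = \bra{\psi} w \ket{\psi}$ is independent of $x$, so consistency and identity hold, and the achieved objective value is exactly $\omega_{qc}(\mathcal{S},\mathcal{G})$. Truncating to polynomials of degree $\leq 2d$ gives a feasible point at each level, proving $\mathbbm{p}^d(\mathcal{G}) \geq \omega_{qc}^{\ast}(\mathcal{G})$.

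For the opposite direction, the plan is a standard compactness/diagonal argument. Fix any sequence of feasible solutions $\{\phi^d_{ax}\}_{a,x}$ at each level $d$ with objective value approaching $\lim_{d\to\infty} \mathbbm{p}^d(\mathcal{G})$. The positivity condition $\phi^d_{ax}(s^{\dagger}s)\geq 0$ together with $\sum_a \phi^d_{a0}(I) = 1$ implies that for each $a,x$, $\phi^d_{ax}(I) \leq 1$, and hence $|\phi^d_{ax}(w)|$ is bounded by (a constant depending on $w$ but) independent of $d$, on every monomial $w$ in $\{N_{by}\}$. Since there are countably many such monomials, a diagonal extraction produces a subsequence $d_k\to\infty$ such that $\phi^{d_k}_{ax}(w)$ converges to a scalar $\phi_{ax}(w)$ for every monomial $w$ and every $a,x$. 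Extending $\phi_{ax}$ linearly, positivity $\phi_{ax}(s^{\dagger}s)\geq 0$ passes to the limit for any fixed $s$ (once $d_k \geq \deg(s)$), as do the consistency constraint $\sum_a \phi_{ax}(w)=\sum_a \phi_{a0}(w)$, the identity constraint $\sum_a \phi_{a0}(I) = 1$, and the value, which depends only on finitely many monomial evaluations.

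To conclude, the $\phi_{ax}$ must be promoted from positive functionals on the dense $*$-subalgebra of polynomials to positive linear functionals on the full $C^{\ast}$-algebra $\mathcal{A}_{PVM}^{\mathcal{B}, \mathcal{Y}}$, as required by \Cref{def:non-signaling-strategy}. This is the step I expect to be the main technical obstacle, though it is a well-understood one: because the generators $N_{by}$ are projections, every monomial has $C^{\ast}$-norm at most $1$, and a positive functional on the polynomial $*$-subalgebra containing the identity is automatically bounded with $\|\phi_{ax}\| = \phi_{ax}(I)$, so it extends by continuity (equivalently, Krein's extension theorem applied to the operator system spanned by the degree-$\leq 2d$ polynomials yields uniformly bounded extensions, whose weak-$\ast$ limit in the unit ball of the dual of $\mathcal{A}_{PVM}^{\mathcal{B},\mathcal{Y}}$ exists by Banach--Alaoglu and agrees with $\phi_{ax}$ on polynomials). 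The resulting family $\{\phi_{ax}\}$ with the global state $\phi := \sum_a \phi_{a0}$ forms a strongly non-signaling algebraic strategy in the sense of \Cref{def:non-signaling-strategy}, whose value equals $\lim_k \sum_{a,b,x,y} c_{abxy} \phi^{d_k}_{ax}(N_{by}) = \lim_{d\to\infty} \mathbbm{p}^d(\mathcal{G})$. Applying \Cref{lem:non-signaling-commuting} then yields a commuting operator strategy achieving this same correlation and hence the same objective value, giving $\lim_{d\to\infty} \mathbbm{p}^d(\mathcal{G}) \leq \omega_{qc}^{\ast}(\mathcal{G})$ and completing the proof.
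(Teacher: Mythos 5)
Your proposal is correct and takes essentially the same route as the paper's proof: the lower bound follows from the feasibility of any commuting operator strategy, and the upper bound from a weak-$\ast$ compactness argument that produces a strongly non-signaling algebraic strategy in the limit and then invokes \Cref{lem:non-signaling-commuting}. The only differences are cosmetic---you extract limits monomial-by-monomial and then extend to the full $C^{\ast}$-algebra, whereas the paper extends the truncated functionals by zero and applies Banach--Alaoglu directly; your handling of the extension step is, if anything, slightly more careful.
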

The proof strategy is similar to one used in Theorem 6.1 of \cite{2408.06711} and Theorem 8 of \cite{0803.4290}. 

\begin{proof}
    Firstly, note that $\mathbbm{p}^d(\mathcal{G})$ is monotonically non-increasing in $d$. Hence, the limit $\lim_{d\to\infty} \mathbbm{p}^d(\mathcal{G})$ exists. Also, note that $\mathbbm{p}^d(\mathcal{G}) \geq \omega_{qc}^{\ast}(\mathcal{G})$ for all $d$, as the optimal commuting operator strategy is a valid strongly non-signaling algebraic strategy for $d$-degree moments. Hence, 
    \be \label{eq:convergence=proof-geq}
    \lim_{d\to\infty} \mathbbm{p}^d(\mathcal{G}) \geq \omega_{qc}^{\ast}(\mathcal{G}). \ee

    Now, we will show that $\lim_{d\to\infty} \mathbbm{p}^d(\mathcal{G}) \leq \omega_{sns}^{\ast}(\mathcal{G})$ by constructing a non-signaling algebraic strategy from the sequence of optimal moment matrices $\Gamma^d$.\\
    We can extend $\Gamma^d$ linearly to get the linear functionals $\phi^d_{ax} : (\mathcal{A}_{PVM}^{\mathcal{A}, \mathcal{X}})_{\leq 2d} \to \mathbb{C}$. These functionals are also positive on this space. Furthermore, we can extend it to the whole of $\mathcal{A}_{PVM}^{\mathcal{A}, \mathcal{X}}$ by defining $\phi^d_{ax} = 0$ on monomials of degree strictly greater than $2d$. This extended version of $\phi^d_{ax}$ is still positive semidefinite.\\
    Note that each $x, a$ and for all $d \geq 1$, the operator norm of $\phi^d_{ax}$ is bounded by 1, i.e.,
    \be \norm{\phi^d_{ax}} = \sup_{v: \norm{v} \leq 1} \phi^d_{ax}(v) = \phi^d_{ax}(I) \leq \sum_{a} \phi^d_{ax}(I) = 1.\ee
    Hence, the sequence of positive linear functionals $\phi^d_{ax}$ is bounded in operator norm. 
    By the Banach-Alaoglu theorem, the sequence $\cbrac{\phi^d_{ax}}_{d\in\NN}$ (and any of its subsequence) has a weak-$*$ convergent subsequence. As $\alpbt{X}$ and $\alpbt{A}$ are finite sets, by iteratively taking subsequences for each $x$ and $a$, we can find an increasing subsequence $\cbrac{d_k}_{k\in\NN}$ and a positive linear functional $\phi_{ax}$ such that
    \be \lim_{k\to\infty} \phi^{d_k}_{ax} = \phi_{ax} \quad \forall x \in \alpbt{X}, a \in \alpbt{A}. \ee
    The limit functional $\phi_{ax}$ is also positive as it is the limit of positive functionals. Furthermore, for any $x, x' \in \alpbt{X}$,
    \be \sum_{a\in\alpbt{A}} \phi_{a|x} = \sum_{a\in\alpbt{A}} \lim_{k\to\infty} \phi^{d_k}_{ax} = \lim_{k\to\infty} \sum_{a\in\alpbt{A}} \phi^{d_k}_{ax} = \lim_{k\to\infty} \sum_{a\in\alpbt{A}} \phi^{d_k}_{ax'} = \sum_{a\in\alpbt{A}} \phi_{ax'}. \ee
    Hence, we can define 
    \be \phi = \sum_{a\in\alpbt{A}} \phi_{ax}, \ee
    which is positive linear functional on $\mathcal{A}_{PVM}^{\mathcal{A}, \mathcal{X}}$. This gives rise to a non-signaling algebraic strategy.
    Note that,
    \be \phi(I) = \sum_{a\in\alpbt{A}} \phi_{ax}(I) = \lim_{k\to\infty} \sum_{a\in\alpbt{A}} \phi^{d_k}_{ax}(I) = 1, \ee
    so $\phi$ is a valid state.
    
    Hence, $\phi_{ax}: \mathcal{A}_{PVM}^{\mathcal{A}, \mathcal{X}} \to \CC$ forms a non-signaling algebraic strategy with PVM operators $\cbrac{N_{by}}_{y\in\alpbt{Y}}$. So, 
    \be 
    \lim_{d\to\infty} \mathbbm{p}^d(\mathcal{G}) = \lim_{k\to\infty} \mathbbm{p}^{d_{k}}(\mathcal{G}) \leq \omega_{sns}^{\ast}(\mathcal{G}). \ee
    From lemma \ref{lem:non-signaling-commuting}, we have $\omega_{sns}^{\ast}(\mathcal{G}) \leq \omega_{qc}^{\ast}(\mathcal{G})$. Hence, 
    \be \label{eq:convergence=proof-leq}
    \lim_{d\to\infty} \mathbbm{p}^d(\mathcal{G}) \leq \omega_{qc}^{\ast}(\mathcal{G}). 
    \ee
    Combining the two inequalities \eqref{eq:convergence=proof-geq} and \eqref{eq:convergence=proof-leq}, we get
    \be \lim_{d\to\infty} \mathbbm{p}^d(\mathcal{G}) = \omega_{qc}^{\ast}(\mathcal{G}). \ee
\end{proof}

So, we have constructed a convergent SDP hierarchy for the commuting operator value of the game. We can combine this result with duality of the one-sided NPA hierarchy and the nice SoS hierarchy and apply Theorem~\ref{thm:nice-compiles} to get the following main result of the paper: 

\begin{thm} \label{thm:formal-compilation-bound}
Let $\game{G}$ be a nonlocal game with optimal quantum value $\omega_{qc}^{\ast}(\game{G})$. For any $\eps > 0$, there exists $d(\eps) \in \NN$ such that there is a nice sum-of-squares certificate of degree $d(\eps)$ certifying that the optimal quantum value of the game is at most $\omega_{qc}^{\ast}(\game{G}) + \eps$. This implies that any computationally bounded prover strategy $\strat{S}$ on the compiled game can win $\game{G}$ with value at most 
\be \compgval{S}{G} \leq \omega_{qc}^{\ast}(\game{G}) + \eps + \negl_{\strat{S}, d(\eps)}(\lambda), \ee where $\negl_{\strat{S}, d(\eps)}$ is a negligible function of $\lambda$ that depends on the strategy $\strat{S}$ and $d(\eps)$. 
\end{thm}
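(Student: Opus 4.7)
The strategy is to stitch together three ingredients already assembled in the paper: (i) convergence of the one-sided NPA hierarchy to $\omega_{qc}^{\ast}(\game{G})$ (Theorem~\ref{thm:one-sided-convergence}), (ii) SDP duality between this primal hierarchy and the nice SoS dual hierarchy (as laid out in Section~\ref{sec:one-sided-npa}), and (iii) the fact that a nice SoS certificate bounds the compiled value up to negligible error (Theorem~\ref{thm:nice-compiles}).

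Concretely, I would first invoke Theorem~\ref{thm:one-sided-convergence} to pick $d_0 = d_0(\eps)$ large enough that $\mathbbm{p}^{d_0}(\game{G}) \leq \omega_{qc}^{\ast}(\game{G}) + \eps/2$; this is possible because $\mathbbm{p}^{d}(\game{G})$ is monotonically non-increasing in $d$ with limit $\omega_{qc}^{\ast}(\game{G})$. Next, I would use strong duality of the one-sided NPA SDP to pass to the dual hierarchy $\mathbbm{d}^{d_0}(\game{G})$. Verifying Slater's condition for the primal is the key technical input here: one can check that the feasible region contains a strictly positive definite $\Gamma^d$ arising, e.g., from averaging an optimal commuting operator strategy with a suitable ``maximally mixed'' functional, so primal and dual values agree. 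Since the dual might attain its optimum only in the limit, I would pick a dual feasible solution with objective value at most $\mathbbm{d}^{d_0}(\game{G}) + \eps/2 = \mathbbm{p}^{d_0}(\game{G}) + \eps/2 \leq \omega_{qc}^{\ast}(\game{G}) + \eps$; by the spectral decomposition of the dual matrix $\mathbf{M}_{d_0} \succeq 0$ described in Section~\ref{sec:one-sided-npa}, this yields a nice SoS certificate of degree $d(\eps) := d_0$ certifying $\omega' \leq \omega_{qc}^{\ast}(\game{G}) + \eps$, where niceness follows directly from the block-diagonal (over Alice's questions) structure of the constraint matrices.

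Finally, plugging this nice SoS certificate into Theorem~\ref{thm:nice-compiles} immediately yields
\begin{equation}
\compgval{S}{G} \leq \omega_{qc}^{\ast}(\game{G}) + \eps + \eta(\lambda)
\end{equation}
for every computationally bounded strategy $\strat{S}$, where $\eta = \negl_{\strat{S}, d(\eps)}$ is the negligible function produced by that theorem; it depends on $\strat{S}$ (through the block-encoding measurement step) and on the certificate itself (hence on $d(\eps)$, which determines the degree of the Bob-polynomial squares appearing in the decomposition).

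\textbf{Main obstacle.} The only non-routine step is strong duality together with attainment of (or approximability within $\eps/2$ of) the dual optimum by a \emph{genuine} SoS decomposition, i.e., one with a finite number of squared terms each of degree at most $d(\eps)$. Slater's condition needs to be checked for the one-sided SDP, which is slightly subtler than in the standard NPA case because of the consistency constraints linking the blocks $\Gamma^d_{ax}$ across different $x$; if strict primal feasibility fails outright, a standard perturbation argument (shifting the objective by a small multiple of the identity, solving, and taking $\eps/2$ slack) recovers the conclusion without changing the degree of the certificate.
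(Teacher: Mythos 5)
Your proposal is correct and follows essentially the same route as the paper's proof: convergence of the one-sided NPA hierarchy to pick $d(\eps)$, duality to extract a nice SoS certificate from the block-diagonal dual matrix, and then Theorem~\ref{thm:nice-compiles}. You are in fact somewhat more careful than the paper, which simply asserts the existence of an optimal dual solution; your explicit handling of strong duality and of possible non-attainment (via the $\eps/2$-slack dual feasible point) fills in a step the paper leaves implicit.
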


We note that additionally if at any finite level $d$, the value of the SoS hierarchy $\mathbbm{d}^{d}(\mathcal{G}) = \omega_{qc}^{\ast}(\mathcal{G})$, then the game $\mathcal{G}$ compiles in the sense that
\[ \compgval{S}{G} \leq \omega_{qc}^{\ast}(\game{G}) + \negl_{\strat{S}}(\lambda). \]

\begin{proof}[Proof of \Cref{thm:formal-compilation-bound}]
    Let $\mathbbm{p}^d(\mathcal{G})$ be the optimal value of the $d$-th level of the one-sided NPA hierarchy. From Theorem \ref{thm:one-sided-convergence}, we know that 
    \be \lim_{d\to\infty} \mathbbm{p}^d(\mathcal{G}) = \omega_{qc}^{\ast}(\game{G}). \ee
    This means that for any $\eps > 0$, there exists a $d(\eps) \in \NN$ such that
    \be \mathbbm{p}^{d(\eps)}(\mathcal{G}) \leq \omega_{qc}^{\ast}(\game{G}) + \eps. \ee
    From the optimal solution of the dual of the $d(\eps)$-level of one-sided NPA hierarchy, we know that there exists a nice SoS certificate of degree $d(\eps)$ certifying that the value of the game is at most $\mathbbm{p}^{d(\eps)}(\mathcal{G})$. Hence, we have a nice SoS certificate of degree $d(\eps)$ certifying that the value of the game is at most $\omega_{qc}^{\ast}(\game{G}) + \eps$.\\
    Then, we can apply Theorem \ref{thm:nice-compiles} to get that any computationally bounded prover strategy $\strat{S}$ can win the compiled game with probability at most
    \be \compgval{S}{G} \leq \omega_{qc}^{\ast}(\game{G}) + \eps + \negl_{\strat{S}, d(\eps)}(\lambda). \ee

\end{proof}

The above result implies the Theorem 6.1 of \cite{2408.06711}.
\begin{cor}[\cite{2408.06711}] \label{cor:2408.generalization}
    Let $\game{G}$ be a nonlocal game and $\strat{S}$ be a computationally bounded quantum prover for the compiled game. Then,
    \be \limsup_{\lambda \to \infty} \compgval{S}{G} \leq \omega_{qc}^{\ast}(\game{G}). \ee
\end{cor}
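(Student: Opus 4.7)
\medskip

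\noindent\textbf{Proof proposal.} The corollary follows directly from \Cref{thm:formal-compilation-bound} by a two-step limiting argument: first take $\limsup_{\lambda \to \infty}$ for fixed $\varepsilon$, then send $\varepsilon \to 0$.

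The plan is as follows. Fix an arbitrary $\varepsilon > 0$. Applying \Cref{thm:formal-compilation-bound} to $\game{G}$ and the strategy $\strat{S}$, there exists $d(\varepsilon) \in \NN$ and a negligible function $\negl_{\strat{S}, d(\varepsilon)}$ such that for every security parameter $\lambda$,
\[
\compgval{S}{G} \leq \omega_{qc}^{\ast}(\game{G}) + \varepsilon + \negl_{\strat{S}, d(\varepsilon)}(\lambda).
\]
Crucially, once $\varepsilon$ is fixed, both $d(\varepsilon)$ and the associated negligible function are independent of $\lambda$. Taking $\limsup_{\lambda \to \infty}$ on both sides, and using that a negligible function tends to $0$ as $\lambda \to \infty$, we obtain
\[
\limsup_{\lambda \to \infty} \compgval{S}{G} \leq \omega_{qc}^{\ast}(\game{G}) + \varepsilon.
\]
Since $\varepsilon > 0$ was arbitrary and the left-hand side does not depend on $\varepsilon$, we may let $\varepsilon \to 0$ to conclude $\limsup_{\lambda \to \infty} \compgval{S}{G} \leq \omega_{qc}^{\ast}(\game{G})$.

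There is no real obstacle here; the only point that warrants a moment's attention is the order of quantifiers. The negligible function in \Cref{thm:formal-compilation-bound} depends on both the strategy and the certificate degree $d(\varepsilon)$, so one must first fix $\varepsilon$ (thereby fixing $d(\varepsilon)$ and the negligible function), and only then take the $\limsup$ in $\lambda$; trying to reverse this order would not make sense because $d(\varepsilon) \to \infty$ as $\varepsilon \to 0$ and the negligible function could depend arbitrarily on the certificate. Thus the two-step limit is both natural and necessary.
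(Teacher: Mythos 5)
Your proposal is correct and follows essentially the same route as the paper: fix $\eps$, apply \Cref{thm:formal-compilation-bound}, take $\limsup_{\lambda\to\infty}$ using that the negligible function vanishes, then let $\eps\to 0$ (the paper phrases this last step as a contradiction, but it is the same argument). Your remark on the order of quantifiers — fixing $\eps$, hence $d(\eps)$ and the negligible function, before taking the limit in $\lambda$ — is exactly the point that makes the proof go through.
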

\begin{proof}
    From Theorem \ref{thm:formal-compilation-bound}, we know that for any $\eps > 0$,
    \be \compgval{S}{G} \leq \omega_{qc}^{\ast}(\game{G}) + \eps + \negl_{\strat{S}, d(\eps)}(\lambda). \ee
    Taking the limit $\lambda \to \infty$, we get that for any $\eps > 0$,
    \bearr \limsup_{\lambda \to \infty} \compgval{S}{G} \leq \omega_{qc}^{\ast}(\game{G}) + \eps. \eearr
    Assume the contrary that $\limsup_{\lambda \to \infty} \compgval{S}{G} > \omega_{qc}^{\ast}(\game{G})$. Then, there exists an $\eps > 0$ such that
    \be \limsup_{\lambda \to \infty} \compgval{S}{G} - \omega_{qc}^{\ast}(\game{G}) > \eps, \ee
    which contradicts the above inequality. Hence, we must have
    \be \limsup_{\lambda \to \infty} \compgval{S}{G} \leq \omega_{qc}^{\ast}(\game{G}). \ee
\end{proof}

\section{A computational Tsirelson's theorem for all NPA level-1 games}\label{chap:npa_level1}

In this section, we will show that any NPA level-1 SoS decomposition of a game polynomial of a nonlocal game can be re-expressed as a {\nice} SoS decomposition at level 1. This shows that compilation preserves the quantum soundness at NPA level 1. In this section, we will refer to NPA level-1 games. These are nonlocal games for which level 1 of the NPA hierarchy is sufficient to bound the quantum value. 

We shall give two proofs of this. The first is a ``matrix-theoretic'' proof, which exploits the freedom in the Cholesky decomposition of a positive semidefinite matrix. The second gives a more ``NPA-theoretic'' proof that transforms the Gram vectors of the moment matrices, this is reminiscent of Tsirelson-type proofs where one manipulates Gram vectors to construct feasible correlations.

\subsection{The Cholesky decomposition approach}

We begin with a technical lemma about the Cholesky decomposition. 

\subsubsection{Choice in Cholesky decomposition principal submatrices} \label{sec:cholesky-freedom}
We know that the Cholesky decomposition of a positive semidefinite matrix is not unique. In this section, we will show that we can choose an arbitrary factorization for the top left block of a matrix~$M$ by adapting the rest of the decomposition to a valid Cholesky decomposition. Specifically, we show the following lemma:

\begin{lem} \label{lem:cholesky-freedom}
    Let $M \in \mathbb{C}^{n \times n}$ be a positive semidefinite matrix, with the following block structure:
    \be    
        M = \blockmat{M_a}{M_{ab}}{M^\dagger_{ab}}{M_b},
    \ee
    where $M_a, M_{ab}, M_b$ are block matrices, and $R_a$ give a specific Cholesky decomposition of $M_a$, i.e., $M_a = R^\dagger_a R_a$. Then, we can ``complete'' the Cholesky decomposition given by $R_{a}$ with
    \be
        R = \blockmat{R_a}{R_{ab}}{0}{R_b}
    \ee
    such that $M = R^{\dagger}R$, for some block matrix $R_{ab}$ and upper triangular matrix $R_b$.
\end{lem}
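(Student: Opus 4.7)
The plan is to write the equation $M = R^\dagger R$ blockwise. Using the proposed block form of $R$, one obtains
\[
R^\dagger R \;=\; \begin{pmatrix} R_a^\dagger R_a & R_a^\dagger R_{ab} \\ R_{ab}^\dagger R_a & R_{ab}^\dagger R_{ab} + R_b^\dagger R_b \end{pmatrix},
\]
so matching against $M$ reduces the lemma to three equations: $R_a^\dagger R_a = M_a$ (given), $R_a^\dagger R_{ab} = M_{ab}$, and $R_b^\dagger R_b = M_b - R_{ab}^\dagger R_{ab}$. The task is therefore to (i) solve the middle equation for $R_{ab}$ and (ii) verify that the right-hand side of the last equation is positive semidefinite so that $R_b$ can be taken to be an upper-triangular Cholesky factor of it.

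For step (i), the natural candidate is $R_{ab} := (R_a^\dagger)^+ M_{ab}$, where $(\cdot)^+$ denotes the Moore--Penrose pseudoinverse. Then $R_a^\dagger R_{ab} = R_a^\dagger (R_a^\dagger)^+ M_{ab} = P_{\mathrm{range}(R_a^\dagger)}\, M_{ab}$. Thus the equation $R_a^\dagger R_{ab} = M_{ab}$ holds if and only if every column of $M_{ab}$ lies in $\mathrm{range}(R_a^\dagger) = \mathrm{range}(M_a)$. This is the \emph{key algebraic fact} to establish, and it follows from $M \succeq 0$: if $v \in \ker M_a$ then the PSD condition applied to vectors of the form $(tv, w)^\top$ for all $w$ and $t \in \RR$ forces $M_{ab}^\dagger v = 0$, so $\ker M_a \subseteq \ker M_{ab}^\dagger$, i.e.\ $\mathrm{range}(M_{ab}) \subseteq \mathrm{range}(M_a)$. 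This is the step I expect to be the main (though standard) obstacle, since it is the only place where PSDness of the full matrix $M$ enters.

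For step (ii), I would compute $R_{ab}^\dagger R_{ab}$ via the SVD of $R_a$. Writing $R_a = U\Sigma V^\dagger$ yields $M_a = V (\Sigma^\dagger\Sigma) V^\dagger$ and a direct calculation gives
\[
R_{ab}^\dagger R_{ab} \;=\; M_{ab}^\dagger\, V (\Sigma^\dagger \Sigma)^+ V^\dagger\, M_{ab} \;=\; M_{ab}^\dagger\, M_a^+\, M_{ab}.
\]
Hence $M_b - R_{ab}^\dagger R_{ab}$ is exactly the generalized Schur complement of $M_a$ in $M$, which is PSD by the standard Schur-complement characterization for PSD matrices (using the range-containment just established).

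Finally, since every PSD matrix admits an upper-triangular Cholesky factorization, I choose any such $R_b$ with $R_b^\dagger R_b = M_b - R_{ab}^\dagger R_{ab}$ and assemble $R$ in the claimed block form. By construction all three block equations hold, so $M = R^\dagger R$, completing the lemma.
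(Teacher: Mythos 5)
Your proof is correct, but it takes a genuinely different route from the paper's. The paper starts from an arbitrary block-upper-triangular Cholesky factorization $M = S^\dagger S$, observes that $S_a$ and the prescribed $R_a$ are both Gram factors of $M_a$ and hence related by a unitary $V$ with $R_a = V S_a$ (Horn--Johnson, Theorem 7.3.11), and then simply sets $R_{ab} := V S_{ab}$ and $R_b := S_b$; the verification $R^\dagger R = M$ is a one-line computation because $V^\dagger V = I$. You instead build the completion from scratch: $R_{ab} := (R_a^\dagger)^+ M_{ab}$, with the range containment $\mathrm{range}(M_{ab}) \subseteq \mathrm{range}(M_a)$ (forced by $M \succeq 0$ via your kernel argument) guaranteeing $R_a^\dagger R_{ab} = M_{ab}$ exactly, and the generalized Schur complement $M_b - M_{ab}^\dagger M_a^+ M_{ab} \succeq 0$ supplying $R_b$. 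Both arguments are sound. The paper's is shorter and pushes all the analytic content into two cited standard facts (existence of a block-triangular Cholesky factorization of $M$, and unitary equivalence of Gram factors---the latter tacitly requiring the two factors to have the same number of rows, possibly after zero-padding); yours is explicit and self-contained, pinpoints exactly where positive semidefiniteness of the full matrix $M$ (rather than just of $M_a$) enters, and yields closed-form expressions for $R_{ab}$ and $R_b^\dagger R_b$, at the cost of invoking the pseudoinverse calculus and the Albert/Schur-complement characterization of block positive semidefinite matrices.
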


\begin{proof}
    Firstly, note that $M_a$ is necessarily positive-definite as all principal submatrices of a positive semidefinite matrix are positive semidefinite. Take an arbitrary Cholesky decomposition of $M = S^\dagger S$ such that
    \bea
    S = \blockmat{S_a}{S_{ab}}{0}{S_b},
    \eea
    then from the Cholesky decomposition
    \bea
    \blockmat{M_a}{M_{ab}}{M^\dagger_{ab}}{M_b} = \blockmat{S^\dagger_a}{0}{S^\dagger_{ab}}{S^\dagger_b} \blockmat{S_a}{S_{ab}}{0}{S_b},
    \eea
    we get the following relations from multiplying the matrix blocks:
    \bea
    M_a &=& S^\dagger_a S_a, \\
    M_{ab} &=& S^\dagger_a S_{ab}, \\
    M_b &=& S^\dagger_b S_b.
    \eea

    Hence, $S_a$ is a valid Cholesky decomposition of $M_a$. Note that $R_a$ is also a valid Cholesky decomposition of $M_a$. Specifically, both $R_a$ and $S_a$ are Gram decomposition of the matrix $M_a$. So, we can find a unitary matrix $V$ such that $R_a = V S_a$. This is a well-known lemma that follows from Theorem 7.3.11 of \cite{Horn_Johnson_2013}. We can now construct the needed Cholesky decomposition of $M$ as follows. Let
    \be
    R = \blockmat{R_a}{R_{ab}}{0}{R_b} = \blockmat{V S_a}{V S_{ab}}{0}{S_b},
    \ee
    and verify that $R^\dagger R = M$ by computing
    \bea
    R^\dagger R &=& \blockmat{S^\dagger_a V^\dagger}{0}{S^\dagger_{ab} V^\dagger}{S^\dagger_b} \blockmat{V S_a}{V S_{ab}}{0}{S_b}\\
    &=& \blockmat{S^\dagger_a V^\dagger V S_a}{S^\dagger_a V^\dagger V S_{ab}}{S^\dagger_{ab} V^\dagger V S_a}{S^\dagger_{ab} V^\dagger V S_{ab} + S^\dagger_b S_b} \\
    &=& \blockmat{S^\dagger_a S_a}{S^\dagger_a S_{ab}}{S^\dagger_{ab} S_a}{S^\dagger_{ab} S_{ab} + S^\dagger_b S_b}\\
    &=& \blockmat{M_a}{M_{ab}}{M^\dagger_{ab}}{M_b} \\
    &=& M.
    \eea
    Hence, $R$ gives a valid Cholesky decomposition of $M$, as desired.
\end{proof}

\subsubsection{Unitary freedom of SoS decompositions} \label{subsec: unitary freedom}
Here, we establish that given a particular SoS decomposition, we can \qq{apply a unitary} to the coefficients of the SoS to obtain another SoS decomposition. More specifically, note that any SoS decomposition of some polynomial $Q \in \mathcal{A}_{\mathcal{G}}$ can be expressed as
\begin{eqnarray} \label{eq:sos_matrix_form}
    Q = \vec{b}^\dagger S^\dagger S \vec{b},
\end{eqnarray}
where $\vec{b}$ is the basis vector of monomials of operators and $S$ is the matrix of coefficients. Each row in the matrix $S$ corresponds to a polynomial term in the SoS decomposition.

Now, note that, we can modify the matrix $S$ by applying any unitary $U$ as follows without changing the SoS decomposition. Indeed,
\begin{eqnarray}
    \vec{b}^\dagger (US)^\dagger (US) \vec{b} = \vec{b}^\dagger S^\dagger U^\dagger U S \vec{b} = \vec{b}^\dagger S^\dagger S \vec{b} = Q.
\end{eqnarray}
Thus, we can apply a unitary to the matrix $S$ while still remaining a SoS decomposition for $Q$.

Now, any matrix $S$ has a QR decomposition, $S = QR$, where $Q$ is unitary and $R$ is upper triangular. This, together with the above observation, implies that any SoS decomposition
\[ Q = \vec{b}^{\dagger} S^{\dagger}S \vec{b} \]
has an ``upper triangular'' SoS decomposition
\[ Q = \vec{b}^{\dagger} R^{\dagger}R \vec{b}. \]

\subsubsection{Structure of {\nice} sum-of-squares decomposition} \label{sec:nice-structure}

From the discussion in the previous subsection, we know that we can always obtain an upper triangular SoS decomposition. Recall that we said in \Cref{def:niceness} that a SoS decomposition is nice if every square term in the decomposition contains powers of at most one Alice operator. In other words, it contains monomials corresponding to only one question of Alice. So, each row in $S$, has non-zero entries corresponding to at most one Alice's operator.

Suppose we have a SoS decomposition from level-1 of the SoS hierarchy. The basis will be, after a transformation into observables as described in \Cref{sec:observables},
\bearr
    \vec{b} = \{ M_{1}, \dots, M_{1}^{d-1}, \dots, M_{k}, \dots, M_{k}^{d-1}, \dots, N_{1}, \dots, N_{1}^{d-1}, \dots, N_{k}^{d-1}, \dots, N_{k}^{d-1}, 1\},
\eearr
where $k$ is the number of questions and $d$ is the number of answers. Hence, the length of this basis is $2k(d-1) + 1$. Then, the {\nice} SoS decomposition will take the following form:

\begin{equation} \label{eqn:nice_SOS_matrix}
\left(
\begin{array}{c c c c||c}
M_1 & \mathbf{0} & \cdots & \mathbf{0} & \\
\cline{1-4}
\mathbf{0} & M_2 & \cdots & \mathbf{0} & \multirow{5}{*}{{\Huge $N$}} \\
\cline{1-4}
\vdots & \vdots & \ddots & \vdots & \\
\cline{1-4}
\mathbf{0} & \mathbf{0} & \cdots & M_k & \\
\cline{1-4}
\multicolumn{4}{c||}{\rule{0pt}{4ex}\text{\Large $\mathbf{0}$}\rule[-2ex]{0pt}{0pt}} & 
\end{array}
\right).
\end{equation}

Here, each matrix $M_i$ are $(d-1) \times (d-1)$ block upper-triangular matrices and $N$ is a $k(d-1) + 1$ wide block matrix.

The goal of the next section shall be to show that any SoS decomposition from NPA level 1 can be transformed to this nice block form.

\subsubsection{Compilation preserves NPA level-1 value} \label{sec:lvl1_npa_lvl1_nice}

Let $\mathcal{G}$ be a nonlocal game with $k$ questions and $d+1$ answers. As described in \Cref{sec:observables}, the game polynomial can be represented as a sum of monomials of the form $M_x^j N_y^k$.
\begin{eqnarray}
    P_{\mathcal{G}} = \sum_{j,k, x,y} d_{j,k,x,y} M_{x}^{j} N_{y}^{k}.
\end{eqnarray}
Suppose we have a SoS decomposition for $\mathfrak{p}^{1}(\mathcal{G}) I - P_{\mathcal{G}}$. From \Cref{eq:sos_matrix_form}, we know that this can be expressed as
\[ \mathfrak{p}^{1}(\mathcal{G}) I - P_{\mathcal{G}} = \mathbf{b}^{\dagger}S^{\dagger}S\mathbf{b}, \]
for some coefficient matrix $S$. Let $M := S^{\dagger}S$.

Now, we know that $M$ must have the form: 
\begin{equation} \label{eqn:M_nice_structure}
    M = \begin{pNiceArray}[first-col, first-row]{cccc|ccccc}
        & M_1, \cdots M_1^d & M_2, \cdots M_2^d & \cdots & M_k, \cdots M_k^d & N_1, \cdots N_1^d &  & \cdots & N_k, \cdots N_k^d & I \\
    M_1, \cdots, M_1^d & \mat{M}_1 & \mathbf{0} & \cdots & \mathbf{0} & \Block{4-5}{\text{\Large$\mat{C}_{ab}$}} & &  &  & \\
    M_2, \cdots, M_2^d & \mathbf{0} & \mat{M}_2 & \cdots & \mathbf{0} &  & &  &  & \\
    \vdots & \vdots & \vdots & \ddots & \vdots &  & &  & &  \\
    M_k, \cdots, M_k^d & \mathbf{0} & \mathbf{0} & \cdots & \mat{M}_k &  & &  &  & \\
    \hline
    N_1, \cdots, N_1^d & \Block{5-4}{\text{\Large$\mat{C}^\dagger_{ab}$}} &  &  &  & \Block{5-5}{\text{\Large$\mat{M}_{b}$}} &  &  & & \\
    &  &  &  &  & &  &  & & \\
    \vdots &  &  &  &  & &  &  & & \\
    N_k, \cdots, N_k^d &  &  &  &  & &  &  & & \\
    I &  &  &  &  & &  &  & &
\end{pNiceArray}
\end{equation}

Here, $\mat{M}_i$ are $d \times d$ block matrices, $\mat{M}_{b}$ is a $k(d+1) \times k(d+1)$ matrix, and $\mat{C}_{ab}$ is a $kd \times k(d+1)$ matrix. First, we show the following lemma which will help us prove that $M$ has a {\nice} SoS decomposition.
\begin{lem}\label{lem:block_cholesky}
    A block-diagonal matrix has a block-diagonal Cholesky decomposition.
\end{lem}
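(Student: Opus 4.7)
The plan is to prove this by direct construction. Given a block-diagonal positive semidefinite matrix $M = \bigoplus_{i} M_i$, I will produce a block-diagonal Cholesky factor by Cholesky-factoring each diagonal block independently and then assembling the pieces.

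First I would observe that each diagonal block $M_i$ is itself positive semidefinite, since principal submatrices of PSD matrices are PSD. Therefore each $M_i$ admits a Cholesky decomposition $M_i = R_i^{\dagger} R_i$ with $R_i$ upper triangular. I would then define $R := \bigoplus_{i} R_i$. This $R$ is upper triangular (each diagonal block is upper triangular and all off-diagonal blocks are zero) and block-diagonal by construction. A one-line computation
\[
R^{\dagger} R \;=\; \bigoplus_i R_i^{\dagger} R_i \;=\; \bigoplus_i M_i \;=\; M
\]
verifies that $R$ is a valid Cholesky factor of $M$ with the desired block-diagonal structure.

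There is no real obstacle here: the only thing to check is that the off-diagonal blocks of $R^{\dagger} R$ vanish, which is immediate from the fact that $R_i^{\dagger} R_j = 0$ whenever the ambient positions have zero blocks in $R$. This lemma will then feed into the next step of the argument in \Cref{sec:lvl1_npa_lvl1_nice}, where it is applied to the Alice block $\bigoplus_i \mat{M}_i$ of the moment matrix in \Cref{eqn:M_nice_structure} to produce a block-diagonal Cholesky factor on that submatrix, which combined with \Cref{lem:cholesky-freedom} allows this particular factorization to be extended to a Cholesky decomposition of the full $M$, yielding the nice SoS structure of \Cref{eqn:nice_SOS_matrix}.
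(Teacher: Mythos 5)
Your proposal is correct and follows the paper's own argument essentially verbatim: observe each diagonal block is PSD as a principal submatrix, Cholesky-factor each block, and assemble the direct sum $R=\bigoplus_i R_i$, verifying $R^\dagger R = M$ blockwise. No further comment is needed.
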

\begin{proof}
    Firstly, note that if a block-diagonal matrix is positive semidefinite, then each block is positive semidefinite as all principal submatrices of a positive semidefinite matrix are positive semidefinite. 
    
    So, we can decompose each block $M_i$ as $M_i = R_i^\dagger R_i$. Now, we can construct a block-diagonal matrix $R$ as $R = \begin{pmatrix} R_1 & 0 & \cdots & 0 \\ 0 & R_2 & \cdots & 0 \\ \vdots & \vdots & \ddots & \vdots \\ 0 & 0 & \cdots & R_k \end{pmatrix}$.\\
    Now, 
    \begin{equation}
        R^\dagger R = \begin{pmatrix} R_1^\dagger R_1 & 0 & \cdots & 0 \\ 0 & R_2^\dagger R_2 & \cdots & 0 \\ \vdots & \vdots & \ddots & \vdots \\ 0 & 0 & \cdots & R_k^\dagger R_k \end{pmatrix} = \begin{pmatrix}
        M_1 & 0 & \cdots & 0 \\
        0 & M_2 & \cdots & 0 \\
        \vdots & \vdots & \ddots & \vdots \\
        0 & 0 & \cdots & M_k
    \end{pmatrix} = M.
\end{equation}
    Hence, we have shown that a block-diagonal matrix has a block-diagonal Cholesky decomposition.
\end{proof}

\begin{thm} \label{thm:lvl1-nicelvl1}
    Let $\game{G}$ be a nonlocal game with game polynomial $\gp{G}$ in Alice and Bob PVMs $\Apovms$ and $\Bpovms$ respectively. If we have a degree-$1$ NPA sum-of-squares certificate for $\mathfrak{p}^{1}(\mathcal{G}) I - \gp{G}$, i.e. 
    \be
        \mathfrak{p}^{1}(\mathcal{G}) I - \gp{G} = \sum_{i} r_i^\dagger r_i ,
    \ee
    where $r_i$ are linear polynomials in the PVMs. Then, we can construct a degree-$1$ {\nice} sum-of-squares certificate for $\mathfrak{p}^{1}(\mathcal{G}) I - \gp{G}$ of the form 
    \be
        \mathfrak{p}^{1}(\mathcal{G}) I - \gp{G} = \sum_{i} {r'_i}^\dagger r'_i,
    \ee
    where $r'_i$ are linear polynomials in the PVMs and each $r'_i$ contains terms corresponding to only one question of Alice.
\end{thm}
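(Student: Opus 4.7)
The plan is to combine the block-diagonal structure that automatically appears in NPA level-1 moment matrices with the freedom in Cholesky decompositions established in \Cref{lem:cholesky-freedom} and \Cref{lem:block_cholesky}. After transforming the PVMs into observables as in \Cref{sec:observables}, I would write the given SoS certificate in matrix form $\mathfrak{p}^{1}(\mathcal{G}) I - \gp{G} = \mathbf{b}^{\dagger} M \mathbf{b}$, where $M = S^{\dagger} S \succeq 0$ and $\mathbf{b}$ is the level-1 observable basis $\{I, M_{x}^{j}, N_{y}^{k}\}$.

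The key structural observation is that $M$ must have the form shown in \Cref{eqn:M_nice_structure}; in particular, the top-left Alice-Alice block is automatically block diagonal. To see this, fix $x \neq x'$ and consider the entry $M[M_{x}^{j}, M_{x'}^{j'}]$: it is the unique contribution to the coefficient of the monomial $M_{x}^{-j} M_{x'}^{j'}$ in the polynomial $\mathbf{b}^{\dagger} M \mathbf{b}$, because (i) no game-algebra relation simplifies $M_{x}^{-j} M_{x'}^{j'}$, and (ii) at NPA level 1 no other pair of basis elements multiplies to this monomial. Since the game polynomial contains no cross-Alice-question terms, the coefficient, and hence the entry, must vanish.

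With the block-diagonal structure in hand, the remaining steps are nearly routine. I would apply \Cref{lem:block_cholesky} to obtain a block-diagonal Cholesky factorization $M_{a} = R_{a}^{\dagger} R_{a}$ of the Alice part, where $R_{a}$ is block diagonal with one block per Alice question. Then \Cref{lem:cholesky-freedom} lifts this to an upper-triangular Cholesky factor $R$ of the full matrix $M = R^{\dagger} R$ that preserves $R_{a}$ as its top-left block. Reading off $\mathbf{b}^{\dagger} R^{\dagger} R \mathbf{b} = \sum_{i} (r'_{i})^{\dagger} r'_{i}$, the rows coming from the $R_{a}$ portion touch Alice operators for at most one question (together with arbitrary Bob and identity entries supplied by the extension block $R_{ab}$), while rows coming from the bottom block $R_{b}$ contain only Bob and identity operators. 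In either case each $r'_{i}$ satisfies the niceness condition of \Cref{def:niceness}, and converting observables back to PVMs via the Fourier identities of \Cref{sec:observables} yields the claimed nice degree-1 certificate.

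The main obstacle is the block-diagonality claim in Step 2. While it is essentially a consequence of the level-1 basis being ``too small'' to produce cross-question Alice monomials via any algebraic simplification, it must be argued carefully, in particular ruling out that any constraint matrix $\mathbf{C}_{k}$ secretly links these off-diagonal entries to other entries of $M$. This is also the step that blocks a direct generalization to higher NPA levels, since at level 2 and above the basis contains products such as $M_{x} N_{y}$, allowing cross-question Alice monomials to appear via different pairings, so that the Alice block need no longer be block diagonal. Once Step 2 is pinned down, the block Cholesky decomposition and its extension are immediate from the two structural lemmas already in place.
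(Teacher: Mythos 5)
Your proposal is correct and follows essentially the same route as the paper's first proof of \Cref{thm:lvl1-nicelvl1}: express the certificate as $\mathbf{b}^{\dagger}M\mathbf{b}$, observe that the Alice--Alice block of $M$ is forced to be block diagonal (a step you actually justify more explicitly than the paper, which asserts the structure of \Cref{eqn:M_nice_structure} without argument), and then combine \Cref{lem:block_cholesky} with \Cref{lem:cholesky-freedom} to read off a nice factorization. The paper also gives a second, independent proof via Gram vectors of the moment matrices, but your argument matches the Cholesky approach.
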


\begin{proof}
    Construct a basis of linear monomials of the PVMs as $\vec{b} = \cbrac{\Apovms, \Bpovms, \II}$. Choose an arbitrary answer for Alice and Bob, say $0$ and $0$. Remove all the POVMs $\cbrac{M_{0x}}_{x\in \alpbt{X}}, \cbrac{N_{0y}}_{y\in \alpbt{Y}}$ from the basis so that the basis elements are linearly independent.

    Now, each $r_i$ can be expressed as a linear combination of the remaining monomials. Let $\vec{r}_i$ be the vector of coefficients of $r_i$ in this basis such that $r_i = \vec{r}_i \vec{b}$. Construct the matrix $R$ with rows as $\vec{r}_i$, such that 
    \be \sum_i r_i^\dagger r_i = \vec{b}^\dagger R^\dagger R \vec{b}. \ee
    Let $M = R^\dagger R$. We know that $M$ should have a nice structure as shown in \Cref{eqn:M_nice_structure}. So, $M$ has the structure,
\be M = \blockmat{\mat{M_a}}{\mat{C_{ab}}}{\mat{C_{ab}}^\dagger}{\mat{M_b}}, \ee
where $\mat{M_a}$ is a block-diagonal matrix. From \Cref{lem:block_cholesky}, we know that $\mat{M_a}$ has a block-diagonal Cholesky decomposition. Applying \Cref{lem:cholesky-freedom}, we can obtain a Cholesky decomposition of $M = (R^{\prime})^\dagger R^{\prime}$ of the form:
\be
    R' = \blockmat{\mat{R}_a}{\mat{R}_{ab}}{0}{\mat{R}_b},
\ee
where $\mat{R}_a$ is block-diagonal. Each block of $\mat{R}_a$ corresponds to one question of Alice. So, $M$ has a {\nice} SoS decomposition given by $R'$. From this {\nice} decomposition, we can construct the corresponding {\nice} SoS decomposition $r'_i$, such that 
\be
    \sum_i {r'_i}^\dagger r'_i = \sum_i r_i^\dagger r_i.
\ee
This gives the desired {\nice} SoS decomposition
\be
        \omega - \gp{G} = \sum_{i} {r'_i}^\dagger r'_i,
    \ee
as desired.
\end{proof}

\begin{cor}
    Let $\game{G}$ be a nonlocal game where the quantum commuting value is achieved at NPA level 1. For any computationally bounded strategy $\mathcal{S}$, the winning probability of strategy is bounded as 
    \be \compgval{S}{G} \leq \omega_{qc}(\game{G}) + \negl_{\mathcal{S}}(\lambda) \ee
    where $\omega_{qc}(\game{G})$ is the quantum commuting value of the game and $\negl_{\mathcal{S}}(\lambda)$ is a negligible function of the security parameter $\lambda$.
\end{cor}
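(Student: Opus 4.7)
The plan is to chain together \Cref{thm:lvl1-nicelvl1} with \Cref{thm:nice-compiles}, which are already proved. By hypothesis, the quantum commuting value is achieved at NPA level $1$, meaning $\mathfrak{p}^{1}(\game{G}) = \omega_{qc}^{\ast}(\game{G})$. By standard strong duality for the NPA semidefinite program (e.g., verifying Slater's condition, which is routine for the NPA SDP), the dual level-$1$ sum-of-squares program achieves the same optimum. In particular, this produces a degree-$1$ SoS decomposition of $\omega_{qc}^{\ast}(\game{G}) \II - P_{\game{G}}$ as a sum of squares of linear polynomials in Alice's and Bob's PVM elements.

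Next, I would invoke \Cref{thm:lvl1-nicelvl1} to transform this degree-$1$ SoS certificate into a degree-$1$ \emph{nice} SoS certificate, certifying the same upper bound $\omega_{qc}^{\ast}(\game{G})$. By \Cref{def:niceness}, each square term in this new decomposition involves PVM elements from only a single question $x$ of Alice, while arbitrary Bob operators may appear.

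Finally, I would apply \Cref{thm:nice-compiles} with $\omega = \omega_{qc}^{\ast}(\game{G})$. Since a nice degree-$1$ SoS certificate now witnesses the bound, the theorem immediately yields, for any computationally bounded strategy $\strat{S}$, that $\compgval{S}{G} \leq \omega_{qc}^{\ast}(\game{G}) + \negl_{\strat{S}}(\lambda)$, where the negligible function depends on $\strat{S}$ and the fixed certificate (which is determined by $\game{G}$).

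The main subtlety is the first step: passing from ``$\omega_{qc}^{\ast}(\game{G})$ is achieved at primal NPA level $1$'' to ``there exists a degree-$1$ SoS certificate witnessing this value.'' This requires the absence of a duality gap between the primal and dual SDPs at level $1$, which holds under standard Slater-type conditions and is well documented for the NPA hierarchy; the cleanest writeup would briefly justify this (or assume it as the interpretation of ``achieved at NPA level 1''). Once strong duality is in hand, the rest of the proof is just a two-line composition of the stated theorems.
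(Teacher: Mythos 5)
Your proposal is correct and follows essentially the same route as the paper: obtain a degree-$1$ SoS certificate for $\omega_{qc}^{\ast}(\game{G}) \II - P_{\game{G}}$, convert it to a nice one via \Cref{thm:lvl1-nicelvl1}, and conclude with \Cref{thm:nice-compiles}. The only difference is that you explicitly flag the strong-duality step needed to pass from the primal level-$1$ value to a degree-$1$ SoS certificate, which the paper's proof simply asserts; this is a reasonable point of care but does not change the argument.
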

\begin{proof}
As the quantum value of $\mathcal{G}$ is achieved at level 1, we know that there is a degree-1 SoS certificate for $\omega_{qc}(\mathcal{G}) - \gp{G}$. From \Cref{thm:lvl1-nicelvl1}, we know that we can construct a {\nice} SoS decomposition for $\omega_{qc}(\mathcal{G}) - \gp{G}$. 
Applying \Cref{thm:nice-compiles} gives the desired result.
\end{proof}

This result encapsulates the result for XOR games given in \cite{Cui+24}.

\subsection{The Gram vector approach}

In this section, we give an alternate proof for \Cref{thm:lvl1-nicelvl1}. To show this, we just need to show that the NPA level 1 value is equal to the one-sided NPA level 1 value. In particular, we show that every feasible solution for NPA level 1 has a matching feasible solution for one-sided NPA level 1 with the same value and vice-versa. 

We begin by stating a standard fact of linear algebra.

\begin{lem} \label{lem:unitary_lem}
    For two sets $E = \cbrac{v_{i}}_{i=1}^{k}$, $F = \cbrac{w_{i}}_{i=1}^{k}$ of vectors, there exists a unitary mapping $E$ to $F$ if and only if $\inprod{v_{i}}{v_{j}} = \inprod{w_{i}}{w_{j}}$ for all $i, j \in [k]$.
\end{lem}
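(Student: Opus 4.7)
The plan is to prove both directions separately. The forward direction is immediate: if $U$ is a unitary with $Uv_i = w_i$ for all $i$, then
\[
\inprod{w_i}{w_j} = \inprod{Uv_i}{Uv_j} = \inprod{v_i}{U^\dagger U v_j} = \inprod{v_i}{v_j},
\]
so the Gram matrices agree. This takes a single line.

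For the reverse direction, I would construct $U$ in two stages. First I define a linear map $U_0$ on the subspace $V = \operatorname{span}(v_1,\dots,v_k)$ by setting $U_0\bigl(\sum_i c_i v_i\bigr) = \sum_i c_i w_i$. The key step is to verify that $U_0$ is well-defined: if $\sum_i c_i v_i = 0$, then
\[
\Bigl\lVert \sum_i c_i w_i \Bigr\rVert^2 \;=\; \sum_{i,j} \overline{c_i} c_j \inprod{w_i}{w_j} \;=\; \sum_{i,j} \overline{c_i} c_j \inprod{v_i}{v_j} \;=\; \Bigl\lVert \sum_i c_i v_i \Bigr\rVert^2 \;=\; 0,
\]
using the hypothesis on the Gram matrices. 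The same computation applied to arbitrary linear combinations shows that $U_0 : V \to W := \operatorname{span}(w_1,\dots,w_k)$ is an isometry. In particular $\dim V = \dim W$, so the orthogonal complements $V^\perp$ and $W^\perp$ have the same dimension (after, if necessary, isometrically embedding both sets into a common Hilbert space of sufficiently large dimension, which one may assume from the start).

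Second, I extend $U_0$ to a unitary $U$ on the ambient space by choosing any orthonormal bases of $V^\perp$ and $W^\perp$ and defining $U$ to map one to the other. The resulting $U$ is an isometry on the direct sum decomposition, hence unitary, and satisfies $Uv_i = w_i$ for each $i$. There is no real obstacle here—this is a standard argument—so the only subtlety worth flagging in the write-up is the well-definedness check and the implicit assumption that $V$ and $W$ sit inside a common space (or, equivalently, that one works up to isometric embedding).
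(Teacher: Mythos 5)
Your proof is correct. The paper itself offers no proof of this lemma---it is stated as ``a standard fact of linear algebra'' and used as a black box---so there is nothing to compare against; your argument is the canonical one. The two points that actually require care (well-definedness of $U_0$ on the span via the Gram-matrix identity, and the need for the ambient spaces to be identified or of matching dimension so that the orthogonal complements can be matched up) are both handled explicitly, which is exactly what a written-out proof of this lemma should do.
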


\begin{proof}[Proof of \Cref{thm:lvl1-nicelvl1}]

    Let $\Gamma$ be a feasible solution to the primal NPA hierarchy as given in \Cref{eq:npa_formulation}. This restricts to a feasible solution for \Cref{one_sided_npa_primal} with the same value. 

    Now, let us fix a level-1 feasible solution $\Gamma$ of \Cref{one_sided_npa_primal} with blocks given by $\Gamma^{ax}$. Since $\Gamma^{ax} \geq 0$ for all $a \in \mathcal{A}, x \in \mathcal{X}$ we have Gram vectors $\cbrac{ w_{I^{ax}} } \cup \cbrac{ w_{N_{by}^{ax}} }_{b \in \mathcal{B},y \in \mathcal{Y}}$, where the vector indexed by $I^{ax}$ corresponds to the Gram vector for the $ax$ block for the $I$ row/column and $N^{ax}_{by}$ corresponds to the $ax$ block for the $N_{by}$ row/column. Note that
    \begin{align*}
        \inprod{ \sum_{a} w_{N_{by}^{ax}} \otimes \ket{a} }{ \sum_{a^{\prime}} w_{ N_{b^{\prime}y^{\prime}}^{a^{\prime}x} } \otimes \ket{a^{\prime}} } &= \sum_{a} \inprod{ w_{N_{by}^{ax}} }{ w_{N_{b^{\prime}y^{\prime}}^{ax}} } \\
            &= \sum_{a} \Gamma^{ax}(N_{by}, N_{b^{\prime}y^{\prime}}) \\
            &= \sum_{a} \Gamma^{a0}(N_{by}, N_{b^{\prime}y^{\prime}}) \\
            &= \inprod{ \sum_{a} w_{N_{by}^{a0}} \otimes \ket{a} }{ \sum_{a^{\prime}} w_{ N_{b^{\prime}y^{\prime}}^{a^{\prime}0} } \otimes \ket{a^{\prime}} }.
    \end{align*}
    For each $x \in \mathcal{X}$, set 
    \[ E_{x} := \cbrac{ \sum_{a} w_{I^{ax}} \otimes \ket{a} } \cup \cbrac{ \sum_{a} w_{ B_{by}^{ax} } \otimes \ket{a} }. \]
    Then by \Cref{lem:unitary_lem}, there exists a unitary $U_{x}$ sending $E_{x}$ to $E_{0}$ for each $x \in \mathcal{X}$. 

    We shall now define Gram vectors for the standard NPA hierarchy. Let
    \begin{align*}
        v_{M_{ax}} &:= U_{x} \brac{ w_{I^{ax}} \otimes \ket{a} }, \\
        v_{N_{by}} &:= \sum_{a} w_{ N_{by}^{a0} } \otimes \ket{a}, \\
        v_{I} &:= \sum_{a} w_{I^{a0}} \otimes \ket{a}.
    \end{align*}

    The Gram matrix $\Lambda$ of this set of vectors will be our feasible solution to \Cref{eq:npa_formulation}. Clearly $\Lambda$ is positive semidefinite (as it is a Gram matrix) and so we just need to check that $\Lambda$ satisfies all the constraints and its objective value matches that of $\Gamma$. We check the latter condition first:
    \begin{align*}
        \Lambda(M_{ax}, N_{by}) &= \inprod{ v_{M_{ax}} }{ v_{N_{by}} } \\
                          &= \inprod{ U_{x}\brac{ w_{I^{ax}} \otimes \ket{a} } }{ \sum_{a^{\prime}} w_{ N_{by}^{a^{\prime}0} } \otimes \ket{a^{\prime}} } \\
                          &= \inprod{  w_{I^{ax}} \otimes \ket{a} }{ U_{x}^{\ast} \brac{ \sum_{a^{\prime}} w_{ N_{by}^{a^{\prime}0} } \otimes \ket{a^{\prime}} } } \\
                          &= \inprod{  w_{I^{ax}} \otimes \ket{a} }{ \sum_{a^{\prime}} w_{ N_{by}^{a^{\prime}x} } \otimes \ket{a^{\prime}} } \\
                          &= \inprod{ w_{I^{ax}} }{ w_{ N_{by}^{ax} } } \\
                          &= \Gamma^{ax}(I, N_{by}),
    \end{align*}
    and hence $\sum_{a,b,x,y} c_{a,b,x,y} \Lambda(M_{ax}, N_{by}) = \sum_{a,b,x,y} c_{a,b,x,y} \Gamma^{ax}(I, N_{by})$ which is exactly the equality of optimization values for these two hierarchies.

    Now, to check the identity constraint
    \[ \Lambda(I, I) = \inprod{ v_{I} }{ v_{I} } = \sum_{a} \inprod{ w_{I^{a0}} }{ w_{I^{a0}} } = \sum_{a} \Gamma^{a0}(I, I) = 1. \]

    Next, we check the algebraic constraints imposed by the game algebra $\mathcal{A}_{\mathcal{G}}$. Since we are in level $1$, the only algebraic relations to check are $\sum_{a} M_{ax} = \sum_{b} N_{by} = I$ and that $M_{ax}N_{by} = N_{by}M_{ax}$.
    
    The commutativity constraint is easy to check as
    \[ \Lambda(M_{ax}, N_{by}) = \Gamma^{ax}(I, N_{by}) = \Gamma^{ax}(N_{by}, I) = \Lambda(N_{by}, M_{ax}),  \]
    where the first and last equality is from the computation of $\Gamma(M_{ax}, N_{by})$ above, and the second equality is from $\Gamma^{ax}(s_{1}, t_{1}) = \Gamma^{ax}(s_{2}, t_{2})$ whenever $s_{1}^{\dagger}t_{1} = s_{2}^{\dagger}t_{2}$.
    
    Now, we move on to check that $\sum_{b} N_{by} = I$. To check this, we need to check that $\Lambda(Q, \sum_{b} N_{by}) = \Lambda(Q, I)$ for any $Q \in \mathbf{b}^{1}$. We first check $\sum_{b} N_{by}$ against the identity
    \begin{align*}
        \inprod{ v_{I} }{ \sum_{b} v_{N_{by}} } &= \inprod{ \sum_{a} w_{I^{a0}} \otimes \ket{a} }{ \sum_{b} \sum_{a^{\prime}} w_{ N^{a^{\prime}0}_{by} } \otimes \ket{a^{\prime}} } \\
            &= \sum_{a} \inprod{ w_{I^{a0}} }{ \sum_{b} w_{ N_{by}^{a0} } } \\
            &= \sum_{a} \inprod{ w_{I^{a0}} }{ w_{I^{a0}} } \\
            &= \inprod{ v_{I} }{ v_{I} }.
    \end{align*}
    Then against $M_{ax}$,
    \begin{align*}
        \inprod{ v_{ M_{ax}} }{ \sum_{b} v_{N_{by}} } &= \inprod{ U_{x}\brac{ w_{I^{ax}} \otimes \ket{a} }  }{ \sum_{b} \sum_{a^{\prime}} w_{ N^{a^{\prime}0}_{by} } \otimes \ket{a^{\prime}} } \\
            &= \inprod{ w_{I^{ax}} \otimes \ket{a}  }{ \sum_{b} U_{x}^{\ast} \brac{ \sum_{a^{\prime}} w_{ N^{a^{\prime}0}_{by} } \otimes \ket{a^{\prime}} } } \\
            &= \inprod{ w_{I^{ax}} \otimes \ket{a} }{ \sum_{b} \sum_{a^{\prime}} w_{ N^{a^{\prime}x}_{by} } \otimes \ket{a^{\prime}} } \\
            &= \inprod{ w_{I^{xa}} }{ \sum_{b} w_{ N^{ax}_{by} } } \\
            &= \inprod{ w_{I^{ax}} }{ w_{I^{ax}} } \\
            &= \inprod{ v_{M_{ax}} }{ v_{I} },
    \end{align*}
    and finally against $N_{b^{\prime}}y^{\prime}$
    \begin{align*}
        \inprod{ v_{N_{b^{\prime}}y^{\prime}} }{ \sum_{b} v_{N_{by}} } &= \inprod{ \sum_{a} w_{N_{b^{\prime} y^{\prime}}^{a0}} \otimes \ket{a} }{ \sum_{b} \sum_{a^{\prime}} w_{ N^{a^{\prime}0}_{by} } \otimes \ket{a^{\prime}} } \\
            &= \sum_{a} \inprod{ w_{N^{a0}_{b^{\prime} y^{\prime}} } }{ \sum_{b} w_{ N_{by}^{a0} } } \\
            &= \sum_{a} \inprod{ w_{N^{a0}_{b^{\prime} y^{\prime}} } }{ w_{I^{a0}} } \\
            &= \inprod{ v_{N_{ b^{\prime} y^{\prime} } } }{ v_{I} }.
    \end{align*}
    
    We then similarly check the constraint $\sum_{a} M_{ax} = I$ against all words in $\mathbf{b}^{1}$. We compute this below for posterity. Against $I$, 
    \begin{align*}
        \inprod{ v_{I} }{ \sum_{a} v_{M_{ax}} } &= \inprod{ \sum_{a^{\prime}} w_{I^{a^{\prime}0}} \otimes \ket{a^{\prime}} }{ \sum_{a} U_{x}\brac{ w_{I^{ax}} \otimes \ket{a} } } \\
            &= \inprod{ \sum_{a^{\prime}} w_{I^{a^{\prime}0}} \otimes \ket{a^{\prime}} }{ \sum_{a} w_{I^{a0}} \otimes \ket{a} } \\
            &= \sum_{a} \inprod{ w_{I^{a0}} }{ w_{I^{a0}} } \\
            &= \inprod{ v_{I} }{ v_{I} }.
    \end{align*}
    Then against $M_{a^{\prime}x^{\prime}}$,
    \begin{align*}
        \inprod{ v_{M_{a^{\prime}x^{\prime} }} }{ \sum_{a} v_{M_{ax}} } &= \inprod{ U_{x^{\prime}}\brac{ w_{I^{a^{\prime}x^{\prime}} } \otimes \ket{a^{\prime}} } }{ \sum_{a} U_{x}\brac{ w_{I^{ax}} \otimes \ket{a} }  } \\
            &= \inprod{ U_{x^{\prime}}\brac{ w_{I^{a^{\prime}x^{\prime}} } \otimes \ket{a^{\prime}} } }{ \sum_{a}  w_{I^{a0}} \otimes \ket{a} } \\
            &= \inprod{  w_{I^{a^{\prime}x^{\prime}} } \otimes \ket{a^{\prime}}  }{ U_{x^{\prime}}^{\ast}\brac{ \sum_{a}  w_{I^{a0}} \otimes \ket{a} } } \\
            &= \inprod{ w_{I^{a^{\prime}x^{\prime}} } \otimes \ket{a^{\prime}} }{ \sum_{a}  w_{I^{ax^{\prime}}} \otimes \ket{a} } \\
            &= \inprod{ w_{I^{a^{\prime}x^{\prime}}} }{ w_{I^{a^{\prime}x^{\prime}}} } \\
            &= \inprod{ v_{M_{a^{\prime}x^{\prime}}} }{ v_{I} },
    \end{align*}
    and against $N_{by}$, 
    \[ \inprod{ v_{N_{by}} }{ \sum_{a} v_{M_{ax}} } = \inprod{ \sum_{a^{\prime}} w_{N_{by}^{a^{\prime}0}} \otimes \ket{a^{\prime}} }{ \sum_{a} w_{I^{a0}} \otimes \ket{a} } = \inprod{v_{N_{by}}}{ v_{I} }. \]
    Hence, $\Lambda$ is a moment matrix for the original NPA hierarchy with the same value as $\Gamma$.
\end{proof}

\printbibliography

\appendix

\section{Examples of nice SoS decompositions}

\subsection{Nice sum-of-squares decomposition of \texorpdfstring{$\Bthree$}{B3}} \label{appsec:niceB3}
In this section, we give a nice SoS decomposition for the $\Bthree$ game which is the $3$-answer generalization of the CHSH game \cite{1911.01593}. The $n$-answer generalization of the CHSH game is a linear constraint satisfaction for the equations
\bearr 
x_0 x_1 &= 1,
x_0 x_1 &= \omega_n,
\eearr 
where $\omega_n = e^{2\pi i/n}$ is the $n$th root of unity. The winning conditions of the $n$-answer CHSH game are 
\bearr 
x=0, y=0 &\Rightarrow a = b,\\
x=1, y=0 &\Rightarrow a = b,\\
x=0, y=1 &\Rightarrow ab = 1,\\
x=1, y=1 &\Rightarrow ab = \omega_n.
\eearr
The $\Bthree$ game is a $3$-answer version from this family of games. It has the following game polynomial,
\be P = A_0 B_0^2 + A_0^2 B_0 + A_0 B_1 + A_0^2 B_1^2 + A_1 B_0^2 + A_1^2 B_0 + \omega^2 A_1 B_1 + \omega A_1^2 B_1^2. \ee
In this paper, we will look at the symmetrized version of the game polynomial with the transformations $B_0^2 \to B_0, \omega^2 \to \omega$ to get the following,
\begin{equation}
    P_{\game{B}_3} = A_0 B_0 + A_0^2 B_0^2 + A_0 B_1 + A_0^2 B_1^2 + A_1 B_0 + A_1^2 B_0^2 + \omega A_1 B_1 + \omega^2 A_1^2 B_1^2,
\end{equation}
where $\omega = \frac{-1 + i\sqrt{3}}{2}$ and $A_0, A_1, B_0, B_1$ are Alice and Bob unitary operators which satisfy the following relations,
\bearr \label{eq:bthree-relations}
A_x^3 = 1 &\implies A_x^\dagger = A_x^2,\\
B_y^3 = 1 &\implies B_y^\dagger = B_y^2,\\
A_x B_y - B_y A_x &= \sbrac{A_x, B_y} = 0.
\eearr
The game polynomial of $\Bthree$ has an optimal quantum value of $6$.
With the relations \eqref{eq:bthree-relations} in mind, we can write the following SoS decomposition for the $\Bthree$ game,
\be 6 - P_{\game{B}_3} = \sum_{i=1}^7 \lambda_i S_i^\dagger S_i, \ee
where the $\lambda_i$'s and $S_i$'s are given by
\bearr 
\lambda_1 = \frac{5}{1872} &,\; S_1 =&& 12 A_0 + \brac{\frac{1}{4} - \omega} B_1 B_0 + \brac{\frac{1}{4} - \omega^2} B_0 B_1 + \brac{\frac{13 \omega}{4} - 7} B_0^2 + \brac{\frac{13 \omega^2}{4} - 7} B_1^2,\\
\lambda_2 = \frac{5}{1872} &,\; S_2 =&& 12 \omega^2 A_1 + \brac{\frac{1}{4} - \omega^2} B_1 B_0 + \brac{\frac{1}{4} - \omega} B_0 B_1 + \brac{\frac{13 \omega}{4} - 7 \omega^2} B_0^2 + \brac{\frac{13 \omega^2}{4} - 7 \omega} B_1^2,\\
\lambda_3 = \frac{5}{4992} &,\; S_3 =&& B_0^2 + \omega B_1^2 + \omega^2 B_0 B_1 + \omega^2 B_1 B_0,\\
\lambda_4 = \frac{1}{11856} &, \; S_4 =&& 114 + (5\omega^2 - 48)A_0B_0 + (5\omega^2 - 23)A_0^2 B_0^2 + (5\omega - 48)A_0 B_1 + (5\omega - 23)A_0^2 B_1^2, \\
\lambda_5 = \frac{259}{1976} &, \; S_5 =&& A_0 B_0 - A_0^2 B_1^2 + \brac{\frac{5 \omega - 3}{7}} (A_0^2 B_0^2 - A_0 B_1),\\
\lambda_6 = \frac{1}{11856} &, \; S_6 =&& 114 + (5 \omega - 48) A_1 B_0 + (5\omega - 23) A_1^2B_0^2 + (5 \omega^2 - 48) (\omega A_1 B_1) + (5 \omega^2 - 23) (\omega^2 A_1^2 B_1^2),\\
\lambda_7 = \frac{259}{1976} &, \; S_7 =&& A_1 B_0 - \omega^2 A_1^2 B_1^2 + \brac{\frac{5 \omega^2 - 3}{7}} (A_1^2 B_0^2 - \omega A_1 B_1).
\eearr 
Note that each $S_i$ term only contains either $A_0$ or $A_1$, which means that the above SoS decomposition is nice as described in our paper. This is an example of a nonlocal NPA level-2 game with non-binary answers for which we can construct a nice SoS decomposition. This gives us hope that the {\niceness} framework is more general than the results in this paper.

\subsection{Nice sum-of-squares decomposition for bipartite matching game}
We will conclude the paper with a direct application of our Theorem~\ref{thm:lvl1-nicelvl1}. Let's take an example of the bipartite matching game $\game{M}$ \cite{matching_game}. In the game setup, the question space is ternary with $\alpbt{X} = \alpbt{Y} = \cbrac{1, 2, 3}$ and the answer space is binary with $\alpbt{A} = \alpbt{B} = \cbrac{0, 1}$. Alice and Bob win the game under the conditions
\bearr 
x = y &\implies a = b,\\
x \neq y &\implies a \neq b.
\eearr
The game polynomial for the bipartite matching game can be written as
\be \gp{M} = A_1 (B_1 - B_2 - B_3) + A_2 (B_2 - B_1 - B_3) + A_3 (B_3 - B_1 - B_2). \ee
This game follows the identities:
\bearr
{A_i}^\dagger = A_i &,\quad {B_i}^\dagger = B_i, \quad \text{Hermiticity}\\
A_i B_j - B_j A_i &= 0, \quad \text{Commutativity}\\
A_i^2 = B_i^2 &= I, \quad \text{Binary operators}
\eearr
The optimal quantum value of the game polynomial $\gp{M}$ is $6$, and \cite{matching_game} present a degree-$1$ SoS certificate for $6 - \gp{M}$. We apply results from Theorem~\ref{thm:lvl1-nicelvl1} to construct a degree-$1$ nice SoS certificate for $6 - \gp{M}$.
\be 6 - \gp{M} = \sum_{i=1}^3 T_i^\dagger T_i, \ee
where the $T_i$'s are given as follows:
\begin{eqnarray}
    T_1 &=& A_1 - \frac{B_1 - B_2 - B_3}{2}\\
    T_2 &=& A_2 - \frac{B_2 - B_1 - B_3}{2}\\
    T_3 &=& A_3 - \frac{B_3 - B_1 - B_2}{2}\\
    T_4 &=& \frac{B_1 + B_2 + B_3}{2}.
\end{eqnarray}
Here, each $T_i$ term only contains either one of the $A_i$, which means that the above SoS decomposition is nice as described in our paper. This is an example of an NPA level-1 game with non-binary questions and binary answers for which we can construct a nice SoS decomposition.

\end{document}